%% file: main.tex
\documentclass[11pt,a4paper]{article}

\usepackage[utf8]{inputenc}
\usepackage[T1]{fontenc}
\usepackage{amsmath,amssymb,amsthm}
\usepackage{mathtools}
\usepackage{braket}
\usepackage[margin=1in]{geometry}
\usepackage{hyperref}
\usepackage{aliascnt}
\usepackage{cleveref}
\usepackage{enumitem}
\usepackage{booktabs}
\usepackage{algorithm}
\usepackage{algpseudocode}

\newtheorem{theorem}{Theorem}[section]

\newaliascnt{lemma}{theorem}
\newtheorem{lemma}[lemma]{Lemma}
\aliascntresetthe{lemma}
\crefname{lemma}{Lemma}{Lemmas}
\Crefname{lemma}{Lemma}{Lemmas}

\newaliascnt{corollary}{theorem}
\newtheorem{corollary}[corollary]{Corollary}
\aliascntresetthe{corollary}
\crefname{corollary}{Corollary}{Corollaries}
\Crefname{corollary}{Corollary}{Corollaries}

\theoremstyle{definition}
\newaliascnt{definition}{theorem}
\newtheorem{definition}[definition]{Definition}
\aliascntresetthe{definition}
\crefname{definition}{Definition}{Definitions}
\Crefname{definition}{Definition}{Definitions}

\theoremstyle{plain}
\newaliascnt{proposition}{theorem}
\newtheorem{proposition}[proposition]{Proposition}
\aliascntresetthe{proposition}
\crefname{proposition}{Proposition}{Propositions}
\Crefname{proposition}{Proposition}{Propositions}

\newaliascnt{conjecture}{theorem}

\aliascntresetthe{conjecture}
\crefname{conjecture}{Conjecture}{Conjectures}
\Crefname{conjecture}{Conjecture}{Conjectures}

\theoremstyle{definition}
\newaliascnt{assumption}{theorem}
\newtheorem{assumption}[assumption]{Assumption}
\aliascntresetthe{assumption}
\crefname{assumption}{Assumption}{Assumptions}
\Crefname{assumption}{Assumption}{Assumptions}

\theoremstyle{remark}
\newaliascnt{remark}{theorem}

\aliascntresetthe{remark}
\crefname{remark}{Remark}{Remarks}
\Crefname{remark}{Remark}{Remarks}

\theoremstyle{plain}
\newtheorem*{informaltheoremhead}{Informal Theorem}
\newenvironment{informaltheorem}[1][]{\begin{informaltheoremhead}[#1]\leavevmode}{\end{informaltheoremhead}}

\newcommand{\poly}{\mathrm{poly}}
\newcommand{\negl}{\mathrm{negl}}
\newcommand{\BQP}{\mathsf{BQP}}
\newcommand{\NP}{\mathsf{NP}}

\newcommand{\Tr}{\mathrm{Tr}}
\newcommand{\calA}{\mathcal{A}}
\newcommand{\calC}{\mathcal{C}}
\newcommand{\calF}{\mathcal{F}}
\newcommand{\calG}{\mathcal{G}}

\newcommand{\calR}{\mathcal{R}}
\newcommand{\calB}{\mathcal{B}}
\newcommand{\Meas}{\mathsf{M}}
\newcommand{\Mint}{\mathsf{Mint}}
\newcommand{\Ver}{\mathsf{Ver}}
\newcommand{\accept}{\mathsf{acc}}
\newcommand{\reject}{\mathsf{rej}}
\newcommand{\openone}{\mathbb{I}}
\newcommand{\Ima}{\mathrm{Im}}
\newcommand{\openidws}{\mathbb{I}_{\mathrm{anc}}}

\title{Path-Finding, Orbit State Preparation, and the Security of Invariant Quantum Money}
\author{
  \begin{tabular}{cc}
    Hans Schmiedel & Jiangshan Yu \\
    \small\textit{The University of Sydney} & \small\textit{The University of Sydney} \\
    \small\textit{hans.mailbox@tutamail.com} & \small\textit{jiangshan.yu@sydney.edu.au}
  \end{tabular} 
  }
\date{October 2, 2026}

\providecommand{\Sym}{\mathrm{Sym}}
\providecommand{\Prep}{\mathsf{Prep}}
\providecommand{\aw}{\mathrm{aw}}
\providecommand{\win}{\mathrm{win}}
\providecommand{\adv}{\mathrm{adv}}
\providecommand{\cnt}{\mathrm{cnt}}
\providecommand{\Syn}{\mathsf{Syn}}
\providecommand{\Sec}{\mathsf{Sec}}
\providecommand{\plusS}{\ket{+_{2|S|}}}
\begin{document}
\maketitle
\input{abstract}

\input{introduction}

\input{body}

\input{conclusion}
\input{acknowledgements}

\input{appendix}
\end{document}

%% file: abstract.tex
\begin{abstract}
The security of quantum money from knots, and of its generalization to invariant money, is based on the assumption that \emph{path-finding}, exhibiting a sequence of moves between two equivalent objects, is hard.  No proof of security from that assumption alone is known.  The existing proofs add knowledge-of-path assumptions, which assert that any efficient algorithm producing two objects with the same invariant implicitly knows a path between them.  No attack can refute such an assumption, and it is not known to follow from security.

We ask when path-finding is the right assumption.  When each equivalence class is the orbit of an efficiently computable action of a group that can be superposed over, and every move acts as a group element, as for graphs, average-case hardness of path-finding is necessary for security.  For knots no such group is known, and a path-finder only reduces forgery to an equally hard state-preparation problem.

With or without a path-finder, a forger must prepare a state that verification accepts, and we take the hardness of that task as the assumption.  For schemes whose verification walk mixes in polynomial time, the \emph{preparation assumption} states that no efficient algorithm, given the serial number of a freshly minted banknote and one object measured from it, prepares such a state.  It is falsifiable, and it is equivalent to security against forgers that measure their banknote first.  The \emph{transfer assumption}, which security implies, states that measuring first costs a forger at most a polynomial factor.  Together the two are equivalent to security, so every proof of security must establish the preparation assumption.  If the preparation assumption holds, no fully black-box reduction that calls the forger only at the serial number it is given can derive the transfer assumption from the preparation assumption.
\end{abstract}

%% file: introduction.tex
\section{Introduction}\label{sec:intro}

Quantum money~\cite{Wiesner83} is \emph{public-key} when anyone can verify a banknote, a quantum state labelled by a classical \emph{serial number}, with a public algorithm, but no one can produce two banknotes from one~\cite{Aaronson09}.  Proposed schemes and their assumptions have repeatedly been broken~\cite{Lutomirski10,CPFP15,CPDDF19,Roberts21,BDG23,LMZ22}.  Zhandry proves the scheme of Aaronson and Christiano~\cite{AC12} secure from post-quantum indistinguishability obfuscation~\cite{Zhandry21}, and Shmueli and Zhandry obtain \emph{quantum lightning}~\cite{Zhandry21}, in which the adversary also chooses the serial number, from subexponentially secure obfuscation and learning with errors~\cite{SZ25}.  Bostanci, Nehoran and Zhandry prove quantum lightning from non-abelian group actions in the plain model, under a new assumption on the group action~\cite{BNZ25}.  The security of the remaining candidates is conjectured~\cite{FGHLS10,Kane18,KSS21}, or proved in an idealized model or from knowledge assumptions, which no attack can refute~\cite{LMZ22,Zhandry24,MS24}.  One family of candidates began with quantum money from knots~\cite{FGHLS10}, which Liu, Montgomery and Zhandry abstracted into a framework for quantum money from invariants~\cite[\S8]{LMZ22}.  We ask what the security of that framework rests on.

\paragraph{Invariant quantum money.}
A scheme in this framework has three ingredients:
\begin{enumerate}[leftmargin=*,nosep]
\item a set $X$ of combinatorial objects, each described by $n$ bits;
\item a set of \emph{moves}, efficiently computable permutations of $X$;
\item an efficiently computable classical \emph{invariant} $I$ that every move preserves.
\end{enumerate}
The objects reachable from one another by moves form an \emph{orbit}, on which $I$ is constant, and the uniform superposition over an orbit is its \emph{orbit state}.  In the knot scheme of Farhi et al., once diagrams are weighted by their dimension (\Cref{sec:framework}), the objects are grid diagrams, the moves are grid moves, and the invariant is the Alexander polynomial.  The scheme works as follows.
\begin{itemize}[leftmargin=*,nosep]
\item \emph{Minting.}  The mint prepares the uniform superposition over $X$ and measures $I$.  The outcome is the \emph{serial number}~$p$, and the residual state, the uniform superposition over the objects with invariant $p$, is the \emph{banknote}.
\item \emph{Verification.}  The verifier checks that the invariant is $p$, then applies random moves in superposition and checks that the state is left unchanged.  A genuine banknote always is, because the moves only permute the objects with invariant $p$.
\item \emph{Forgery.}  A forger given one banknote wins if it produces two states that both pass verification at $p$.
\end{itemize}
No-cloning does not prevent forgery, because the banknote is determined by public data.  What prevents it, if anything, is a computational problem: preparing, from what the forger holds, a state spread coherently over an orbit.

\paragraph{Prior analyses.}
Farhi et al.\ locate the difficulty of that problem in the move graph.  They base the security of the knot scheme on the assumption that \emph{path-finding}, exhibiting a sequence of moves between two objects of one orbit, is hard~\cite[\S1]{FGHLS10}.  The guess is natural, since the moves are what define an orbit.  Farhi et al.\ support it with a scheme on graphs, where path-finding is easy in practice and which they break~\cite[\S3.4]{FGHLS10}.  No proof of security from hardness of path-finding alone is known.  Liu, Montgomery and Zhandry prove that schemes in the invariant framework are quantum lightning, and hence secure, by adding \emph{knowledge of path}~\cite[Thm.~12]{LMZ22}.  Knowledge of path states that an extractor, given the final state of an efficient algorithm that outputs two objects with the same invariant, finds a path between them~\cite[Asm.~2]{LMZ22}, with a variant for invariants that can be inverted~\cite[Asm.~3]{LMZ22}.  It is not falsifiable in the sense of Naor~\cite{Naor03}, because refuting it requires showing that no extractor exists.  Zhandry shows that its direct analogue for group actions is false~\cite[\S1.3]{Zhandry24}, and Montgomery and Sharif prove the security of their class-group scheme from an isogeny assumption together with a knowledge assumption of the same kind~\cite[Thm.~9.11]{MS24}.  Two questions follow:
\begin{enumerate}
    \item when is hardness of path-finding the right assumption?
    \item what can take the place of knowledge of path?
\end{enumerate}

\subsection*{Results}

\paragraph{Standing assumption and design conditions.} (\Cref{sec:verification})
We assume throughout that the random walk that verification applies mixes within each orbit in polynomial time, and say that the scheme \emph{mixes rapidly} (\Cref{def:rapid-mixing}).  Without this, a state can pass verification while being orthogonal to every banknote.  Two further properties are necessary for security, because a forger wins when either fails (\Cref{prop:design-necessary}).  No serial number may be issued with non-negligible probability, and a minted banknote must lie on an orbit larger than any fixed polynomial, except with negligible probability.  We show that the three conditions together still do not imply security: a simple scheme whose orbits are translates of a linear subspace meets all three, yet its orbit states are easy to prepare (\Cref{prop:design-insufficient}).

\paragraph{When path-finding decides security.} (\Cref{sec:where-not})
The attack of Farhi et al.\ on the graph scheme uses an algorithm that finds graph isomorphisms, which for graphs is a path-finder (\Cref{sec:coset}).  Their attack works because the relabellings of the vertices form a group that is easy to superpose over, acts on each orbit, and contains every move.  We show that the same attack works for any scheme whose orbits come with such a group, which we call a \emph{coset parameterization acting through the moves} (\Cref{def:index-set}), so that for such schemes hardness of path-finding is necessary for security.
\begin{informaltheorem}[Necessity of path-finding hardness under a coset parameterization]
\begin{enumerate}[label=\textup{(\roman*)},leftmargin=*,nosep]
\item If the orbits come with an efficiently computable coset parameterization acting through the moves, then in a secure scheme path-finding is hard on average.  No efficient algorithm, given the object measured from a banknote and a random object of the same orbit, can find a path between them.
\item Without such a group, a path-finder does not help.  Even a deterministic path-finder that always succeeds only turns preparing the orbit state into an equally hard problem: preparing the uniform superposition over the paths it finds.
\end{enumerate}
\end{informaltheorem}
Montgomery and Sharif note that solving the group-action discrete logarithm problem would let an adversary forge their class-group scheme~\cite[\S2.2]{MS24}.  \Cref{prop:ms-forgery} makes this explicit: the best known algorithms for that problem~\cite{CJS14} give a quantum subexponential forgery, assuming the generalized Riemann hypothesis.  For knot orbits no suitable group is known, so part~(ii) is the one that applies.

\paragraph{From path-finding to state preparation.} (\Cref{sec:osp-problem,sec:reduction})
Hardness of path-finding is thus necessary for security when a group supplies the rest of the forgery.  Without a group it is not known to be necessary, and in neither case is it known to suffice, which is why the proofs above add knowledge of path.  What every forger must do, with or without a path-finder, is prepare a state that verification accepts, and we take the hardness of that task as the assumption.  A forger that measures its banknote holds the serial number $p$ and one object $x$, the \emph{seed}, and must prepare such a state from them.  When the invariant takes different values on different orbits, this is \emph{orbit state preparation}: given $x$, prepare the orbit state of the orbit of $x$ (\Cref{def:superposition}).  Farhi et al.\ describe this attack informally~\cite[\S1.2, \S3.3]{FGHLS10}, and Lutomirski studies the problem, as \textsc{Component Superposition}, in a black-box model of the moves~\cite{Lutomirski11}.  The \emph{preparation assumption} (\Cref{ass:synth}) says that the forger cannot: given the serial number $p$ and seed $x$ of a freshly minted banknote, no efficient algorithm can output a state that verification accepts at $p$.  The \emph{transfer assumption} (\Cref{ass:transfer}) says that measuring first is not what stopped it: measuring the banknote first costs a forger at most a polynomial factor in its success probability.
\begin{informaltheorem}[Security decomposition of invariant quantum money]
\begin{enumerate}[label=\textup{(\alph*)},leftmargin=*,nosep]
\item Every secure scheme satisfies the preparation assumption, and the assumption holds exactly when forgers that first measure their banknote cannot succeed.
\item A scheme is secure exactly when both the preparation assumption and the transfer assumption hold.
\end{enumerate}
\end{informaltheorem}
Both assumptions are implied by security, which is not known for knowledge assumptions.  Part~(b) therefore decomposes security rather than deriving it from weaker assumptions, and every proof of security must establish the preparation assumption.  The problem behind it has a classical input and a single state as output, so it can be studied like other computational problems.  It implies $\NP\not\subseteq\BQP$, since an $\NP$ oracle solves orbit state preparation (\Cref{prop:np-forges}).  It is also falsifiable in the sense of Naor (\Cref{prop:falsifiable}). 

\paragraph{A barrier to deriving the transfer assumption.} (\Cref{sec:transfer-necessity})
Bostanci, Nehoran and Zhandry prove an equivalence of this kind with no transfer assumption for a lightning scheme built from one-way homomorphisms, where security coincides with hardness of preparing one canonical state~\cite{BNZ25}.  That scheme has a group structure that invariant money in general lacks.

One would therefore still like to derive the transfer assumption from the preparation assumption, so that security rests on a single falsifiable assumption. We show that a natural class of reductions cannot supply the missing derivation.  If the preparation assumption holds, no fully black-box reduction that calls the forger only at the serial number it is given can derive the transfer assumption, even if it runs the forger in superposition and rewinds it (\Cref{thm:transfer-barrier}).  The intuition is that a forger can refuse to act unless it is handed something close to a genuine banknote, and under the preparation assumption such a reduction cannot produce one.

\paragraph{Quantum lightning.} (\Cref{sec:lightning})
Lightning implies unforgeability: a reduction can run the mint itself and hand the banknote to the forger.  That reduction calls the forger at a serial number it chose, so it lies outside the class above.  Liu, Montgomery and Zhandry obtain unforgeability this way~\cite[Thm.~12]{LMZ22}, with no counterpart of the transfer assumption.  Lightning itself follows from the \emph{adversarial-instance assumption}, which requires preparation to be hard at every serial number that an efficient algorithm can choose (\Cref{ass:adversarial-synthesis,prop:lightning}).

%% file: body.tex
\section{Preliminaries and security model}\label{sec:prelim}

This section fixes notation, recalls the framework of invariant money of Liu, Montgomery and Zhandry~\cite[\S8]{LMZ22}, and states the two security games.

\paragraph{Notation.}
A function $\mu$ is \emph{negligible}, written $\mu(n) = \negl(n)$, if $\mu(n) < 1/q(n)$ for every polynomial $q$ and all large $n$.  We write $\poly(n)$ for an unspecified polynomial.
For vectors, $\|\cdot\|$ is the Euclidean norm; for operators, $\|\cdot\|$ is the operator norm and $\|\cdot\|_{\mathrm{tr}}$ the trace norm, the \emph{trace distance} between states $\rho$ and $\rho'$ is $\tfrac12\|\rho-\rho'\|_{\mathrm{tr}}$, and $A \succeq B$ means that $A - B$ is positive semidefinite.
For a finite set $F$ whose elements are named by basis states, $\ket{F} := |F|^{-1/2}\sum_{a \in F}\ket{a}$ is the uniform superposition over $F$.
A pure state on two registers is a \emph{product state}, and the registers are \emph{unentangled}, if it equals $\ket{\alpha}\otimes\ket{\beta}$ for some $\ket\alpha,\ket\beta$, equivalently if its reduced state on either register is pure.

\subsection{Invariant quantum money}\label{sec:framework}

We define the schemes, then minting and verification, and then state how two schemes in the literature fit the definition.

\begin{definition}[Invariant money scheme]\label{def:scheme}
An \emph{invariant money scheme} is a family $(X,S,I) = (X_n,S_n,I_n)_{n\geq1}$, together with a polynomial round count $r = r(n) \geq 1$ of verification (\Cref{alg:ver}), with the following three components.
\begin{enumerate}[label=\textup{(\roman*)},leftmargin=*,nosep]
\item The \emph{ground set} $X \subseteq \{0,1\}^n$ consists of descriptions of combinatorial objects, which we identify with the elements of $X$, and the uniform superposition $\ket X$ is prepared by a uniform family of polynomial-size circuits.
\item The \emph{moves} $S$ are permutations of $X$ with $S = S^{-1}$, indexed as $s_1,\dots,s_{|S|}$ so that $s_i(x)$ and $s_i^{-1}(x)$ are computable in polynomial time from $(i,x)$.
\item The \emph{invariant} $I$ is a function on $X$, computable in polynomial time, with $I(s(x)) = I(x)$ for every $s \in S$ and $x \in X$.
\end{enumerate}
\end{definition}

\paragraph{Words, orbits and level sets.}
A \emph{word} is a finite sequence $w = s_\ell\cdots s_1$ of moves, acting on $X$ by $w\cdot x := s_\ell(\cdots s_1(x)\cdots)$.  We write $S^{\leq L}$ for the set of words of length at most $L$.
The \emph{orbit} of $x$ is $\{w\cdot x : w \text{ a word}\}$, and $x \sim y$ means that $y$ lies in the orbit of $x$.
Orbits partition $X$ and are the connected components of the \emph{move graph}, whose vertices are the elements of $X$ and whose edges join $x$ to $s(x)$ for $s \in S$.
Orbits are indexed by $k$ and written $\calC_k$, and $k(x)$ is the index of the orbit containing $x$.
By requirement~(iii), $I$ is constant on each orbit.
The values of $I$ are the \emph{serial numbers}, and
\[
    P_p \;:=\; \{x \in X : I(x) = p\}
\]
is the \emph{level set} of $p$, a union of orbits.  Distinct orbits may share a serial number.
We write $\{\ket x\}_{x\in X}$ for the corresponding orthonormal basis of $\mathbb{C}^X$, and $\mathsf X$ for the \emph{element register}, which holds a state of $\mathbb{C}^X$: the mint outputs it and the verifier tests it.

\paragraph{Minting.}
$\Mint(1^n)$ prepares the uniform superposition over $X$ with the invariant computed in an adjacent register,
\begin{equation}
    \ket{\Phi} \;=\; \frac{1}{\sqrt{|X|}} \sum_{x \in X} \ket{x}\ket{I(x)}\,,
\end{equation}
measures the second register, and outputs the outcome $p$ together with the residual state
\begin{equation}\label{eq:money-generic}
    \ket{\$_p} \;=\; \frac{1}{\sqrt{|P_p|}} \sum_{x \in P_p} \ket{x}\,.
\end{equation}
The serial number $p$ is published, and $\ket{\$_p}$ is the \emph{banknote}.

\paragraph{Verification.}
A banknote is left unchanged when a move drawn at random is applied to it in superposition, because every move permutes $P_p$.  A state concentrated on one element of $P_p$ is moved elsewhere, and the verifier tests for that difference.  It uses a \emph{move register} $\mathsf A$ with $2|S|$ basis states $\ket 1,\dots,\ket{2|S|}$.  The indices $i \leq |S|$ name the moves $s_i$, and the other $|S|$ indices name the identity, so that half the draws do nothing.  Write
\[
    \plusS \;:=\; \frac{1}{\sqrt{2|S|}}\sum_{i=1}^{2|S|}\ket i\,, \qquad
    U \;:=\; \sum_{i \leq |S|} \ket{i}\!\bra{i} \otimes \hat P_{s_i} \;+\; \sum_{i > |S|} \ket{i}\!\bra{i} \otimes \openone
\]
for the uniform superposition over indices and for the unitary that applies the indexed move, where $\hat P_s$ is the permutation matrix of $s$ on $\mathbb C^X$.  \Cref{alg:ver} gives $\Ver(p,\rho)$, which takes a serial number $p$ and a state $\rho$ on $\mathsf X$, returns $\accept$ or $\reject$, and on acceptance also returns $\mathsf X$.  It is the verification procedure of Farhi et al.~\cite[Alg.~2]{FGHLS10}, with the identity indices that Liu, Montgomery and Zhandry adjoin in their analysis of it~\cite[\S8.2]{LMZ22}.

\begin{algorithm}[!ht]
\caption{$\Ver(p,\rho)$, for a scheme with round count $r$}\label{alg:ver}
\begin{algorithmic}[1]
\State \emph{Serial check.}  Compute $I$ coherently from $\mathsf X$ into a fresh register and measure it.  If the outcome is not $p$, \textbf{reject}.
\State Prepare the move register $\mathsf A$ in $\plusS$.
\For{$t = 1,\dots,r$} \Comment{\emph{mixing test}}
    \State Apply $U$ to $\mathsf A\mathsf X$.
    \State Measure $\mathsf A$ with $\{\ket{+_{2|S|}}\!\bra{+_{2|S|}},\ \openone - \ket{+_{2|S|}}\!\bra{+_{2|S|}}\}$.  On the second outcome, \textbf{reject}.
\EndFor
\State \textbf{Accept}, and return $\mathsf X$.
\end{algorithmic}
\end{algorithm}

On acceptance, the serial check applies to $\mathsf X$ the projector $\Pi_{P_p}$ onto $\mathrm{span}\{\ket{x} : x \in P_p\}$.  One round of the mixing test acts on a pure state $\ket\phi$ of $\mathsf X$ as follows.  Applying $U$, and then projecting $\mathsf A$ onto $\plusS$, gives
\begin{align*}
    U\bigl(\plusS\ket\phi\bigr) &\;=\; \frac{1}{\sqrt{2|S|}}\Bigl(\sum_{i\leq|S|}\ket i\,\hat P_{s_i}\ket\phi \;+\; \sum_{i>|S|}\ket i\,\ket\phi\Bigr)\,,\\
    \bigl(\ket{+_{2|S|}}\!\bra{+_{2|S|}}\otimes\openone\bigr)\,U\bigl(\plusS\ket\phi\bigr) &\;=\; \plusS\otimes\hat B\ket\phi\,,
\end{align*}
since $\braket{+_{2|S|}}{i} = (2|S|)^{-1/2}$ for every index, where
\begin{equation}\label{eq:markov-generic}
    \hat{B} \;:=\; \frac{1}{2}\Bigl(\openone \;+\; \frac{1}{|S|} \sum_{s \in S} \hat{P}_s\Bigr)
\end{equation}
is the \emph{lazy} walk on the move graph: with probability $\tfrac12$ it applies a move drawn uniformly from $S$, and otherwise it leaves the element unchanged.  A round therefore accepts with probability $\|\hat B\ket\phi\|^2$, and on acceptance it returns $\mathsf A$ to $\plusS$, unentangled from $\mathsf X$, so one move register serves all $r$ rounds.  Since $\hat P_s^{T} = \hat P_{s^{-1}}$ and $s\mapsto s^{-1}$ permutes $S = S^{-1}$, $\hat B^{T} = \tfrac12\bigl(\openone + |S|^{-1}\sum_{s\in S}\hat P_{s^{-1}}\bigr) = \hat B$, so $\hat B$ is real symmetric.  The accepting branch of $\Ver(p,\cdot)$ has Kraus operator $\hat B^r\Pi_{P_p}$, and the acceptance probability is
\begin{equation}\label{eq:acc-def}
    \mathrm{Acc}_p(\rho) \;:=\; \Pr[\Ver(p,\rho) = \accept] \;=\; \Tr\bigl(\Pi_{P_p}\hat B^{2r}\Pi_{P_p}\,\rho\bigr)\,.
\end{equation}
Moves preserve orbits and $I$ is constant on orbits, so $\hat B$ commutes with $\Pi_{P_p}$.

Verification is correct and leaves banknotes undisturbed.  Each move permutes $P_p$, so $\ket{\$_p}$ is fixed by every $\hat P_s$ and hence by $\hat B$.  The banknote is supported on $P_p$, so it is also fixed by $\Pi_{P_p}$.  No step of $\Ver$ changes it, and $\mathrm{Acc}_p(\$_p) = 1$.

\begin{definition}[Measured mint output]\label{def:measured-mint}
The \emph{measured mint output} $\mathcal{M}_n$ is the distribution of pairs $(p, x)$ obtained by running $\Mint(1^n)$ to produce $(p, \ket{\$_p})$ and measuring the banknote in the computational basis.  Equivalently, $x$ is uniform on $X$ and $p = I(x)$.
\end{definition}

\begin{definition}[Path-finding]\label{def:pathfinding-generic}
Let $(X, S, I)$ be an invariant money scheme.  \emph{Path-finding} is the search problem: given $x, y \in X$, output a word $w$ with $w \cdot x = y$, a \emph{path} from $x$ to $y$, or $\bot$ if none exists.
\end{definition}

Farhi et al.\ base the security of the knot scheme on hardness of path-finding~\cite[\S1]{FGHLS10}.  Liu, Montgomery and Zhandry assume path-finding hard for $x$ chosen by the adversary and $y$ uniform on the orbit of $x$~\cite[Asm.~1]{LMZ22}.  \Cref{sec:where-not} relates hardness of path-finding to security.

\paragraph{Instances.}
Two schemes in the literature fit \Cref{def:scheme}; we give the intuition and refer to the original papers for the precise choices.  In the knot scheme of Farhi et al., the objects are grid diagrams, which represent knots, the moves are grid moves, and the invariant is the Alexander polynomial~\cite[\S3]{FGHLS10}.  To keep the set of diagrams finite, they bound the diagrams' dimension and weight each diagram according to its dimension, and with the ground set chosen to reflect that weighting the scheme fits \Cref{def:scheme}.  One consequence matters below: with the dimension bounded, two diagrams of the same knot need not be connected by moves, so one serial number can cover several orbits~\cite[\S3.2.1]{FGHLS10}.  In the scheme of Montgomery and Sharif, the objects are elliptic curves over a prime field, the moves are isogenies of small prime degree, and the invariant is the number of points~\cite[\S3.3, \S6]{MS24}.  There an ideal class group acts on each orbit, and each move acts as one of its elements.

\subsection{The security model}\label{sec:security-model}

Minting and verification are public, so a forger can always produce \emph{some} banknote.  What it must not produce is two states that pass verification at one serial number.  As Liu, Montgomery and Zhandry do~\cite[\S5.5]{LMZ22}, we consider mini-schemes, in which the adversary is given a single banknote~\cite{AC12}.  \Cref{def:security,def:lightning} restate their Definitions~5 and~6.  The two games differ in who produces the banknote.  In quantum lightning the adversary also chooses the serial number, so two copies must be impossible to produce even from a state the adversary constructs itself.

\begin{definition}[Quantum money unforgeability]\label{def:security}
The game between a challenger and an adversary $\calA$ runs as follows.
\begin{enumerate}[label=\textup{\arabic*.},leftmargin=*,nosep]
\item The challenger runs $\Mint(1^n)$, obtaining $(p, \ket{\$_p})$, and gives $p$ and the banknote, held in the \emph{challenge register}, to $\calA$.
\item $\calA$ outputs a possibly entangled state $\rho_{12}$ on two registers $\mathsf R_1,\mathsf R_2$, with reduced states $\rho_1$ and $\rho_2$.
\item The challenger runs $\Ver(p,\rho_1)$ and $\Ver(p,\rho_2)$.  $\calA$ wins if both accept.
\end{enumerate}
$\win_{\calA}(n)$ is the probability that $\calA$ wins.  The scheme is \emph{secure} if $\win_{\calA}(n) = \negl(n)$ for every quantum polynomial-time $\calA$.
\end{definition}

\begin{definition}[Quantum lightning unforgeability]\label{def:lightning}
The game runs as follows.
\begin{enumerate}[label=\textup{\arabic*.},leftmargin=*,nosep]
\item On input $1^n$, the adversary $\calA$ outputs a serial number $p$ and a possibly entangled state $\rho_{12}$ on two registers $\mathsf R_1,\mathsf R_2$, with reduced states $\rho_1$ and $\rho_2$.
\item The challenger runs $\Ver(p,\rho_1)$ and $\Ver(p,\rho_2)$.  $\calA$ wins if both accept.
\end{enumerate}
$\win^{\mathrm L}_{\calA}(n)$ is the probability that $\calA$ wins.  The scheme is \emph{quantum lightning} if $\win^{\mathrm L}_{\calA}(n) = \negl(n)$ for every quantum polynomial-time $\calA$.
\end{definition}

\section{What verification certifies, and design conditions}\label{sec:verification}

This section shows that the states that $\Ver$ accepts with certainty are spanned by the uniform superpositions over single orbits, as Liu, Montgomery and Zhandry observe~\cite[\S8]{LMZ22}, and makes this characterization quantitative in the spectral gap of $\hat B$ and the round count (\Cref{sec:spectral-analysis}).  \Cref{sec:design-conditions} then states three conditions on the scheme and shows what fails without each.  \Cref{sec:osp-problem} then defines the problem that a forger faces after measuring its banknote.

\subsection{Orbit states and accepted weight}\label{sec:spectral-analysis}

We first identify the states that $\hat B$ leaves fixed.  Moves preserve orbits, so $\hat B$ is block-diagonal, with one block $\hat B_k$ for each orbit $\calC_k$.  Each block has four properties.
\begin{itemize}[leftmargin=*,nosep]
\item It is real symmetric, because $\hat B$ is (\Cref{sec:framework}).
\item It is doubly stochastic, being an average of permutation matrices.
\item Its spectrum lies in $[0,1]$.  Laziness gives $\hat B_k = \tfrac12(\openone + M_k)$, where $M_k$ is an average of permutation matrices and so has $\|M_k\| \leq 1$ by the triangle inequality.
\item It is \emph{irreducible}: for all $i,j$ some power of $\hat B_k$ has a positive $(i,j)$ entry, because its orbit is connected in the move graph.
\end{itemize}
By the Perron--Frobenius theorem~\cite{HJ12}, the spectral radius of an irreducible non-negative matrix is a simple eigenvalue.

\begin{lemma}[Orbit states as fixed points of the walk]\label{lem:eigenspace}
For each orbit $\calC_k$, the eigenvalue $1$ of $\hat B_k$ is simple, and its eigenvector is the \emph{orbit state}
\begin{equation}
    \ket{\psi_k} \;:=\; \frac{1}{\sqrt{|\calC_k|}} \sum_{x \in \calC_k} \ket{x}\,.
\end{equation}
\end{lemma}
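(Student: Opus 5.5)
The plan is to use the four listed properties of $\hat B_k$ together with the Perron--Frobenius theorem; only two things need checking.

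\emph{Step 1: the spectral radius of $\hat B_k$ is $1$.} The spectrum lies in $[0,1]$, so the spectral radius is at most $1$. Each row of $\hat B_k$ sums to $1$ because the matrix is doubly stochastic, so the all-ones vector on $\calC_k$ is an eigenvector with eigenvalue $1$. Normalized, this vector is exactly $\ket{\psi_k}$. Concretely, each $\hat P_s$ restricted to $\calC_k$ permutes the basis states $\ket{x}$, $x\in\calC_k$, so $\hat P_s\ket{\psi_k}=\ket{\psi_k}$, and averaging gives $\hat B\ket{\psi_k}=\ket{\psi_k}$.

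\emph{Step 2: simplicity.} The block $\hat B_k$ is non-negative and irreducible, since $\calC_k$ is connected in the move graph. The cited form of Perron--Frobenius then says that its spectral radius, which is $1$, is a simple eigenvalue. Since $\hat B_k$ is real symmetric, algebraic and geometric multiplicity agree. The eigenspace for $1$ is therefore one-dimensional, and it is spanned by $\ket{\psi_k}$.

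The main point to watch is the irreducibility claim. It holds because a path $x=x_0,\dots,x_m=y$ in the move graph gives $(\hat B_k^m)_{yx}\ge \prod_j (2|S|)^{-1}>0$. Each step $x_{j-1}\to x_j=s(x_{j-1})$ contributes an entry of at least $1/(2|S|)$, even if several moves coincide.

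Alternatively, simplicity can be shown directly, without Perron--Frobenius. Let $\hat B_k v=v$. Since every eigenvalue of $M_k$ satisfies $\lambda(M_k)=2\lambda(\hat B_k)-1$, we get $M_k v=v$, and hence $\langle v, M_k v\rangle=\|v\|^2$. Because $\|\hat P_s v\|=\|v\|$, equality in the triangle inequality (Cauchy--Schwarz) forces $\hat P_s v=v$ for every $s$. So $v$ is constant along every edge of the move graph, and by connectivity it is constant on $\calC_k$. I would mention this as a self-contained check.
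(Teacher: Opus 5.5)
Your proposal is correct and follows the paper's proof. The row sums give $\hat B_k\ket{\psi_k}=\ket{\psi_k}$, the spectrum lying in $[0,1]$ makes $1$ the spectral radius, and Perron--Frobenius for the non-negative irreducible block gives simplicity. Your explicit path bound $(\hat B_k^m)_{yx}\ge(2|S|)^{-m}$ for irreducibility is sound. So is your direct equality-case argument, which forces $\hat P_s v=v$ for every move and then uses connectivity; it is a self-contained alternative to Perron--Frobenius.
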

\begin{proof}
Every row of $\hat B_k$ sums to $1$, so $\hat B_k\ket{\psi_k} = \ket{\psi_k}$, and the spectrum of $\hat B_k$ lies in $[0,1]$, so its spectral radius is $1$.  $\hat B_k$ is non-negative and irreducible, so by the Perron--Frobenius theorem the eigenvalue $1$ is simple.
\end{proof}

Each block $\hat B_k$ has an orthonormal basis of real eigenvectors $\ket{\psi_{k,0}},\dots,\ket{\psi_{k,|\calC_k|-1}}$ with eigenvalues $\lambda_{k,0},\dots,\lambda_{k,|\calC_k|-1}$ in $[0,1]$, indexed so that $\ket{\psi_{k,0}} = \ket{\psi_k}$ and $\lambda_{k,0} = 1$.  By \Cref{lem:eigenspace}, $\lambda_{k,j} < 1$ for every $j \geq 1$.  Set
\[
    \Delta_k \;:=\; 1 - \max_{j \geq 1}\lambda_{k,j}\,, \qquad \Delta \;:=\; \min_k \Delta_k\,,
\]
with $\Delta_k := 1$ when $|\calC_k| = 1$, so that $0 \leq \lambda_{k,j} \leq 1 - \Delta$ for all $k$ and all $j \geq 1$.  We call $\Delta$ the \emph{spectral gap} of the scheme.
For a serial number $p$ let
\[
    \Pi_1^{(p)} \;:=\; \sum_{k :\, I(\calC_k) = p} \ket{\psi_k}\!\bra{\psi_k}
\]
be the projector onto the span of the orbit states with serial number $p$.  The banknote lies in this span: grouping the sum in equation~\eqref{eq:money-generic} by orbits gives $\ket{\$_p} = \sum_{k} (|\calC_k|/|P_p|)^{1/2}\ket{\psi_k}$, the sum running over the orbits with $I(\calC_k) = p$.

\begin{definition}[Accepted weight, accepting state]\label{def:accepted-weight}
The \emph{accepted weight} of a state $\rho$ at serial number $p$ is $\aw_p(\rho) := \Tr\bigl(\Pi_1^{(p)}\rho\bigr)$.  A state of accepted weight $1$ is an \emph{accepting state} at $p$.  The banknote $\ket{\$_p}$ and each orbit state $\ket{\psi_k}$ with $I(\calC_k) = p$ are accepting states at $p$.
\end{definition}

Since $P_p$ is a union of orbits, the vectors $\ket{\psi_{k,j}}$ with $I(\calC_k) = p$ form an orthonormal basis of $\mathrm{span}\{\ket x : x \in P_p\}$.  Every bound below is proved by writing
\begin{equation}\label{eq:eigen-coefficients}
    \Pi_{P_p}\ket{\phi} \;=\; \sum_{k :\, I(\calC_k) = p}\ \sum_{j\geq 0} \hat\phi_{k,j}\ket{\psi_{k,j}}\,, \qquad \hat\phi_{k,j} := \braket{\psi_{k,j}}{\phi}\,,
\end{equation}
so that $\Pi_1^{(p)}\ket\phi = \sum_k \hat\phi_{k,0}\ket{\psi_k}$ and $\aw_p(\phi) = \sum_k |\hat\phi_{k,0}|^2$.

Later arguments replace the acceptance probability by the accepted weight, the expectation of a projector that does not depend on $r$.  The next two bounds justify that replacement.

\begin{lemma}[Verification as an approximate projection]\label{lem:proj-implementation}
Let $p$ be a serial number, $\ket\phi$ a state, and $r \geq 1$.  The unnormalized state left when $\Ver(p,\phi)$ accepts is $\hat B^r\Pi_{P_p}\ket\phi$, and
\[
    \bigl\|\,\hat B^{r}\Pi_{P_p}\ket{\phi} \;-\; \Pi_1^{(p)}\ket{\phi}\,\bigr\| \;\leq\; (1-\Delta)^{r}\,.
\]
\end{lemma}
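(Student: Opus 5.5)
The plan has two parts: first identify the accepting branch of $\Ver(p,\phi)$, then bound the distance using the eigenbasis of \eqref{eq:eigen-coefficients}.

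\emph{The accepting branch.} Follow $\Ver(p,\phi)$ step by step. The serial check computes $I$ coherently and measures outcome $p$, which applies $\Pi_{P_p}$ to $\mathsf X$. Uncomputing is not needed, since the invariant register is then in the classical state $\ket p$ and is unentangled from $\mathsf X$. By the computation preceding \eqref{eq:markov-generic}, each accepting round maps $\plusS\otimes\ket\chi$ to $\plusS\otimes\hat B\ket\chi$. Induction over the $r$ rounds therefore gives the unnormalized accepted state $\hat B^r\Pi_{P_p}\ket\phi$ on $\mathsf X$, with $\mathsf A$ returned to $\plusS$. This is the same Kraus operator already recorded before \eqref{eq:acc-def}.

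\emph{The bound.} Expand $\Pi_{P_p}\ket\phi$ as in \eqref{eq:eigen-coefficients}. The vectors $\ket{\psi_{k,j}}$ are eigenvectors of $\hat B$: each is an eigenvector of the block $\hat B_k$, and $\hat B$ is block-diagonal. Hence
\[
\hat B^r\Pi_{P_p}\ket\phi=\sum_k\sum_{j\ge0}\lambda_{k,j}^r\hat\phi_{k,j}\ket{\psi_{k,j}}.
\]
The $j=0$ terms have $\lambda_{k,0}=1$, and they sum to $\sum_k\hat\phi_{k,0}\ket{\psi_k}=\Pi_1^{(p)}\ket\phi$. This uses $\Pi_1^{(p)}\Pi_{P_p}=\Pi_1^{(p)}$, which holds because each $\ket{\psi_k}$ with $I(\calC_k)=p$ is supported on $P_p$. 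The difference is therefore $\sum_k\sum_{j\ge1}\lambda_{k,j}^r\hat\phi_{k,j}\ket{\psi_{k,j}}$. By orthonormality its squared norm is $\sum_{k}\sum_{j\ge1}\lambda_{k,j}^{2r}|\hat\phi_{k,j}|^2$. Since $0\le\lambda_{k,j}\le1-\Delta$ for $j\ge1$, this is at most
\[
(1-\Delta)^{2r}\sum|\hat\phi_{k,j}|^2\le(1-\Delta)^{2r}\|\Pi_{P_p}\phi\|^2\le(1-\Delta)^{2r}.
\]
Taking square roots gives the claimed bound $(1-\Delta)^{r}$.

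\emph{Main obstacle.} The step needing the most care is the bookkeeping in the first part: checking that the serial-check measurement and the round-by-round projections compose exactly to $\hat B^r\Pi_{P_p}$, with no leftover entanglement with the ancilla registers. The spectral estimate itself is routine once non-negativity of the eigenvalues, from laziness, and the gap definition are in hand.
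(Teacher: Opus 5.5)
Your proposal is correct and follows the paper's proof essentially line by line. You read the accepting Kraus operator $\hat B^r\Pi_{P_p}$ off the single-round computation, the $j=0$ terms of the eigen-expansion reproduce $\Pi_1^{(p)}\ket\phi$, and the remaining terms are bounded by $(1-\Delta)^{2r}\|\Pi_{P_p}\ket\phi\|^2$ using $0\le\lambda_{k,j}\le 1-\Delta$; your remarks on $\Pi_1^{(p)}\Pi_{P_p}=\Pi_1^{(p)}$ and on the invariant register being left in $\ket p$ just make explicit what the paper takes for granted.
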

\begin{proof}
The accepting branch has Kraus operator $\hat B^r\Pi_{P_p}$ (\Cref{sec:framework}).  In the coefficients of equation~\eqref{eq:eigen-coefficients}, $\hat B$ multiplies $\ket{\psi_{k,j}}$ by $\lambda_{k,j}$, and $\lambda_{k,0}=1$, so with all sums over the orbits with $I(\calC_k)=p$,
\begin{align*}
    \hat B^{r}\Pi_{P_p}\ket\phi - \Pi_1^{(p)}\ket\phi &\;=\; \sum_{k}\sum_{j\geq1}\lambda_{k,j}^{r}\,\hat\phi_{k,j}\ket{\psi_{k,j}}\,,\\
    \bigl\|\hat B^{r}\Pi_{P_p}\ket\phi - \Pi_1^{(p)}\ket\phi\bigr\|^2 &\;=\; \sum_{k}\sum_{j\geq1}\lambda_{k,j}^{2r}\,|\hat\phi_{k,j}|^2 \;\leq\; (1-\Delta)^{2r}\,\|\Pi_{P_p}\ket\phi\|^2 \;\leq\; (1-\Delta)^{2r}\,.
\end{align*}
\end{proof}

Verification therefore implements the projective measurement $\{\Pi_1^{(p)},\openone-\Pi_1^{(p)}\}$ up to $(1-\Delta)^r$, which is how Liu, Montgomery and Zhandry specify verification~\cite[\S8.1]{LMZ22} and implement it~\cite[Thm.~11]{LMZ22}.

\begin{lemma}[Acceptance probability versus accepted weight]\label{lem:spectral-overlap}
For every state $\rho$, every serial number $p$ and every $r \geq 1$,
\begin{equation}\label{eq:acc-vs-weight}
    \aw_p(\rho) \;\leq\; \mathrm{Acc}_p(\rho) \;\leq\; \aw_p(\rho) \;+\; \bigl(1-\aw_p(\rho)\bigr)(1-\Delta)^{2r}\,.
\end{equation}
\end{lemma}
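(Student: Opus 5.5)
The plan is to reduce to pure states and expand in the eigenbasis of equation~\eqref{eq:eigen-coefficients}.

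Both $\mathrm{Acc}_p$ and $\aw_p$ are linear in $\rho$. The right-hand side of \eqref{eq:acc-vs-weight}, namely $\aw_p(\rho)\bigl(1-(1-\Delta)^{2r}\bigr)+(1-\Delta)^{2r}$, is affine in $\rho$. So it suffices to prove both inequalities for a pure state $\ket\phi$ and then take convex combinations over a spectral decomposition of $\rho$.

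For pure $\ket\phi$, equation~\eqref{eq:acc-def} gives $\mathrm{Acc}_p(\phi)=\|\hat B^r\Pi_{P_p}\ket\phi\|^2$. Expanding $\Pi_{P_p}\ket\phi$ as in \eqref{eq:eigen-coefficients} and using orthonormality of the real eigenvectors $\ket{\psi_{k,j}}$ yields
\[
\mathrm{Acc}_p(\phi)=\sum_k|\hat\phi_{k,0}|^2+\sum_k\sum_{j\geq1}\lambda_{k,j}^{2r}|\hat\phi_{k,j}|^2,
\]
where the sums run over the orbits with $I(\calC_k)=p$. Here we use $\lambda_{k,0}=1$ from \Cref{lem:eigenspace}.

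The first sum is exactly $\aw_p(\phi)$. The second sum is non-negative because each $\lambda_{k,j}\in[0,1]$, which gives the lower bound. For the upper bound, each $\lambda_{k,j}\le 1-\Delta$ for $j\ge1$, so the second sum is at most
\[
(1-\Delta)^{2r}\sum_k\sum_{j\ge1}|\hat\phi_{k,j}|^2=(1-\Delta)^{2r}\bigl(\|\Pi_{P_p}\ket\phi\|^2-\aw_p(\phi)\bigr)\le(1-\Delta)^{2r}\bigl(1-\aw_p(\phi)\bigr).
\]
The last step uses $\|\Pi_{P_p}\ket\phi\|\le1$.

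No step is a real obstacle. The only points needing care are the following.
\begin{itemize}
\item Cross terms vanish because the eigenvectors are orthonormal.
\item The span identity is already noted before \Cref{def:accepted-weight}: the $\ket{\psi_{k,j}}$ with $I(\calC_k)=p$ form a basis of $\mathrm{span}\{\ket x:x\in P_p\}$, so $\aw_p(\phi)=\sum_k|\hat\phi_{k,0}|^2$.
\item The mixed-state extension must be checked to preserve the form of the bound, which holds because the bound is affine in $\aw_p$.
\end{itemize}
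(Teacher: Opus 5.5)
Your proof is correct and is essentially the paper's argument. The paper expands $\Pi_{P_p}\hat B^{2r}\Pi_{P_p}$ in the same eigenbasis to obtain the operator sandwich $\Pi_1^{(p)} \preceq \Pi_{P_p}\hat B^{2r}\Pi_{P_p} \preceq \Pi_1^{(p)} + (1-\Delta)^{2r}(\Pi_{P_p}-\Pi_1^{(p)})$ and then traces against $\rho$ directly; this makes your separate reduction to pure states and convex-combination step unnecessary, but the argument is otherwise identical.
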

\begin{proof}
$\hat B$ commutes with $\Pi_{P_p}$, and in the eigenbasis of equation~\eqref{eq:eigen-coefficients}
\[
    \Pi_{P_p}\hat B^{2r}\Pi_{P_p} \;=\; \Pi_1^{(p)} \;+\; \sum_{k :\, I(\calC_k) = p}\ \sum_{j\geq1} \lambda_{k,j}^{2r}\,\ket{\psi_{k,j}}\!\bra{\psi_{k,j}}\,.
\]
Every $\lambda_{k,j}^{2r}$ with $j \geq 1$ lies in $[0,(1-\Delta)^{2r}]$, so $\Pi_1^{(p)} \preceq \Pi_{P_p}\hat B^{2r}\Pi_{P_p} \preceq \Pi_1^{(p)} + (1-\Delta)^{2r}(\Pi_{P_p} - \Pi_1^{(p)})$.  Taking the trace against $\rho$ and using equation~\eqref{eq:acc-def},
\[
    \aw_p(\rho) \;\leq\; \mathrm{Acc}_p(\rho) \;\leq\; \aw_p(\rho) + (1-\Delta)^{2r}\bigl(\Tr(\Pi_{P_p}\rho) - \aw_p(\rho)\bigr) \;\leq\; \aw_p(\rho) + (1-\Delta)^{2r}\bigl(1-\aw_p(\rho)\bigr)\,.
\]
\end{proof}

\subsection{Design conditions}\label{sec:design-conditions}

Three conditions are imposed on the scheme.  None of them quantifies over an adversary: whether each holds is a combinatorial question about the move graph and the mint, settled scheme by scheme.  The first, rapid mixing (\Cref{def:rapid-mixing}), is what lets verification certify accepted weight.  The other two, dispersed serial numbers (\Cref{def:sparse}) and large orbits (\Cref{def:large-orbits}), are necessary for security (\Cref{prop:design-necessary}), and the three together do not suffice (\Cref{prop:design-insufficient}).

\begin{definition}[Rapid mixing]\label{def:rapid-mixing}
The scheme \emph{mixes rapidly} if $r(n)\,\Delta(n) \geq n$ for all large $n$.
\end{definition}

Since $r$ is polynomial, the condition gives $\Delta \geq n/r \geq 1/\poly(n)$.  Conversely, a scheme with $\Delta \geq 1/q(n)$ for a polynomial $q$ meets it with $r := nq$.  Rapid mixing thus says that the spectral gap is inverse-polynomial and that verification runs enough rounds for it.  Under the condition, $(1-\Delta)^{r} \leq e^{-\Delta r} \leq e^{-n}$, since $1-\Delta \leq e^{-\Delta}$, so by \Cref{lem:spectral-overlap} the acceptance probability and the accepted weight of every state differ by at most $e^{-2n}$.
Without the condition, acceptance does not certify accepted weight: an eigenvector $\ket{\psi_{k,j}}$ with $\lambda_{k,j} = 1-\Delta$ has accepted weight $0$ and is accepted with probability $(1-\Delta)^{2r}$, which tends to $1$ when $\Delta = o(1/r)$.  Such a state passes verification while being orthogonal to every banknote.

\begin{definition}[Dispersed serial numbers]\label{def:sparse}
Let $\mathrm{wt}(p) := |P_p|/|X|$ be the \emph{minting weight} of $p$, the probability that $\Mint$ issues the serial number $p$.  The scheme has \emph{dispersed serial numbers} if $\max_p \mathrm{wt}(p) = \negl(n)$.
\end{definition}

Dispersion is the requirement that honest minting does not forge, which is how Farhi et al.\ first stated it~\cite[\S1.2]{FGHLS10}.

Dispersion does not imply large orbits.  The condition of Farhi et al., that most serial numbers have exponentially many objects~\cite[\S1.2]{FGHLS10}, concerns level sets, and a level set can be a union of many small orbits.  A scheme may have dispersed serial numbers and still mint a non-negligible fraction of its banknotes on orbits of polynomial size, and an orbit of polynomial size can be listed by random walks.

\begin{definition}[Large orbits]\label{def:large-orbits}
The scheme has \emph{large orbits} if for every polynomial $q$,
\[
    \sum_{k:\, |\calC_k| \leq q(n)} \frac{|\calC_k|}{|X|} \;=\; \negl(n)\,.
\]
\end{definition}

\begin{proposition}[Necessity of dispersion and large orbits]\label{prop:design-necessary}
\leavevmode
\begin{enumerate}[label=\textup{(\roman*)},leftmargin=*,nosep]
\item If the scheme does not have dispersed serial numbers, it is not secure.
\item If the scheme mixes rapidly and does not have large orbits, it is not secure.
\end{enumerate}
\end{proposition}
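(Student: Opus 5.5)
Both parts are proved by exhibiting an explicit quantum polynomial-time forger whose winning probability is non-negligible; in each case the forger ignores the challenge banknote entirely.

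\emph{Part (i).}  If dispersion fails, there are a polynomial $q$ and infinitely many $n$ with $\max_p \mathrm{wt}(p) \geq 1/q(n)$.  The forger runs $\Mint(1^n)$ twice on its own, obtaining $(p_1,\ket{\$_{p_1}})$ and $(p_2,\ket{\$_{p_2}})$, and outputs the two fresh banknotes in $\mathsf R_1,\mathsf R_2$.  Since $\mathrm{Acc}_p(\$_p)=1$ (\Cref{sec:framework}), the forger wins whenever $p_1=p_2=p$, where $p$ is the challenger's serial number.  The three mints are independent, so this happens with probability $\sum_p \mathrm{wt}(p)^3 \geq \mathrm{wt}(p^\ast)^3 \geq 1/q(n)^3$, taking $p^\ast$ to be a maximizer.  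No rapid mixing is needed here, because genuine banknotes are accepted with certainty.

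\emph{Part (ii).}  If large orbits fail, there are polynomials $q,q'$ and infinitely many $n$ such that a measured-mint element lies on an orbit of size at most $q(n)$ with probability at least $1/q'(n)$.  The forger measures its own banknote in the computational basis, obtaining $x$ uniform on $X$ with $p=I(x)$ (\Cref{def:measured-mint}); with probability at least $1/q'$, $x$ lies on a small orbit $\calC_k$.  It then enumerates the orbit by breadth-first search in the move graph, applying all $|S|$ moves to each discovered element, and stops after discovering $q(n)+1$ elements.  This takes $\poly(n)\cdot|S|$ time, and I take $|S|$ to be polynomial since moves are indexed by a polynomial-size register $\mathsf A$; otherwise a randomized walk enumeration is needed.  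If the search terminates with $m\le q$ elements, it has found the entire orbit $\calC_k$.  The forger then prepares $\ket{\psi_k}$ twice: it sorts the list, prepares the uniform superposition over indices $1,\dots,m$ (exactly, or up to exponentially small error by standard methods), and coherently maps each index to the corresponding list entry.  Since $\ket{\psi_k}$ is an accepting state at $p$ (\Cref{def:accepted-weight}), \Cref{lem:spectral-overlap} gives $\mathrm{Acc}_p(\psi_k)\geq \aw_p(\psi_k)=1$, so both copies are accepted and the forger wins with probability at least $1/q'(n)$, up to preparation error.

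The main obstacle is small: making the enumeration and uniform preparation efficient and exact enough.  Rapid mixing enters only nominally, since the lower bound $\aw\le\mathrm{Acc}$ already holds without it; the hypothesis mainly ensures the setting is meaningful.  An approximate preparation within trace distance $\epsilon$ loses at most $2\epsilon$ in winning probability, which keeps the advantage non-negligible.
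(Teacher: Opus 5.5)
Your part (i) is correct and is essentially the paper's argument. The paper keeps the challenge banknote as one output and mints once, which gives $\sum_p\mathrm{wt}(p)^2$. You discard the banknote and mint twice, which gives $\sum_p\mathrm{wt}(p)^3$. That is equally non-negligible.

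Part (ii) has a gap at the enumeration step. Breadth-first search applies all $|S|$ moves to each discovered element, so it runs in polynomial time only if $|S|\le\poly(n)$. \Cref{def:scheme} does not impose this. It only requires $s_i(x)$ to be computable in polynomial time from $(i,x)$, so the index $i$ is a string of polynomial length and $|S|$ may be as large as $2^{\poly(n)}$. A move register of polynomially many qubits has exponentially many basis states, and preparing $\plusS$ and applying $U$ remain efficient in that case, so your register-size justification does not bound $|S|$.

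For such schemes you need the randomized enumeration that you mention but do not carry out. That is exactly where rapid mixing enters, so it is not nominal. The paper runs independent lazy walks of $r$ steps from $x$. By the spectral decomposition of $\hat B$ on $\calC$ and $r\Delta\ge n$, every $y\in\calC$ satisfies $(\hat B^r)_{yx}\ge|\calC|^{-1}-(1-\Delta)^r\ge 1/q(n)-e^{-n}\ge1/(2q(n))$. So $O(q(n)\,n)$ walks list all of $\calC$ except with probability $2^{-\Omega(n)}$, and your preparation step then finishes the proof.

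Without a bound on the gap this can fail. With exponentially many moves, a small orbit may be connected only through a negligible fraction of them. Then neither breadth-first search nor walks of polynomial length reach all of it in polynomial time. Under the additional hypothesis $|S|=\poly(n)$, which holds for the knot, graph and class-group instances, your breadth-first search argument is correct and avoids rapid mixing altogether.

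Finally, in part (ii) the forger does use the challenge banknote, since $x$ must lie in $P_p$. Your opening remark that both forgers ignore it should be restricted to part (i).
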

\begin{proof}
(i) A forger keeps the banknote it is given and runs the mint once more.  The new banknote has serial number $p$ with probability $\mathrm{wt}(p)$, so the forger wins with probability
\[
    \sum_p\mathrm{wt}(p)^2 \;\geq\; \max_p\mathrm{wt}(p)^2\,,
\]
which is non-negligible when $\max_p\mathrm{wt}(p)$ is.

(ii) Suppose that for a polynomial $q$ the sum in \Cref{def:large-orbits} is at least $1/\poly(n)$ for infinitely many $n$.  Measuring a banknote yields a uniformly random $x\in X$ (\Cref{def:measured-mint}), so with that probability $x$ lies on an orbit $\calC$ with $|\calC|\le q(n)$.  A walk of $r$ steps from $x$ ends at each $y\in\calC$ with probability
\[
    (\hat B^r)_{yx} \;\geq\; |\calC|^{-1}-(1-\Delta)^r \;\geq\; \frac1{q(n)}-e^{-n} \;\geq\; \frac1{2q(n)}\,,
\]
by the spectral decomposition of $\hat B$ on $\calC$ and \Cref{def:rapid-mixing}.  So $O(q(n)\,n)$ independent walks from $x$ reach every element of $\calC$ except with probability $q(n)(1-1/(2q(n)))^{O(q(n)n)} = 2^{-\Omega(n)}$.  The forger lists the endpoints and prepares two copies of the uniform superposition over the list.  When the list is $\calC$, both copies are $\ket{\psi_{k(x)}}$, which passes verification with certainty (\Cref{lem:eigenspace}), so the forger wins with non-negligible probability.
\end{proof}

Unlike a failure of dispersion, a small gap is not by itself an attack, since the states that pass verification without accepted weight may be hard to prepare~\cite[\S3.3]{FGHLS10},~\cite[App.~F]{LMZ22}.  A small gap does yield an attack when, with non-negligible probability over $\mathcal M_n$, the measured element $x$ lies in a set $J \subseteq \calC_{k(x)}$ of polynomial size that can be listed from $x$ and that the walk rarely leaves.  Farhi et al.\ describe such sets among grid diagrams of maximal dimension~\cite[\S3.3]{FGHLS10}.  Write $\theta(J)$ for the probability that a uniformly random move applied to a uniformly random element of $J$ leaves $J$.  Since $\bra J\hat P_s\ket J = |\{y\in J : s(y)\in J\}|/|J|$ for each move $s$,
\[
    \bra J\hat B\ket J \;=\; \frac12\Bigl(1 + \frac1{|S|}\sum_{s\in S}\bra J\hat P_s\ket J\Bigr) \;=\; \frac12\bigl(1 + 1-\theta(J)\bigr) \;=\; 1-\frac{\theta(J)}2\,.
\]  Since $\ket J$ is supported on $P_p$, expanding it in an orthonormal eigenbasis $\{\ket{e_j}\}$ of $\hat B$, with eigenvalues $\lambda_j\in[0,1]$, and applying Jensen's inequality to the convex function $t \mapsto t^{2r}$ gives
\[
    \mathrm{Acc}_p(\ket J) \;=\; \sum_j |\braket{e_j}{J}|^2\,\lambda_j^{2r} \;\geq\; \Bigl(\sum_j |\braket{e_j}{J}|^2\,\lambda_j\Bigr)^{2r} \;=\; \bigl(1-\theta(J)/2\bigr)^{2r} \;\geq\; 1 - r\,\theta(J)\,.
\]  Two copies of $\ket J$ therefore forge whenever $\theta(J) = o(1/r)$.  Large orbits alone do not exclude this attack, but together with \Cref{def:rapid-mixing} they do: $\ket J$ has accepted weight $|J|/|\calC_{k(x)}|$, so by \Cref{lem:spectral-overlap} it passes with probability at most $|J|/|\calC_{k(x)}| + e^{-2n}$.

\paragraph{The conditions do not suffice.}
The three conditions constrain only the move graph and the mint, and a scheme can meet all of them while its orbit states are easy to prepare.

\begin{proposition}[Insufficiency of the design conditions]\label{prop:design-insufficient}
Let $n$ be even, let $X = \mathbb F_2^n$, let $h_1,\dots,h_{n/2}$ be fixed linearly independent vectors with span $H$, let the moves be $s_i(x) = x + h_i$, let $I(x) := Mx$ for a matrix $M$ with kernel $H$, and let $r := n^2$.  This scheme mixes rapidly and has dispersed serial numbers and large orbits, but it is not secure.
\end{proposition}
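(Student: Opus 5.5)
The plan is to check the three design conditions one at a time by direct computation, and then give an explicit forger that outputs two exact orbit states. First I check that the scheme fits \Cref{def:scheme}. The ground set $X=\mathbb F_2^n$ has $\ket X = H^{\otimes n}\ket{0^n}$. Each move $s_i(x)=x+h_i$ is an involution, so $s_i^{-1}=s_i$ and $S=S^{-1}$ with $|S|=n/2$. The invariant $I(x)=Mx$ is linear and satisfies $I(x+h_i)=Mx$ because $h_i\in\ker M$.

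\emph{Orbits and serial numbers.} The orbit of $x$ is the coset $x+H$, since words generate exactly the sums of subsets of $\{h_i\}$, which form $H$. The level set $P_p=\{x: Mx=p\}$ is either empty or a single coset of $\ker M = H$. So each level set is exactly one orbit, of size $2^{n/2}$. This gives $\mathrm{wt}(p)\le 2^{-n/2}=\negl(n)$, which is dispersion (\Cref{def:sparse}). It also gives large orbits (\Cref{def:large-orbits}): for every polynomial $q$, no orbit has size at most $q(n)$ once $n$ is large, so the sum is $0$.

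\emph{Spectral gap.} This is the only step needing a real computation. I identify each orbit $x+H$ with $\mathbb F_2^{n/2}$ via $x+\sum_i c_ih_i\mapsto c$, which is a bijection because the $h_i$ are linearly independent. Under this identification the block $\hat B_k$ becomes the lazy walk on the hypercube $\{0,1\}^{n/2}$ with the standard generators $e_i$. I diagonalize it by the characters $\chi_a(c)=(-1)^{a\cdot c}$ for $a\in\mathbb F_2^{n/2}$. Each $\hat P_{s_i}$ multiplies $\chi_a$ by $(-1)^{a_i}$. Averaging over the $n/2$ moves gives the eigenvalue $1-\tfrac{4|a|}{n}$, and laziness then gives
\[
\lambda_a=\tfrac12\Bigl(1+1-\tfrac{4|a|}{n}\Bigr)=1-\tfrac{2|a|}{n}.
\]
Here $|a|$ is the Hamming weight of $a$. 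The eigenvalue $1$ occurs only at $a=0$, and the second largest eigenvalue is $1-2/n$, so $\Delta=2/n$ for every orbit. Then $r\Delta = n^2\cdot 2/n = 2n\ge n$, which is rapid mixing (\Cref{def:rapid-mixing}).

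\emph{Attack.} The forger ignores its banknote and uses only $p$. It first finds some $x_0$ with $Mx_0=p$ by Gaussian elimination. Such an $x_0$ exists because $p$ was minted, and alternatively it could simply measure the banknote. It then prepares $\ket{+}^{\otimes n/2}$ on a register holding $c$ and computes $x_0+\sum_i c_ih_i$ into a second register. Next it uncomputes $c$ from $y=x_0+\sum_i c_ih_i$: by linear independence, $c$ is the unique solution of the linear system $\sum_i c_ih_i = y - x_0$, and this solution can be computed reversibly in polynomial time. What remains is exactly $\ket{x_0+H}=\ket{\psi_{k}}=\ket{\$_p}$. The forger repeats this once more to obtain a second copy. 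By \Cref{lem:eigenspace} and the correctness remark following \eqref{eq:acc-def}, each copy is accepted with probability $1$, so the forger wins with probability $1$ and the scheme is not secure.
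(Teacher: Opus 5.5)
Your proof is correct and follows the paper's. In both, the orbits are the cosets $x+H$, each level set is a single orbit of size $2^{n/2}$, each block is the lazy hypercube walk with gap $2/n$, and the forger superposes over coefficient vectors $c$ and erases them by solving the linear system. The only differences are that you derive the hypercube eigenvalues via characters where the paper quotes them, and you obtain the seed by Gaussian elimination on $Mx_0=p$ rather than by measuring the banknote as the paper does. Your variant incidentally shows that the serial number alone suffices for forgery.
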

\begin{proof}
Each move is its own inverse, so $S = S^{-1}$.  The orbits are the cosets $x + H$, and $I$ separates them.  Every orbit has $2^{n/2}$ elements, and $\mathrm{wt}(p) = 2^{-n/2}$ for every $p$ in the image of $I$.  On each coset the lazy walk is the lazy walk on the hypercube $\{0,1\}^{n/2}$, whose eigenvalues are $1 - 2|z|/n$ for $z\in\{0,1\}^{n/2}$, so its spectral gap is $2/n$ and $r := n^2$ meets \Cref{def:rapid-mixing}.  A forger measures its banknote, obtaining $x$, and prepares the uniform superposition over coefficient vectors $c \in \mathbb F_2^{n/2}$ with $x + \sum_i c_ih_i$ written into a second register.  Since the $h_i$ are linearly independent, $c$ is determined by $x$ and $x+\sum_ic_ih_i$, and solving the linear system erases it:
\[
    2^{-n/4}\sum_{c}\ket{c}\,\Bigl|x+\textstyle\sum_i c_ih_i\Bigr\rangle \;\longmapsto\; \ket{0}\otimes 2^{-n/4}\sum_{c}\Bigl|x+\textstyle\sum_i c_ih_i\Bigr\rangle \;=\; \ket{0}\otimes\ket{x+H}\,.
\]
The second register holds the orbit state, and repeating the construction gives a second copy.  Both copies pass verification with certainty, so the forger wins with probability~$1$.
\end{proof}

The graph scheme of \Cref{sec:coset} also meets the three conditions (\Cref{prop:graph-conditions}) and is broken, although its invariant is known to be computable in polynomial time only on average.

\subsection{The forger's problem: orbit state preparation}\label{sec:osp-problem}

By \Cref{lem:spectral-overlap}, a state that verification accepts at $p$ with probability close to $1$ has accepted weight close to $1$, so it lies close to the span of the orbit states with serial number $p$.  A forger that measures its banknote holds $p$ and one element $x$, which we call its \emph{seed}.  When $I$ takes different values on different orbits, the state it must then produce is the orbit state of $x$.  Producing it is a quantum state generation problem, with a classical input and a quantum state as output, of the kind that Ambainis, Magnin, Roetteler and Roland study in query complexity~\cite{AMRR11}.

\begin{definition}[Orbit state preparation]\label{def:superposition}
Let $G=(G_n)$ with $G_n\subseteq X_n$ be a family of seeds. A uniform family of polynomial-size circuits $\Prep=(\Prep_n)$ on an input register and $m=m(n)$ ancilla qubits \emph{solves orbit state preparation on $G$} if there is a negligible $\mu$ such that for every $n$ and every $x\in G_n$
\begin{equation}\label{eq:osp-condition}
\bigl\|\Prep\ket x\ket{0^m} - \ket{\psi_{k(x)}}\otimes\ket{a_x}\bigr\|\le\mu(n)
\end{equation}
for some ancilla state $\ket{a_x}$.  Such a family is a \emph{solver} for orbit state preparation on $G$.  More generally, a family of states $\ket{\phi_x}$ \emph{can be prepared from $x$} if equation~\eqref{eq:osp-condition} holds with $\ket{\phi_x}$ in place of $\ket{\psi_{k(x)}}$.  The order of the registers in equation~\eqref{eq:osp-condition} is a convention: a circuit that leaves the orbit state in another fixed register becomes a solver after a final swap.  Orbit state preparation is \emph{solvable} if it is solvable on $G=X$. The \emph{minting weight} of $G$ is $|G_n|/|X_n|$, the probability that measuring a freshly minted banknote yields an element of $G_n$.
\end{definition}

The ancilla state may depend on $x$, but the orbit state must come out unentangled from it, so that the element register holds a pure state.

Farhi et al.\ state informally that, given one element, it should be hard to generate the uniform superposition over its level set~\cite[\S1.2]{FGHLS10}.  In Lutomirski's black-box abstraction of the knot scheme, orbit state preparation is the \textsc{Component Superposition} problem~\cite[Def.~7]{Lutomirski11}.

\section{Path-finding and forgery}\label{sec:where-not}

Hardness of path-finding (\Cref{def:pathfinding-generic}) is the assumption behind the knot scheme and the invariant framework (\Cref{sec:framework}).  A path-finder answers a question about two named objects, whereas a forger must produce a state spread coherently over an orbit.  This section asks when the first ability gives the second.  For the graph scheme it does (\Cref{sec:coset}), because of a group structure that \Cref{sec:index-sets} isolates.  With such a group, hardness of path-finding is necessary for security, and the class-group scheme can be forged in subexponential time (\Cref{sec:group}).  Without one, a path-finder only turns orbit state preparation into an equally hard problem (\Cref{sec:oracle}).

\subsection{The graph example}\label{sec:coset}

Farhi et al.\ propose a scheme built on graphs, with the spectrum of the adjacency matrix as its invariant, and break it~\cite[\S3.4]{FGHLS10}.  We give the attack in full for a variant whose invariant is a canonical form.

\paragraph{Objects, moves and invariant.}
Write $\Sym(v)$ for the group of the $v!$ permutations of $\{1,\dots,v\}$.  For $\pi\in\Sym(v)$ and a graph $x$ on the vertices $\{1,\dots,v\}$, write $\pi\cdot x$ for the graph obtained from $x$ by renaming each vertex $i$ as $\pi(i)$, that is, $P_\pi xP_\pi^{-1}$ for the permutation matrix $P_\pi$.  Renaming by $\tau$ and then by $\sigma$ is renaming by the product: $(\sigma\tau)\cdot x = \sigma\cdot(\tau\cdot x)$.  The scheme has three ingredients:
\begin{itemize}[leftmargin=*,nosep]
\item \emph{objects}: the adjacency matrices $x$ of graphs on $\{1,\dots,v\}$, strings of $n = v(v-1)/2$ bits, so that a function is negligible in $n$ exactly when it is negligible in $v$;
\item \emph{moves}: the renamings $x\mapsto\tau\cdot x$ by transpositions $\tau$;
\item \emph{invariant}: the \emph{canonical form} $I(x) := c(x)\cdot x$, where the \emph{canonical labelling} $c(x)\in\Sym(v)$ is chosen so that $I(x) = I(y)$ exactly when $x$ and $y$ are isomorphic.
\end{itemize}
The transpositions generate $\Sym(v)$, so the orbit of $x$ is its isomorphism class $\{\pi\cdot x : \pi\in\Sym(v)\}$, and since the canonical form separates orbits, each banknote is an orbit state.  The automorphism group of $x$ is $\mathrm{Aut}(x) := \{\pi : \pi\cdot x = x\}$.  Babai's algorithm computes the canonical labelling in quasipolynomial time~\cite{Babai19}, and on uniformly random graphs the algorithm of Babai and Ku\v{c}era computes it in linear average time~\cite{BK79}.  Requirement~(iii) of \Cref{def:scheme} therefore holds on average and is not known to hold in the worst case.  The scheme meets the three design conditions of \Cref{sec:design-conditions}, with spectral gap at least $1/(v-1)$, and \Cref{app:graph-conditions} gives the proof (\Cref{prop:graph-conditions}).

\paragraph{The fact the attack needs.}
Fix $x$ with orbit $\calC$, let $y\in\calC$, and suppose $\sigma\cdot x = y$.  For every $\pi\in\Sym(v)$,
\begin{align*}
    \pi\cdot x = y
    &\iff \pi\cdot x = \sigma\cdot x && \text{since } y = \sigma\cdot x\\
    &\iff (\sigma^{-1}\pi)\cdot x = x && \text{renaming both sides by } \sigma^{-1}\\
    &\iff \sigma^{-1}\pi\in\mathrm{Aut}(x) && \text{by the definition of } \mathrm{Aut}(x)\,.
\end{align*}
Hence
\begin{equation}\label{eq:fibre-coset}
    \{\pi\in\Sym(v) : \pi\cdot x = y\} \;=\; \sigma\,\mathrm{Aut}(x) \;:=\; \{\sigma\alpha : \alpha\in\mathrm{Aut}(x)\}\,.
\end{equation}  The canonical labelling supplies such a $\sigma$.  Since $x$ and $y$ are isomorphic, $c(x)\cdot x = I(x) = I(y) = c(y)\cdot y$, so
\begin{equation}\label{eq:canonical-section}
    \sigma_y := c(y)^{-1}c(x) \qquad\text{satisfies}\qquad \sigma_y\cdot x \;=\; c(y)^{-1}\cdot\bigl(c(x)\cdot x\bigr) \;=\; c(y)^{-1}\cdot\bigl(c(y)\cdot y\bigr) \;=\; y\,.
\end{equation}
The $|\calC|$ sets $\sigma_y\mathrm{Aut}(x)$, the \emph{left cosets} of $\mathrm{Aut}(x)$, partition $\Sym(v)$, so
\begin{equation}\label{eq:orbit-stabilizer}
    |\calC| \;=\; v!/|\mathrm{Aut}(x)|\,.
\end{equation}

\paragraph{The attack.}
The forger measures its banknote, obtaining $x$, and computes $c(x)$.  It holds a permutation register and a graph register.
\begin{enumerate}[label=\textup{Step \arabic*.},leftmargin=*,itemsep=4pt]
\item Prepare
\[
    \ket{\Sym(v)}\ket{x} \;=\; \frac1{\sqrt{v!}}\sum_{\pi\in\Sym(v)}\ket{\pi}\ket{x}\,.
\]
A permutation can be built by inserting the numbers $1,2,\dots,v$ one at a time into a list, and it is recorded by the positions $d_1,\dots,d_v$ at which they are inserted.  When the number $i$ is inserted, the list holds the $i-1$ numbers already placed, so there are $i$ positions and $d_i\in\{0,\dots,i-1\}$; numbers placed early move to higher positions as later ones are inserted before them.  Every arrangement of the $v$ numbers arises from exactly one digit string, and there are $1\cdot2\cdots v = v!$ strings.  So $\ket{\Sym(v)}$ is obtained from the product state
\[
    \bigotimes_{i=1}^{v}\;\frac1{\sqrt i}\sum_{d=0}^{i-1}\ket{d}
\]
by a reversible conversion of digits into a permutation.  Each factor, a uniform superposition over at most $v$ values, can be prepared up to negligible error by a circuit of size $\poly(v)$.
\item Apply $\ket{\pi}\ket{x}\mapsto\ket{\pi}\ket{\pi\cdot x}$ and group the terms by $y = \pi\cdot x$, using equation~\eqref{eq:fibre-coset}:
\[
    \frac1{\sqrt{v!}}\sum_{\pi}\ket\pi\ket{\pi\cdot x} \;=\; \sum_{y\in\calC}\sqrt{\frac{|\mathrm{Aut}(x)|}{v!}}\;\ket{\sigma_y\mathrm{Aut}(x)}\ket{y}\,.
\]
Each $y$ is entangled with the coset recorded in the first register.
\item Controlled on $y$, compute $\sigma_y$ by equation~\eqref{eq:canonical-section}, replace $\pi$ by $\sigma_y^{-1}\pi$, and uncompute $\sigma_y$.  Every coset $\sigma_y\mathrm{Aut}(x)$ becomes $\mathrm{Aut}(x)$, and $\sqrt{|\mathrm{Aut}(x)|/v!} = |\calC|^{-1/2}$ by equation~\eqref{eq:orbit-stabilizer}, so the state becomes
\[
    \sum_{y\in\calC}|\calC|^{-1/2}\,\ket{\mathrm{Aut}(x)}\ket{y} \;=\; \ket{\mathrm{Aut}(x)}\otimes\ket{\psi_{k(x)}}\,.
\]
\end{enumerate}
Repeating the three steps gives a second copy.  The forger's cost is that of computing $c$ coherently in step~3, which with Babai's algorithm is quasipolynomial.  With the average-case algorithm of Babai and Ku\v{c}era~\cite{BK79}, stopped after a fixed polynomial time, the forger runs in polynomial time and errs only on the branches $y$ where that algorithm has not finished.  For a uniformly random seed $x$, the branch $y$ is itself a uniformly random graph, so by Markov's inequality these branches carry inverse-polynomially small weight on average, and the forger wins with probability close to $1$.  Step~3 computed $\sigma_y$, a product of at most $v-1$ transpositions and hence a path from $x$ to $y$.  Canonical labelling is thus a path-finder for this scheme, run coherently.

\subsection{Index sets}\label{sec:index-sets}

The attack used three ingredients.  We call the elements of the set in the first one \emph{labels}: a label names an object of the orbit, as the permutation $\pi$ names the graph $\pi\cdot x$.
\begin{itemize}[leftmargin=*,nosep]
\item a set $W$ of labels whose uniform superposition is easy to prepare, here $\Sym(v)$;
\item a map $\varphi_x$ sending each label to the object it names, onto the orbit, here $\pi\mapsto\pi\cdot x$;
\item for each $y$ in the orbit, a \emph{relabelling} $\tau_y$ of $W$, applied in the branch of $y$, here $\pi\mapsto\sigma_y^{-1}\pi$.
\end{itemize}
Steps~1 and~2 need only the first two ingredients.  The relabelling was used in step~3, to obtain $\ket{\mathrm{Aut}(x)}$ for every $y$.  \Cref{lem:uncompute} shows that this is exactly what step~3 needs, for any set $W$.

\begin{lemma}[Label erasure by relabelling]\label{lem:uncompute}
Fix a seed $x$ and write $\calC:=\calC_{k(x)}$.  Let $W$ be a finite set, let $\varphi_x:W\to X$ be a map whose image is $\calC$, write $W_{x\to y} := \{w\in W:\varphi_x(w) = y\}$ for the \emph{fibre} over $y\in\calC$, and let
\[
    \ket\Psi:=|W|^{-1/2}\sum_{w\in W}\ket w_{\mathsf W}\ket{\varphi_x(w)}_{\mathsf X}\,.
\]  For each $y\in\calC$ let $\tau_y$ be a relabelling of $W$, and let $\Gamma$ be the unitary that applies $\tau_y$ to $\mathsf W$ when $\mathsf X$ holds $y\in\calC$, and acts as the identity when $\mathsf X$ holds an element outside $\calC$.  Then
\[
\Gamma\ket\Psi = \sum_{y\in\calC}\sqrt{|W_{x\to y}|/|W|}\;\ket{\tau_y(W_{x\to y})}\ket y,
\]
and $\Gamma\ket\Psi$ is a product state across $\mathsf W$ and $\mathsf X$ if and only if the relabellings satisfy the \emph{label condition}
\begin{equation}\label{eq:label-condition}
    \tau_y(W_{x\to y}) \;=\; \tau_{y'}(W_{x\to y'}) \qquad\text{for all } y,y'\in\calC\,.
\end{equation}
When the label condition holds, the fibres have a common size and $\mathsf X$ holds $\ket{\psi_{k(x)}}$ exactly.
\end{lemma}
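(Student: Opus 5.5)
The plan is to compute $\Gamma\ket\Psi$ directly by grouping the sum defining $\ket\Psi$ by fibres, and then to read off the product-state criterion from the fact that the branches indexed by distinct $y$ are orthogonal on $\mathsf X$.

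\emph{Computing $\Gamma\ket\Psi$.} Since $\varphi_x$ has image $\calC$, the fibres $W_{x\to y}$ for $y\in\calC$ are nonempty and partition $W$. Grouping the terms by $y$ gives
\[
\ket\Psi=\sum_{y\in\calC}|W|^{-1/2}\sum_{w\in W_{x\to y}}\ket w\ket y=\sum_{y\in\calC}\sqrt{|W_{x\to y}|/|W|}\,\ket{W_{x\to y}}\ket y .
\]
In the branch $y$, the unitary $\Gamma$ applies the permutation $\tau_y$ of $W$ to $\mathsf W$. A permutation sends the uniform superposition over a set $F$ to the uniform superposition over its image $\tau_y(F)$, which has the same size as $F$. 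This yields the displayed formula. Here I read a relabelling as a bijection of $W$, which is what makes $\Gamma$ unitary.

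\emph{The product criterion.} Write $\Gamma\ket\Psi=\sum_y c_y\ket{\alpha_y}\ket y$ with $c_y>0$ and unit vectors $\ket{\alpha_y}=\ket{\tau_y(W_{x\to y})}$.

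\begin{itemize}
\item \emph{The label condition implies a product.} If the label condition holds, all $\ket{\alpha_y}$ equal a common $\ket\alpha$. The state is then $\ket\alpha\otimes\sum_y c_y\ket y$, which is a product state.
\item \emph{A product implies the label condition.} Suppose the state is a product. Its reduced state on $\mathsf W$ is $\sum_y c_y^2\ket{\alpha_y}\!\bra{\alpha_y}$, because the $\ket y$ are orthonormal, and this reduced state must be pure. A convex combination of pure states with strictly positive weights is pure only if all the states coincide up to phase. Since the $\ket{\alpha_y}$ are uniform superpositions with nonnegative amplitudes, coinciding up to phase means they are equal. Equal uniform superpositions have equal supports, so $\tau_y(W_{x\to y})=\tau_{y'}(W_{x\to y'})$.
\end{itemize}

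The purity step is the one needing care. I would justify it by noting that $\Tr\rho^2=\sum_{y,y'}c_y^2c_{y'}^2|\braket{\alpha_y}{\alpha_{y'}}|^2$ equals $1=(\sum_y c_y^2)^2$ only if every overlap has modulus $1$.

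\emph{The final claim.} Under the label condition, $|W_{x\to y}|=|\tau_y(W_{x\to y})|$ is the same for every $y$. The $|\calC|$ fibres partition $W$, so this common size is $|W|/|\calC|$, and every coefficient equals $|\calC|^{-1/2}$. Hence $\mathsf X$ holds $\sum_{y\in\calC}|\calC|^{-1/2}\ket y=\ket{\psi_{k(x)}}$ exactly.

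I expect no real obstacle here. The only delicate points are checking that the branches outside $\calC$ carry no weight, which holds because the image of $\varphi_x$ is $\calC$, and the purity argument in the converse direction.
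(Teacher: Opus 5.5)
Your proposal is correct and follows the paper's proof step for step. You group $\ket\Psi$ by fibres, read off the product criterion from the positive coefficients and non-negative amplitudes of the branch states $\ket{\tau_y(W_{x\to y})}$, and get the common fibre size $|W|/|\calC|$ from the bijectivity of the $\tau_y$. The only difference is that you justify the ``product iff all branch states agree up to phase'' step through purity, $\Tr\rho^2=1$, of the reduced state on $\mathsf W$, whereas the paper states this as a standard fact.
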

\begin{proof}
Group the terms of $\ket\Psi$ by the value $y=\varphi_x(w)$.  The labels with that value form the fibre $W_{x\to y}$, so
\[
    \ket\Psi \;=\; \sum_{y\in\calC}\;\sum_{w\in W_{x\to y}}|W|^{-1/2}\,\ket w\ket y \;=\; \sum_{y\in\calC}\sqrt{\frac{|W_{x\to y}|}{|W|}}\;\ket{W_{x\to y}}\ket y\,.
\]
In the branch of $y$, $\Gamma$ applies $\tau_y$, which maps the uniform superposition over $W_{x\to y}$ to the uniform superposition over its image:
\[
    \Gamma\ket\Psi \;=\; \sum_{y\in\calC}\sqrt{\frac{|W_{x\to y}|}{|W|}}\;\ket{\tau_y(W_{x\to y})}\ket y\,.
\]
A state $\sum_y c_y\ket{a_y}\ket y$ with orthonormal $\ket y$, coefficients $c_y>0$ and unit vectors $\ket{a_y}$ is a product state exactly when all the $\ket{a_y}$ are equal up to a phase.  Here $\ket{a_y} = \ket{\tau_y(W_{x\to y})}$ has non-negative real entries, so equality up to a phase is equality.  Two uniform superpositions $\ket A$ and $\ket{A'}$ over subsets of $W$ are equal exactly when $A = A'$, which is the label condition.  If it holds with common image $A$, then, since each $\tau_y$ is a bijection,
\[
    |W_{x\to y}| \;=\; |\tau_y(W_{x\to y})| \;=\; |A| \quad\text{for all } y\,, \qquad\text{so}\qquad |W_{x\to y}| = |W|/|\calC|\,,
\]
the last because the $|\calC|$ fibres partition $W$.  Every coefficient is therefore $|\calC|^{-1/2}$, and
\[
    \Gamma\ket\Psi \;=\; \ket A\otimes|\calC|^{-1/2}\sum_{y\in\calC}\ket y \;=\; \ket A\otimes\ket{\psi_{k(x)}}\,.
\]
\end{proof}

In the graph example the fibre over $y$ is $\sigma_y\mathrm{Aut}(x)$, and $\tau_y$ is multiplication on the left by $\sigma_y^{-1}$, which sends it to $\mathrm{Aut}(x)$ for every $y$, so the label condition holds.

\begin{definition}[Index set, section, coset parameterization]\label{def:index-set}
Fix a seed $x$ and write $\calC:=\calC_{k(x)}$.
\begin{enumerate}[label=\textup{(\roman*)},leftmargin=*,nosep]
\item An \emph{index set for $\calC$ at $x$} is a set $W$ of bit strings of polynomial length, the \emph{labels}, with a polynomial-time map $\varphi_x:W\to X$ such that $\varphi_x(W)=\calC$.  The label $w$ names the object $\varphi_x(w)$.
\item A \emph{section} of $(W,\varphi_x)$ is a map $u:\calC\to W$ with $\varphi_x(u(y))=y$ for all $y\in\calC$.
\item $(W,\varphi_x)$ is a \emph{coset parameterization} if $W$ is a finite group, with identity $1_W$, that acts on $\calC$ with $\varphi_x(w)=w\cdot x$ for all $w\in W$, and if the following are computable in polynomial time: multiplication and inversion in $W$, the preparation of $\ket W$ to within negligible error, and the action of $W$ on $\calC$.
\item A coset parameterization \emph{acts through the moves} if for each $s\in S$ there is $g_s\in W$, computable in polynomial time from $s$, with $\varphi_x(g_sw)=s(\varphi_x(w))$ for all $w\in W$.
\end{enumerate}
\end{definition}

A section chooses one label in each fibre.  One always exists.  The results below depend on whether one is computable in polynomial time.

\begin{proposition}[Orbit state preparation from a section]\label{prop:index-set}
Let $G$ be a family of seeds, and for each $x\in G_n$ let $(W_x,\varphi_x)$ be an index set for $\calC_{k(x)}$ at $x$ with a section $u_x$, where $\varphi_x(w)$ and $u_x(y)$ are computable in time $\poly(n)$ from $(x,w)$ and $(x,y)$.  Write
\[
    \mathrm{im}\,u_x \;:=\; \{u_x(y) : y\in\calC_{k(x)}\}\;\subseteq\;W_x
\]
for the image of the section.  Then orbit state preparation is solvable on $G$ if and only if $\ket{\mathrm{im}\,u_x}$ can be prepared from $x$, for every $x\in G_n$.
\end{proposition}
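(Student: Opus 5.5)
The plan is to prove both directions by exhibiting a reversible, polynomial-time map between $\ket{\mathrm{im}\,u_x}$ and $\ket{\psi_{k(x)}}$. The key observation is that $u_x$ restricted to $\calC:=\calC_{k(x)}$ is injective, since $\varphi_x(u_x(y))=y$. Hence $\varphi_x$ restricted to $\mathrm{im}\,u_x$ is a bijection onto $\calC$ with inverse $u_x$. Both maps are computable in time $\poly(n)$ from $x$ and their argument, so they give an efficient relabelling between the two supports. The uniform superposition over one support is mapped to the uniform superposition over the other, because the bijection preserves cardinality and hence amplitudes.

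\emph{($\Leftarrow$)} Suppose a circuit prepares $\ket{\mathrm{im}\,u_x}\otimes\ket{a_x}$ from $x$ up to negligible error. I compute $\varphi_x$ coherently into a fresh register, which gives $|\calC|^{-1/2}\sum_{y\in\calC}\ket{u_x(y)}\ket{y}$. Then, controlled on $y$, I XOR $u_x(y)$ into the label register. This returns the label register to $\ket0$, and any garbage of the reversible computations of $\varphi_x$ and $u_x$ is uncomputed in the usual way. The result is $\ket0\otimes\ket{\psi_{k(x)}}\otimes\ket{a_x}$, which after a swap is a solver. Unitaries preserve norm, so the error $\mu$ carries through unchanged.

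\emph{($\Rightarrow$)} Run the same steps in the opposite order. From a solver output $\ket{\psi_{k(x)}}\ket{a_x}$, compute $u_x(y)$ into a fresh register, which gives $|\calC|^{-1/2}\sum_y\ket{y}\ket{u_x(y)}$. Then XOR $\varphi_x(u_x(y))=y$ into the element register to erase it. The label register is left holding $\ket{\mathrm{im}\,u_x}$, unentangled from the ancilla.

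One point needs care. $u_x$ is only specified on $\calC$, and $\varphi_x$ only on $W_x$, so the coherent circuits must be defined on all inputs. This matters only on the off-support error part, which has norm at most $\mu$. I will define the circuits arbitrarily but reversibly on other inputs, for example by flagging inputs not in $\calC$ or not in $W_x$. Then the main components are handled exactly and the error stays negligible. I expect no serious obstacle; the only step needing care is the garbage-free coherent evaluation of $u_x$ and $\varphi_x$, and the standard compute–copy–uncompute handles it.
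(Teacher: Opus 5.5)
Your proposal is correct and follows the paper's proof essentially step for step. Both use that $\varphi_x$ restricted to $\mathrm{im}\,u_x$ is a bijection onto $\calC_{k(x)}$ with inverse $u_x$, and both apply the two compute--XOR--uncompute chains (compute $\varphi_x$ and erase the label with $u_x$; conversely compute $u_x$ and erase the element with $\varphi_x$), with the error $\mu$ preserved by unitarity.

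One small caution: the flagging you suggest for off-support inputs is unnecessary and may not be efficient, since membership in $\calC_{k(x)}$ need not be decidable in polynomial time. The XOR form $\ket{y}\ket{z}\mapsto\ket{y}\ket{z\oplus u_x(y)}$ of any time-bounded circuit for $u_x$ (and likewise for $\varphi_x$) is already unitary on all inputs, so no extension is needed.
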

\begin{proof}
Fix $x \in G_n$, and write $u = u_x$ and $\calC := \calC_{k(x)}$.  In the chains below, ``undo'' runs a computation backwards, clearing the register it wrote.
The map $\varphi_x$ restricts to a bijection from $\mathrm{im}\,u$ onto $\calC$, with inverse $u$.  It is onto because $\varphi_x(u(y)) = y$, and one-to-one because $\varphi_x(u(y)) = \varphi_x(u(y'))$ gives $y = y'$.  In particular $|\mathrm{im}\,u| = |\calC|$, and $u(\varphi_x(w)) = w$ for every $w\in\mathrm{im}\,u$.  With two fresh registers, the orbit state is prepared from $\ket{\mathrm{im}\,u}$ by
\begin{align*}
    |\calC|^{-1/2}\sum_{w\in\mathrm{im}\,u}\ket w\ket0\ket0
    &\;\overset{\varphi_x}{\longmapsto}\; |\calC|^{-1/2}\sum_{w\in\mathrm{im}\,u}\ket w\ket{\varphi_x(w)}\ket0\\
    &\;\overset{u}{\longmapsto}\; |\calC|^{-1/2}\sum_{w\in\mathrm{im}\,u}\ket w\ket{\varphi_x(w)}\ket w\\
    &\;\overset{\oplus}{\longmapsto}\; |\calC|^{-1/2}\sum_{w\in\mathrm{im}\,u}\ket 0\ket{\varphi_x(w)}\ket w\\
    &\;\overset{\text{undo }u}{\longmapsto}\; |\calC|^{-1/2}\sum_{w\in\mathrm{im}\,u}\ket 0\ket{\varphi_x(w)}\ket 0
    \;=\; \ket0\ket{\psi_{k(x)}}\ket0\,.
\end{align*}
The step labelled $u$ writes $u(\varphi_x(w)) = w$, the step labelled $\oplus$ adds the third register into the first, and the last equality holds because $w\mapsto\varphi_x(w)$ is a bijection from $\mathrm{im}\,u$ onto $\calC$.  Conversely, the orbit state gives $\ket{\mathrm{im}\,u}$:
\begin{align*}
    |\calC|^{-1/2}\sum_{y\in\calC}\ket y\ket0\ket0
    &\;\overset{u}{\longmapsto}\; |\calC|^{-1/2}\sum_{y\in\calC}\ket y\ket{u(y)}\ket0\\
    &\;\overset{\varphi_x}{\longmapsto}\; |\calC|^{-1/2}\sum_{y\in\calC}\ket y\ket{u(y)}\ket y\\
    &\;\overset{\oplus}{\longmapsto}\; |\calC|^{-1/2}\sum_{y\in\calC}\ket 0\ket{u(y)}\ket y\\
    &\;\overset{\text{undo }\varphi_x}{\longmapsto}\; |\calC|^{-1/2}\sum_{y\in\calC}\ket 0\ket{u(y)}\ket 0
    \;=\; \ket0\ket{\mathrm{im}\,u}\ket0\,,
\end{align*}
since $y\mapsto u(y)$ is a bijection from $\calC$ onto $\mathrm{im}\,u$.  Each chain is a fixed unitary acting on the output register and two fresh registers.  So if a circuit prepares one of the two states to within $\mu$, with ancilla state $\ket{a_x}$, then applying the chain prepares the other to within $\mu$, with ancilla state $\ket{0}\ket0\ket{a_x}$.
\end{proof}

The proposition needs no relabellings, because $\varphi_x$ is injective on $\mathrm{im}\,u_x$, but it requires preparing $\ket{\mathrm{im}\,u_x}$.

\subsection{Orbits parameterized by a group}\label{sec:group}

In the graph example, $W = \Sym(v)$ with $\varphi_x(\pi) = \pi\cdot x$ is a coset parameterization acting through the moves, with $g_s$ the transposition that the move $s$ renames by.  Canonical labelling supplies the section $u(y) = \sigma_y = c(y)^{-1}c(x)$ of step~3.  The next lemma records the two facts about coset parameterizations that this section uses.

\begin{lemma}[Stabilizer cosets and path products]\label{lem:coset-fibres}
Let $(W,\varphi_x)$ be a coset parameterization of $\calC := \calC_{k(x)}$, and write $\mathrm{Stab}(x) := W_{x\to x}$.
\begin{enumerate}[label=\textup{(\roman*)},leftmargin=*,nosep]
\item $\mathrm{Stab}(x)$ is a subgroup of $W$, and the fibre over each $y\in\calC$ is the left coset $w\,\mathrm{Stab}(x)$ for any $w$ in it.  In particular every fibre has $|\mathrm{Stab}(x)|$ elements, and $|W| = |\calC|\,|\mathrm{Stab}(x)|$.
\item If $(W,\varphi_x)$ acts through the moves and $s_\ell\cdots s_1$ is a path from $x$ to $y$, then $g_{s_\ell}\cdots g_{s_1}\in W_{x\to y}$.  So a path-finder that succeeds on every target supplies a section.
\end{enumerate}
\end{lemma}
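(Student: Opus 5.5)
Both parts follow from the group-action axioms in \Cref{def:index-set}(iii): $\varphi_x(w) = w\cdot x$, with $(ww')\cdot z = w\cdot(w'\cdot z)$ and $1_W\cdot z = z$.

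\emph{Part~(i).}  First, $\mathrm{Stab}(x) = \{w : w\cdot x = x\}$ is a subgroup.  It contains $1_W$.  It is closed under products, since $(ww')\cdot x = w\cdot(w'\cdot x) = w\cdot x = x$.  It is closed under inverses, since applying $w^{-1}$ to both sides of $w\cdot x = x$ gives $x = w^{-1}\cdot x$.

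Next, the fibres.  Fix $y\in\calC$ and any $w\in W_{x\to y}$.  I would repeat the chain of equivalences in the graph example that led to equation~\eqref{eq:fibre-coset}.  For $\pi\in W$,
\[
\pi\cdot x = y \iff \pi\cdot x = w\cdot x \iff (w^{-1}\pi)\cdot x = x \iff w^{-1}\pi\in\mathrm{Stab}(x).
\]
So $W_{x\to y} = w\,\mathrm{Stab}(x)$.  Each fibre is nonempty because $\varphi_x(W)=\calC$.  Left multiplication by $w$ is a bijection, so every fibre has $|\mathrm{Stab}(x)|$ elements.  The fibres over the $|\calC|$ elements of $\calC$ partition $W$, which gives $|W| = |\calC|\,|\mathrm{Stab}(x)|$.

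\emph{Part~(ii).}  I would prove by induction on $j$ that
\[
\varphi_x(g_{s_j}\cdots g_{s_1}) = s_j\cdots s_1\cdot x .
\]
The base case $j=0$ is the empty product: $\varphi_x(1_W) = 1_W\cdot x = x$.  For the step, apply the acting-through-moves identity $\varphi_x(g_s w) = s(\varphi_x(w))$ with $w = g_{s_{j}}\cdots g_{s_1}$ and $s = s_{j+1}$.  At $j=\ell$ the right side is $y$, so the product $g_{s_\ell}\cdots g_{s_1}$ lies in $W_{x\to y}$.

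For the section, suppose a path-finder returns a path from $x$ to $y$ for every $y\in\calC$.  Define $u(y)$ as the product of the $g_s$ along the returned path.  This satisfies $\varphi_x(u(y)) = y$, so $u$ is a section.  It is computable in polynomial time when the path-finder is deterministic and runs in polynomial time, because each $g_s$ and each multiplication in $W$ is polynomial-time.

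\emph{Main obstacle.}  The only delicate point is the fine print in part~(ii).  The section must be a function, so the path-finder has to be deterministic, or be made deterministic by fixing its coins as a function of $(x,y)$.  The path must also have polynomial length, which holds automatically because the path-finder runs in polynomial time.
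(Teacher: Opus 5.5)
Your proof is correct and follows the paper's argument: part~(i) identifies each fibre with a left coset via the same chain of equivalences $\pi\cdot x = w\cdot x \iff w^{-1}\pi\in\mathrm{Stab}(x)$, where you check the subgroup axioms explicitly and the paper simply notes that $\mathrm{Stab}(x)$ is a stabilizer, and part~(ii) is the same induction on path length using $\varphi_x(g_s w) = s(\varphi_x(w))$. Your remarks on determinism and polynomial path length are not needed for the lemma, since a section need only exist as a map, but they are consistent with the deterministic path-finder the paper later assumes when it applies this fact.
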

\begin{proof}
(i) $\mathrm{Stab}(x)$ is the stabilizer of $x$ under the action, hence a subgroup.  For $w,w'\in W$,
\begin{align*}
    \varphi_x(w) = \varphi_x(w')
    &\iff w\cdot x = w'\cdot x && \text{by the definition of } \varphi_x\\
    &\iff (w^{-1}w')\cdot x = x && \text{acting on both sides by } w^{-1}\\
    &\iff w^{-1}w'\in\mathrm{Stab}(x) && \text{by the definition of } \mathrm{Stab}(x)\,,
\end{align*}
so the fibre containing $w$ is $w\,\mathrm{Stab}(x)$.  The $|\calC|$ fibres partition $W$ and each has $|\mathrm{Stab}(x)|$ elements.

(ii) By induction on $\ell$.  For $\ell=0$, $\varphi_x(1_W) = x$.  If $\varphi_x(g_{s_{\ell-1}}\cdots g_{s_1}) = s_{\ell-1}\cdots s_1(x)$, then by \Cref{def:index-set}(iv),
\[
    \varphi_x(g_{s_\ell}g_{s_{\ell-1}}\cdots g_{s_1}) \;=\; s_\ell\bigl(\varphi_x(g_{s_{\ell-1}}\cdots g_{s_1})\bigr) \;=\; s_\ell s_{\ell-1}\cdots s_1(x)\,,
\]
which is $y$ at the end of a path from $x$ to $y$.  A section is obtained by mapping each $y$ to the product along the path that the path-finder returns.
\end{proof}

In the graph example $\mathrm{Stab}(x) = \mathrm{Aut}(x)$, and part~(i) of the lemma is equation~\eqref{eq:fibre-coset}.

\begin{proposition}[Orbit state preparation from a coset parameterization]\label{prop:coset-osp}
Let $G$ be a family of seeds, and for each $x\in G_n$ let $(W_x,\varphi_x)$ be a coset parameterization of $\calC_{k(x)}$ at $x$ with a section $u_x$ computable in time $\poly(n)$ from $(x,y)$.  Then a circuit maps $\ket{W_x}$ to $\ket{\mathrm{Stab}(x)}\otimes\ket{\psi_{k(x)}}$ exactly, using one evaluation of $\varphi_x$, two evaluations of $u_x$ and $O(1)$ group operations.  In particular, orbit state preparation is solvable on $G$.
\end{proposition}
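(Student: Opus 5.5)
We generalize the three steps of the graph attack in \Cref{sec:coset}, using \Cref{lem:uncompute} with relabellings $\tau_y(w) := u_x(y)^{-1}w$ and \Cref{lem:coset-fibres}(i) for the fibre structure. Fix $x\in G_n$ and write $W := W_x$, $u := u_x$, $\calC := \calC_{k(x)}$.

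First, starting from $\ket W\ket 0$, one evaluation of $\varphi_x$ gives $\ket\Psi = |W|^{-1/2}\sum_{w}\ket w\ket{w\cdot x}$, the state of \Cref{lem:uncompute}. Next we implement the controlled relabelling $\Gamma$. Controlled on $\mathsf X$ holding $y$, we compute $u(y)$ into a fresh register, which is the first evaluation of $u$. We then compute its inverse and multiply $\mathsf W$ on the left, using $O(1)$ group operations. This map $w\mapsto u(y)^{-1}w$ is a bijection of $W$, so it is reversible in place: its inverse is left multiplication by $u(y)$, and the ancilla holding $u(y)$ is still available. Finally we clear the ancilla by evaluating $u(y)$ a second time and XOR-ing it out. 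This accounts for one evaluation of $\varphi_x$, two of $u$ and $O(1)$ group operations. Every branch of $\mathsf X$ lies in $\calC$, so the clause of $\Gamma$ concerning elements outside $\calC$ never arises.

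It remains to check the label condition \eqref{eq:label-condition}. By \Cref{lem:coset-fibres}(i), the fibre $W_{x\to y}$ is the left coset $w\,\mathrm{Stab}(x)$ for any $w$ in it. Since $\varphi_x(u(y)) = y$, the label $u(y)$ lies in that fibre, so $W_{x\to y} = u(y)\,\mathrm{Stab}(x)$ and $\tau_y(W_{x\to y}) = \mathrm{Stab}(x)$ for every $y$. The condition therefore holds with common image $A=\mathrm{Stab}(x)$. \Cref{lem:uncompute} then gives $\Gamma\ket\Psi = \ket{\mathrm{Stab}(x)}\otimes\ket{\psi_{k(x)}}$ exactly.

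For the consequence, we prepare $\ket{W_x}$ to within negligible error $\mu$, as \Cref{def:index-set}(iii) allows. The fixed unitary above preserves distance, so the output is within $\mu$ of $\ket{\mathrm{Stab}(x)}\otimes\ket{\psi_{k(x)}}$. Swapping registers, with ancilla $\ket{a_x}=\ket{\mathrm{Stab}(x)}$, meets \eqref{eq:osp-condition}. The only delicate point is that the in-place multiplication is unitary and that the $u(y)$ ancilla is uncomputed consistently. Both follow because left multiplication by a group element is a permutation of $W$, and $u(y)$ depends only on $\mathsf X$, which the relabelling does not touch.
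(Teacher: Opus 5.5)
Your proposal is correct and essentially matches the paper's own proof. It uses the same chain ($\varphi_x$, compute $u(y)$, left-multiply by $u(y)^{-1}$, uncompute $u(y)$), the same fibre structure $W_{x\to y}=u(y)\,\mathrm{Stab}(x)$ from \Cref{lem:coset-fibres}(i), and the same handling of the negligible preparation error of $\ket{W_x}$ through a fixed unitary. The only difference is cosmetic: you get the product state from the label condition of \Cref{lem:uncompute}, whereas the paper computes it directly from the unique decomposition $w=u(y)\alpha$ and $|W_x|=|\calC|\,|\mathrm{Stab}(x)|$, and then remarks that the chain is exactly $\Gamma$ with these relabellings.
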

\begin{proof}
Fix $x \in G_n$, and write $u = u_x$, $\calC := \calC_{k(x)}$ and $\mathcal S := \mathrm{Stab}(x)$.  As in the proof of \Cref{prop:index-set}, ``undo'' runs a computation backwards, clearing the register it wrote.
By \Cref{lem:coset-fibres}(i), the fibre over $y$ is $u(y)\mathcal S$, because $u(y)$ lies in it, and $|W_x| = |\calC|\,|\mathcal S|$.  So every $w\in W_x$ is $w = u(y)\alpha$ for exactly one pair $y\in\calC$, $\alpha\in\mathcal S$.  With two fresh registers,
\begin{align*}
    |W_x|^{-1/2}\sum_{y\in\calC}\sum_{\alpha\in\mathcal S}\ket{u(y)\alpha}\ket0\ket0
    &\;\overset{\varphi_x}{\longmapsto}\; |W_x|^{-1/2}\sum_{y,\alpha}\ket{u(y)\alpha}\ket y\ket0\\
    &\;\overset{u}{\longmapsto}\; |W_x|^{-1/2}\sum_{y,\alpha}\ket{u(y)\alpha}\ket y\ket{u(y)}\\
    &\;\overset{\mathrm{mult}}{\longmapsto}\; |W_x|^{-1/2}\sum_{y,\alpha}\ket{\alpha}\ket y\ket{u(y)}\\
    &\;\overset{\text{undo }u}{\longmapsto}\; |W_x|^{-1/2}\sum_{y,\alpha}\ket{\alpha}\ket y\ket0
    \;=\; \ket{\mathcal S}\otimes\ket{\psi_{k(x)}}\otimes\ket0\,,
\end{align*}
where the step labelled $\mathrm{mult}$ multiplies the first register on the left by the inverse of the third, and the last equality uses $|W_x| = |\calC|\,|\mathcal S|$.  The chain uses one evaluation of $\varphi_x$, two of $u$ and $O(1)$ group operations.  The first step produces the state $\ket\Psi$ of \Cref{lem:uncompute}, and the remaining three are its unitary $\Gamma$, with the relabellings $\tau_y(w) = u(y)^{-1}w$, which send every fibre $u(y)\mathcal S$ to $\mathcal S$.  The ancilla state $\ket{\mathcal S}$ depends on $x$ only, and $\ket{W_x}$ is prepared to within negligible error (\Cref{def:index-set}), so orbit state preparation is solvable on $G$.
\end{proof}

Unlike \Cref{prop:index-set}, this superposes over all of $W_x$, which is easy, and then needs relabellings that send every fibre to $\mathrm{Stab}(x)$.  Each fibre is a left coset of $\mathrm{Stab}(x)$ by \Cref{lem:coset-fibres}(i), and the section supplies the relabellings.

When a coset parameterization acts through the moves, a path-finder supplies the section that \Cref{prop:coset-osp} needs (\Cref{lem:coset-fibres}(ii)).  The theorem below asks only for a path-finder that succeeds on a random target, which is the form in which path-finding is assumed hard.  Its proof uses the group to reach every target $y$ from a random one.  For uniformly random $w\in W$, the target $w\cdot y$ is uniform on the orbit.  A path from $x$ to $w\cdot y$ multiplies out to some $h\in W$ with $h\cdot x = w\cdot y$ (\Cref{lem:coset-fibres}(ii)), and then $u := w^{-1}h$ satisfies $u\cdot x = w^{-1}\cdot(w\cdot y) = y$.

\begin{theorem}[Necessity of path-finding hardness under a coset parameterization]\label{thm:pf-necessary}
Let $G$ be a family of seeds such that for each $x\in G_n$ a coset parameterization $(W_x,\varphi_x)$ of $\calC_{k(x)}$ acting through the moves is computable in polynomial time from $x$.
\begin{enumerate}[label=\textup{(\roman*)},leftmargin=*,nosep]
\item If a quantum polynomial-time algorithm $\calA$ and a polynomial $q$ satisfy, for every $x\in G_n$,
$\Pr_{y}\bigl[\calA(x,y)\text{ outputs a path from }x\text{ to }y\bigr]\ge1/q(n)$, with $y$ uniform on $\calC_{k(x)}$, then orbit state preparation is solvable on $G$.
\item If the scheme is secure, then for every quantum polynomial-time $\calA$,
\[
\Pr\bigl[x\in G_n\text{ and }\calA(x,y)\text{ outputs a path from }x\text{ to }y\bigr]=\negl(n)\,,
\]
where $(p,x)\leftarrow\mathcal M_n$ and $y$ is uniform on $\calC_{k(x)}$.
\end{enumerate}
\end{theorem}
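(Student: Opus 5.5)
For part~(i), the plan is to build from $\calA$ a section of $(W_x,\varphi_x)$ that is computable with overwhelming probability, and then invoke \Cref{prop:coset-osp}. The random self-reduction sketched before the theorem does the work. Given $y\in\calC_{k(x)}$, draw $w$ uniformly from $W_x$; preparing $\ket{W_x}$ and measuring does this to within negligible error. Compute $y' := w\cdot y$, which is uniform on $\calC_{k(x)}$ by the transitivity of the action and \Cref{lem:coset-fibres}(i): fibres have equal size, so $w\cdot y$ is uniform when $w$ is. Run $\calA(x,y')$, and check classically whether the output word is a path from $x$ to $y'$. If it is, multiply out $h:=g_{s_\ell}\cdots g_{s_1}$, which lies in $W_{x\to y'}$ by \Cref{lem:coset-fibres}(ii), and return $w^{-1}h$. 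Each trial succeeds with probability at least $1/q(n)$ independently of $y$, so $nq(n)$ independent trials fail on every attempt with probability at most $e^{-n}$. The output is $\bot$ in that case, and otherwise always lies in $W_{x\to y}$.

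The obstacle is that \Cref{prop:coset-osp} wants a deterministic section, while this procedure is randomized and its output label varies from run to run. The key point to exploit is that the proof of \Cref{prop:coset-osp} never needs a canonical label: the relabelling $\tau_y(w)=u(y)^{-1}w$ sends the fibre to $\mathrm{Stab}(x)$ for any choice of $u(y)$ in the fibre. I would therefore run the procedure coherently, controlled on $y$, with its random coins in ancilla registers. Call the result $\ket{y}\sum_r\beta_{y,r}\ket{r}\ket{u_r(y)}$ plus garbage, where every successful branch $r$ has $u_r(y)\in W_{x\to y}$. Multiplying the label register of $\ket\Psi$ by $u_r(y)^{-1}$ maps each branch to $\ket{\mathrm{Stab}(x)}\ket y\ket{r}\ket{u_r(y)}$. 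Running the coherent procedure backwards would clear $r$ and $u_r(y)$ if the label register were untouched. Since the label register has become $\ket{\mathrm{Stab}(x)}$ in every branch, it is in a product state with everything else, so the uncomputation acts only on the registers holding $y$, the coins and the garbage, and restores them exactly.

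The remaining error comes only from the failure branches, of weight at most $e^{-n}$, and from the negligible errors in preparing $\ket{W_x}$. So the final state is negligibly close to $\ket{\mathrm{Stab}(x)}\otimes\ket{\psi_{k(x)}}\otimes\ket{0}$, which solves orbit state preparation on $G$. A cleaner alternative is to use the coins' amplitudes to apply \Cref{lem:uncompute} directly. I would first check that the ``controlled-on-$y$ then uncompute'' step is consistent, and I expect this to be the main technical point.

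For part~(ii), I would argue by contraposition. Suppose the probability is at least $1/q(n)$ for infinitely many $n$. By averaging, the set $G'_n$ of seeds $x\in G_n$ on which $\calA$ succeeds with probability at least $1/(2q(n))$ over $y$ has minting weight at least $1/(2q(n))$. Applying part~(i) to $G'$ gives a solver there, with an explicit bound on the failure probability; membership in $G'$ need not be decided, because the coherent procedure is well defined for every $x$ and success is verifiable. The forger measures its banknote to get $x$, and with probability at least $1/(2q)$ it has $x\in G'_n$. It then runs the solver twice and obtains two copies of $\ket{\psi_{k(x)}}$, each negligibly close to an accepting state at $p$. Because $I$ is constant on orbits, each copy passes $\Ver(p,\cdot)$ except with negligible probability, by \Cref{lem:spectral-overlap} and \Cref{lem:eigenspace}. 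The forger therefore wins with non-negligible probability, contradicting security.
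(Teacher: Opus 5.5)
Your proposal is correct and follows the paper's proof essentially step for step. In part~(i) you run the random self-reduction $u=w^{-1}h$ coherently as a unitary controlled on $y$, multiply the label register by $u^{-1}$ (which sends the fibre $W_{x\to y}$ to $\mathrm{Stab}(x)$ whichever $u$ a branch produced), and uncompute, with error only from the branch where all trials fail; in part~(ii) you average over the seeds with success at least $1/(2q(n))$ and let a measure-first forger run the solver twice. The one detail the paper spells out that you leave implicit is the error in preparing $\ket{W_x}$, which it handles by replacing each of the $2N+1$ uses of the approximate preparer with an exact one obtained by a small rotation, at total cost $(2N+1)\mu=\negl(n)$.
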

\begin{proof}
(i) Fix $x\in G_n$, and write $\calC=\calC_{k(x)}$, $W=W_x$, $\mathcal S=\mathrm{Stab}(x)$ and $N:=\lceil2n\,q(n)\rceil$.  For a word $\omega = s_\ell\cdots s_1$ write $h(\omega) := g_{s_\ell}\cdots g_{s_1}\in W$, and write $\calA(x,z;c)$ for the word that $\calA$ outputs on $(x,z)$ with coins $c\in\{0,1\}^m$.  We first assume that $\ket W$ is prepared exactly, and account for the preparation error at the end.

\emph{A section from the path-finder.}  For $y\in\calC$, $w\in W$ and coins $c$, let
\[
    h_{w,c} \;:=\; h\bigl(\calA(x,\,w\cdot y;\,c)\bigr)\,,\qquad b_{w,c} \;:=\; \bigl[\,h_{w,c}\cdot x = w\cdot y\,\bigr]\in\{0,1\}\,.
\]
If $b_{w,c}=1$, then
\[
    \bigl(w^{-1}h_{w,c}\bigr)\cdot x \;=\; w^{-1}\cdot(h_{w,c}\cdot x) \;=\; w^{-1}\cdot(w\cdot y) \;=\; y\,,\qquad\text{so}\qquad w^{-1}h_{w,c}\in W_{x\to y}\,.
\]
The same group acting at $y$, $w\mapsto w\cdot y$, is a coset parameterization of $\calC$ at $y$, so by \Cref{lem:coset-fibres}(i) its fibres all have $|W|/|\calC|$ elements, and it takes the uniform distribution on $W$ to the uniform distribution on $\calC$.  And by \Cref{lem:coset-fibres}(ii), $b_{w,c}=1$ whenever $\calA(x,w\cdot y;c)$ is a path.  So
\[
    \Pr_{w,c}\bigl[b_{w,c}=1\bigr] \;\ge\; \Pr_{z\leftarrow\calC,\,c}\bigl[\calA(x,z;c)\text{ is a path from }x\text{ to }z\bigr] \;\ge\; \frac1{q(n)}\,.
\]
Let $\Sec$ be the unitary that, controlled on $\ket y$, runs $N$ trials, each preparing $\ket W\otimes2^{-m/2}\sum_c\ket c$ in fresh registers and computing $h_{w,c}$ and $b_{w,c}$ coherently.  It then writes into a register $\mathsf U$ the element $u = w_j^{-1}h_{w_j,c_j}$ for the least $j$ with $b_{w_j,c_j}=1$, or $u=1_W$ if there is none.  Split its output as
\[
    \Sec\ket y\ket0 \;=\; \ket y\otimes\bigl(\ket{\mathrm{good}_y}+\ket{\mathrm{bad}_y}\bigr)\,,
\]
where $\ket{\mathrm{bad}_y}$ is the component on which every $b_{w_j,c_j}$ is $0$.  On $\ket{\mathrm{good}_y}$ the register $\mathsf U$ holds an element of $W_{x\to y}$.  The trials are independent, so
\[
    \|\ket{\mathrm{bad}_y}\|^2 \;=\; \prod_{j=1}^N\Pr_{w_j,c_j}\bigl[b_{w_j,c_j}=0\bigr] \;\le\; \Bigl(1-\frac1{q(n)}\Bigr)^N \;\le\; e^{-N/q(n)} \;\le\; e^{-2n}\,.
\]

\emph{The circuit.}  Prepare $\ket W$ and compute $\varphi_x$ into the element register.  By \Cref{lem:coset-fibres}(i), the fibres have size $|\mathcal S|$ and $|W| = |\calC|\,|\mathcal S|$, so
\[
    |W|^{-1/2}\sum_{w\in W}\ket w\ket{\varphi_x(w)} \;=\; |\calC|^{-1/2}\sum_{y\in\calC}\ket{W_{x\to y}}\ket y\,.
\]
Then apply $\Sec$, the multiplication $M:\ket w\ket u\mapsto\ket{u^{-1}w}\ket u$ with $u$ read from $\mathsf U$, and $\Sec^\dagger$:
\begin{align*}
    |\calC|^{-1/2}\sum_{y}\ket{W_{x\to y}}\ket y\ket0
    &\;\overset{\Sec}{\longmapsto}\; |\calC|^{-1/2}\sum_{y}\ket{W_{x\to y}}\ket y\bigl(\ket{\mathrm{good}_y}+\ket{\mathrm{bad}_y}\bigr)\\
    &\;\overset{M}{\longmapsto}\; |\calC|^{-1/2}\sum_{y}\Bigl(\ket{\mathcal S}\ket y\bigl(\ket{\mathrm{good}_y}+\ket{\mathrm{bad}_y}\bigr) + \ket{e_y}\Bigr)\\
    &\;\overset{\Sec^\dagger}{\longmapsto}\; \ket{\mathcal S}\otimes\ket{\psi_{k(x)}}\otimes\ket0 \;+\; |\calC|^{-1/2}\sum_{y}\Sec^\dagger\ket{e_y}\,.
\end{align*}
The second step uses that, for every $u\in W_{x\to y}$, \Cref{lem:coset-fibres}(i) gives $W_{x\to y} = u\,\mathcal S$, so
\[
    M\bigl(\ket{W_{x\to y}}\ket u\bigr) \;=\; \ket{u^{-1}W_{x\to y}}\ket u \;=\; \ket{\mathcal S}\ket u\,,
\]
where $\ket{e_y} := M\bigl(\ket{W_{x\to y}}\ket y\ket{\mathrm{bad}_y}\bigr) - \ket{\mathcal S}\ket y\ket{\mathrm{bad}_y}$ is the error from the bad branch, with $\|\ket{e_y}\|\le2\|\ket{\mathrm{bad}_y}\|\le2e^{-n}$.  The third step uses $\Sec^\dagger\bigl(\ket y(\ket{\mathrm{good}_y}+\ket{\mathrm{bad}_y})\bigr) = \ket y\ket0$.  Each $\ket{e_y}$ keeps $\ket y$ in the element register, and $\Sec^\dagger$ is controlled on it, so the terms $\Sec^\dagger\ket{e_y}$ are orthogonal for different $y$, and
\[
    \Bigl\||\calC|^{-1/2}\sum_{y}\Sec^\dagger\ket{e_y}\Bigr\| \;=\; \Bigl(|\calC|^{-1}\sum_{y}\|\ket{e_y}\|^2\Bigr)^{1/2} \;\le\; 2e^{-n}\,.
\]

\emph{The preparation error.}  Let $P$ prepare $\ket W$ with $\|P\ket0-\ket W\|\le\mu$, choosing the global phase of $\ket W$ so that $\bra WP\ket0\ge0$ (\Cref{def:index-set}).  Let $R$ be the rotation in the plane of $P\ket0$ and $\ket W$ that takes $P\ket0$ to $\ket W$.  Then $RP$ prepares $\ket W$ exactly, and
\[
    \|RP-P\| \;=\; \|(RP)^\dagger-P^\dagger\| \;=\; \|R-\openone\| \;=\; \|P\ket0-\ket W\| \;\le\; \mu\,.
\]
The circuit uses $P$ or $P^\dagger$ $2N+1$ times: once at the start, once in each trial of $\Sec$, and once in each trial of $\Sec^\dagger$.  Replacing each use by its exact version therefore changes the final state by at most $(2N+1)\mu = \negl(n)$.  So the circuit solves orbit state preparation on $G$, with ancilla state $\ket{\mathcal S}\otimes\ket0$.

(ii) Suppose the probability in part~(ii) is at least $1/q(n)$ for infinitely many $n$.  For $x\in G_n$ let $s(x) := \Pr_y[\calA(x,y)\text{ outputs a path from }x\text{ to }y]$, and let $G'_n := \{x\in G_n : s(x)\ge1/(2q(n))\}$.  With $x$ uniform on $X_n$ (\Cref{def:measured-mint}), $\mathbb E_x[\mathbf 1_{G_n}(x)\,s(x)]\ge1/q(n)$ and $s(x)\le1$, so
\[
    \frac1{q(n)} \;\le\; \Pr_x\bigl[x\in G'_n\bigr]\cdot1+\frac1{2q(n)}\,,\qquad\text{so}\qquad \Pr_x\bigl[x\in G'_n\bigr]\;\ge\;\frac1{2q(n)}\,.
\]
A forger measures its banknote, obtaining $(p,x)\leftarrow\mathcal M_n$, runs the circuit of part~(i) with $2q$ in place of $q$ twice, and outputs the two element registers.  For $x\in G'_n$, each element register is within trace distance $\varepsilon_n := 2e^{-n}+\negl(n)$ of $\ket{\psi_{k(x)}}$.  It therefore has accepted weight at least $1-\varepsilon_n$, and passes with at least that probability (\Cref{lem:spectral-overlap}).  The two runs are independent, so
\[
    \Pr[\text{the forger wins}] \;\ge\; \Pr_x\bigl[x\in G'_n\bigr]\,(1-\varepsilon_n)^2 \;\ge\; \frac1{2q(n)}(1-\varepsilon_n)^2 \;\ge\; \frac1{3q(n)}
\]
for infinitely many $n$.
\end{proof}

\paragraph{Relation to the path-finding assumption.}
The path-finding assumption of Liu, Montgomery and Zhandry~\cite[Asm.~1]{LMZ22} implies the hardness in part~(ii), since an adversary for that assumption can mint its own $x$.  Their assumption is formally stronger, because there the adversary chooses $x$.

\begin{proposition}[Subexponential forgery of the class-group scheme]\label{prop:ms-forgery}
Assume the generalized Riemann hypothesis.  There is a quantum algorithm that runs in time $2^{o(n)}$ and, given a banknote $\ket{\$_p}$ of the scheme of Montgomery and Sharif~\cite{MS24}, outputs registers in states $\rho_1,\rho_2$ with
\[
    \Pr\bigl[\Ver(p,\rho_1)\text{ and }\Ver(p,\rho_2)\text{ both accept}\bigr] \;\ge\; 1-2^{-\Omega(n)}\,.
\]
\end{proposition}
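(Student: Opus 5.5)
The plan is to apply \Cref{prop:coset-osp}, with the class group as the coset parameterization, and to replace its polynomial-time section by a subexponential-time solver for the group-action discrete logarithm. In the scheme of Montgomery and Sharif the orbit of a curve $E$ is the set of curves reachable from it by isogenies of small prime degree. This set is the orbit of $E$ under the ideal class group $W = \mathrm{Cl}(\mathcal O)$ of the relevant imaginary quadratic order, acting by $[\mathfrak a]\cdot E = E/E[\mathfrak a]$. Each move acts as the class of a prime ideal of small norm, so the parameterization acts through the moves. The steps, in order, are as follows.

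\begin{enumerate}[label=\textup{(\arabic*)},leftmargin=*,nosep]
\item Check that $W$ meets \Cref{def:index-set}(iii) in time $2^{o(n)}$, even where it fails in polynomial time. I would compute the structure of $\mathrm{Cl}(\mathcal O)$ with the quantum (Shor-type) abelian group structure algorithm. This gives $W \cong \bigoplus_i \mathbb Z/d_i$ with explicit generators, so $\ket W$ is a product of uniform superpositions over cyclic groups. The group operations are then polynomial, and the action of an exponent vector is evaluated by decomposing it into a product of small-norm ideals and applying the corresponding isogenies. Under GRH this is the subexponential evaluation of Childs, Jao and Soukharev~\cite{CJS14}, together with the lattice reduction it requires.
\item Supply a section $u(y)$ with $u(y)\cdot x = y$, which is a group-action discrete logarithm (vectorization) instance. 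I would use the subexponential quantum algorithm for abelian hidden shift, as in \cite{CJS14} (Kuperberg/Regev style), which runs in time $2^{O(\sqrt{n})}$ or $L(1/2)$ under GRH. If it is randomized and succeeds only with some probability, I would wrap it exactly as $\Sec$ in the proof of \Cref{thm:pf-necessary}(i): independent trials with coherent checking of $h\cdot x = y$. This drives the bad branch below $2^{-\Omega(n)}$ and keeps everything reversible, so that it can be uncomputed.
\item Run the circuit of \Cref{prop:coset-osp}: prepare $\ket W$, compute $\varphi_x$, compute $u$, multiply, and uncompute $u$. The result is $\ket{\mathrm{Stab}(x)}\otimes\ket{\psi_{k(x)}}$, with error $2^{-\Omega(n)}$ from the section failures and from the approximate preparation of $\ket W$. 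The error is handled as in the error analysis of \Cref{thm:pf-necessary}. Repeat the circuit for a second copy.
\item The invariant (the point count) determines the isogeny class. Within it, the curves with the given endomorphism ring form one orbit, so the orbit state is, or is close to, the banknote's relevant component. By \Cref{lem:spectral-overlap} each copy passes with probability at least $1-2^{-\Omega(n)}$.
\end{enumerate}

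The main obstacle is matching the abstract hypotheses to the scheme. One issue is that the level set $P_p$ may contain several orbits, for example different endomorphism rings or twists. I would handle this by measuring the banknote to obtain the seed $x$, so the forger only needs the orbit of $x$; accepted weight at $p$ counts any orbit with invariant $p$, so $\ket{\psi_{k(x)}}$ suffices. A second issue is showing that the orbit under the moves equals the full class-group orbit. This requires the small primes to generate $\mathrm{Cl}(\mathcal O)$, which holds under GRH by Bach's bound provided the scheme's degree bound exceeds it. If it does not, I would instead take $W$ to be the subgroup generated by the move classes, which is still computable by the structure algorithm. Finally, exactness of $\ket W$ and of the action in superposition are the delicate technical points, and I expect to cite \cite{CJS14} for them rather than reprove them.
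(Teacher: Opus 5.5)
Your proposal is correct and takes essentially the same route as the paper's proof. Both measure the banknote, use $\mathrm{Cl}(\mathcal O)$, with its trivial stabilizers, as a coset parameterization, and run the chain of \Cref{prop:coset-osp} coherently with error $2^{-\Omega(n)}$, using the subexponential action evaluation and group-action discrete logarithm of \cite{CJS14} and an efficient preparation of $\ket W$; both then repeat the chain for a second copy and conclude with \Cref{lem:spectral-overlap}. Your accepted-weight treatment of possible co-orbits is harmless but not needed here: the square-free Frobenius discriminant fixes the endomorphism ring, twists have different point counts, and under GRH the small primes generate the class group, so the level set is a single orbit and the output is the banknote itself.
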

\begin{proof}
    The proof is deferred to \Cref{app:elliptic-curves}.
\end{proof}

\subsection{Orbits without a group}\label{sec:oracle}

For knot orbits no group is known that acts transitively and meets the requirements of \Cref{def:index-set}(iii), so \Cref{prop:coset-osp} does not apply.  A path-finder still supplies a section, and \Cref{prop:index-set} says what it achieves.

\begin{corollary}[Orbit state preparation from a path-finder]\label{cor:pf-section}
Let $G$ be a family of seeds, and let $\calA$ be a deterministic polynomial-time algorithm that outputs a path $\calA(x,y)$ from $x$ to $y$ for every $x\in G_n$ and $y\in\calC_{k(x)}$.  Write
\[
    \Ima_x \;:=\; \{\calA(x,y) : y\in\calC_{k(x)}\}
\]
for the set of paths it outputs from $x$.  Then orbit state preparation is solvable on $G$ if and only if $\ket{\Ima_x}$ can be prepared from $x$, for every $x\in G_n$.
\end{corollary}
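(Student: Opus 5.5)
The plan is to apply \Cref{prop:index-set} with the index set being the set of words, and the path-finder as the section. For each $x\in G_n$, let $W_x$ be the set of words of length at most $L(n)$, where $L(n)$ bounds the running time of $\calA$ and hence the length of any word it outputs. Words are encoded as bit strings of polynomial length. Take $\varphi_x(w) := w\cdot x$ when $w$ is a word, and $\varphi_x(w):=x$ otherwise, so that $\varphi_x$ is defined on every string of the encoding. Then $\varphi_x$ is computable in polynomial time from $(x,w)$ by \Cref{def:scheme}(ii): apply the at most $L$ moves in turn.

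Next I check that $(W_x,\varphi_x)$ is an index set for $\calC_{k(x)}$ at $x$ in the sense of \Cref{def:index-set}(i).
\begin{itemize}
\item The image of $\varphi_x$ lies in $\calC_{k(x)}$, since words move $x$ only within its orbit.
\item The image is all of $\calC_{k(x)}$, because $\calA$ outputs a path of length at most $L$ to every $y$ in the orbit.
\end{itemize}
Now define $u_x(y) := \calA(x,y)$. Because $\calA$ is deterministic, this is a well-defined function of $(x,y)$, computable in time $\poly(n)$. It satisfies $\varphi_x(u_x(y)) = \calA(x,y)\cdot x = y$ by the correctness hypothesis, so it is a section. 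Its image $\mathrm{im}\,u_x$ is exactly $\Ima_x$. \Cref{prop:index-set} then gives directly that orbit state preparation is solvable on $G$ if and only if $\ket{\Ima_x}$ can be prepared from $x$ for every $x\in G_n$.

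Determinism is the point to stress. If $\calA$ were randomized, the map $y\mapsto\calA(x,y)$ would not be a function, and the uncomputation steps in the chains of \Cref{prop:index-set} would leave garbage entangled with $y$. The chain in that proof relies on $u_x$ being a classical function implementable reversibly with clean uncomputation.

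The only real obstacle is bookkeeping about the label set. \Cref{def:index-set}(i) asks for labels of polynomial length, so I must fix the uniform length bound $L$ and a padding convention, for example a word of length below $L$ padded with a blank symbol. This makes the encoding injective, so that $\Ima_x$ is a set of distinct bit strings and $\ket{\Ima_x}$ is well defined. The parser must treat invalid strings harmlessly. With that settled, the rest is a direct invocation, and both directions of the equivalence are inherited from the two chains in the proof of \Cref{prop:index-set}, with the same error $\mu$.
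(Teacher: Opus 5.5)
Your proposal is correct and matches the paper's proof: it takes $W_x$ to be the words of length at most a polynomial bound, with $\varphi_x(\omega)=\omega\cdot x$, uses $u_x(y):=\calA(x,y)$ as a section whose image is $\Ima_x$, and then invokes \Cref{prop:index-set}. Your encoding and padding bookkeeping is harmless; it only spells out what the paper leaves implicit.
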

\begin{proof}
Let $\ell(n)$ bound the lengths of the paths that $\calA$ outputs, and let $W_x$ be the words of length at most $\ell(n)$, with $\varphi_x(\omega) := \omega\cdot x$.  Every $y\in\calC_{k(x)}$ is reached by the word $\calA(x,y)$, so $\varphi_x(W_x) = \calC_{k(x)}$ and $(W_x,\varphi_x)$ is an index set.  The map $u_x(y) := \calA(x,y)$ is a section with $\mathrm{im}\,u_x = \Ima_x$.  Both $\varphi_x$ and $u_x$ are computable in polynomial time, so \Cref{prop:index-set} applies.
\end{proof}

A path-finder thus turns orbit state preparation into preparing $\ket{\Ima_x}$, the uniform superposition over the paths it outputs from the seed, a problem that is equally hard.

\paragraph{Why this gives no forgery.}
Write $\calC:=\calC_{k(x)}$.  Running $\calA$ coherently and then uncomputing $y$ from the path, by applying it to $x$, acts as
\[
    \sum_{y\in\calC}\alpha_y\ket y\ket0 \;\longmapsto\; \sum_{y\in\calC}\alpha_y\ket y\ket{\calA(x,y)} \;\longmapsto\; \sum_{y\in\calC}\alpha_y\ket0\ket{\calA(x,y)}\,,
\]
which is $\ket0\ket{\Ima_x}$ exactly when $\alpha_y = |\calC|^{-1/2}$ for all $y$, that is, when the input is the orbit state itself.  The natural superposition over $\calC$ that can be prepared without it is the endpoint of a random walk from $x$, and that comes entangled with the walk's own record.  Recorded, the walk leaves the element register with accepted weight $1/|\calC|$, no more than the seed alone (\Cref{prop:walk-index}(iv)).  Disentangling it in the manner of \Cref{lem:uncompute} needs relabellings of the walk's fibres, which exist approximately but are not known to be efficient (\Cref{prop:walk-index}(ii) and~(iii)).

\paragraph{Randomized path-finders.}
The hypothesis that $\calA$ is deterministic costs little for a classical path-finder that succeeds on every pair with probability at least $2/3$.  Its outputs can be checked, so repetition reduces its failure probability to $2^{-3n}$ on each pair.  There are at most $2^{2n}$ pairs of objects, so a random string fixed in advance makes it fail on some pair with probability at most $2^{2n}\cdot2^{-3n} = 2^{-n}$.

\section{Amplification and bounds for orbit state preparation}\label{sec:osp-def}

This section studies orbit state preparation itself.  Verification amplifies any attack whose output stays in the orbit of its seed (\Cref{sec:amplification}), two generic attacks bound the problem from above (\Cref{sec:generic}), and for rapidly mixing schemes an $\NP$ oracle solves it (\Cref{sec:oracle-taxonomy}).

\subsection{Amplification}\label{sec:amplification}

An attack whose output has inverse-polynomial accepted weight and stays in the orbit of its seed can be boosted to the orbit state, because the verifier implements a rotation about the orbit states.  For an orbit $\calC_k$ with serial number $p$, call the other orbits with serial number $p$ its \emph{co-orbits}.

\begin{lemma}[Amplification by verification]\label{lem:amplification}
Let $x\in\calC_k$ and $p:=I(x)$.  Let $D$ be a polynomial-size circuit, and let $\sigma$ be the reduced state of $D\ket x\ket{0^m}$ on the element register.  Suppose that
\begin{enumerate}[label=\textup{(\alph*)},leftmargin=*,nosep]
\item $\sigma$ is supported on $\calC_k$, or $I$ is injective on orbits; and
\item $\aw_p(\sigma)\ge\eta$.
\end{enumerate}
Let $0<\Delta_\ast\le\Delta$ be a known lower bound on the spectral gap, and let $\delta>0$.  Then there is a circuit $\mathrm{Amp}$, depending on $x$ only through its input register, with
\[
    \bigl\|\mathrm{Amp}\ket x\ket0-\ket{\psi_k}\otimes\ket{a_x}\bigr\| \;\le\; \delta \qquad\text{for some state }\ket{a_x}\,.
\]
It makes $O(\eta^{-1/2}\log(1/\delta))$ calls to $D$, to $D^\dagger$ and to the coherent verifier, which runs $\Ver$ with its measurements deferred for $O(\Delta_\ast^{-1}\log(1/\delta\eta))$ rounds (\Cref{app:amplification}).  In particular, if the scheme mixes rapidly and~(a) and~(b) hold with $\eta\ge1/\poly(n)$ for every $x$ in a family of seeds $G$, then orbit state preparation is solvable on $G$.
\end{lemma}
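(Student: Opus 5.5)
The approach is fixed-point (or exact-overlap) amplitude amplification. It uses a reflection about the starting state $D\ket x\ket{0^m}$ together with an approximate reflection about the orbit states, the latter built from a coherent verifier.

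Write $\ket{\Phi_x}:=D\ket x\ket{0^m}$ and let $\Pi:=\Pi_1^{(p)}\otimes\openone$ act on the element register. Then $\bra{\Phi_x}\Pi\ket{\Phi_x}=\aw_p(\sigma)\ge\eta$. The first step is to show that condition~(a) forces $\Pi\ket{\Phi_x}$ to be proportional to $\ket{\psi_k}\otimes\ket{b}$ for a single ancilla vector $\ket b$. If $\sigma$ is supported on $\calC_k$, the projector $\Pi_1^{(p)}$ restricted to that support is $\ket{\psi_k}\!\bra{\psi_k}$. If instead $I$ is injective on orbits, then $\Pi_1^{(p)}=\ket{\psi_k}\!\bra{\psi_k}$ outright. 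In either case the target subspace, intersected with what $\ket{\Phi_x}$ touches, has the product form we want. Any vector in the two-dimensional space spanned by $\ket{\Phi_x}$ and $\Pi\ket{\Phi_x}$ that lies in the image of $\Pi$ is therefore $\ket{\psi_k}\otimes\ket{b}$, which is exactly the form the output must have.

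Next, build the reflection $R_\Pi\approx\openone-2\Pi$. Run $\Ver(p,\cdot)$ coherently. This means computing $I$ into a fresh register and marking the outcome $p$, then running $r'=O(\Delta_\ast^{-1}\log(1/\delta\eta))$ rounds with the move register measurements replaced by coherent flag computations. Then apply a phase $-1$ on the all-accept flag and uncompute. By \Cref{lem:proj-implementation}, the accepting branch is $\hat B^{r'}\Pi_{P_p}$, which lies within $(1-\Delta_\ast)^{r'}\le\delta\eta$ of $\Pi_1^{(p)}$. This has to be converted into an operator-norm bound between the coherent reflection and $\openone-2\Pi$. The cleaner route may be a spectral one: write the coherent verifier as a block encoding of $\Pi_{P_p}\hat B^{r'}\Pi_{P_p}$, whose eigenvalues on the complement of the orbit states are at most $(1-\Delta_\ast)^{r'}$. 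The reflection about $\ket{\Phi_x}$ is $D(\openone-2\ket{x,0}\!\bra{x,0})D^\dagger$, and it depends on $x$ only through the input register.

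Then run amplitude amplification. With $\theta$ defined by $\sin^2\theta=\bra{\Phi_x}\Pi\ket{\Phi_x}$, which is unknown but at least $\eta$, standard Grover iterates cannot be tuned in advance. I would instead use Yoder–Low–Chuang fixed-point amplification, which reaches overlap $1-\delta^2$ using $O(\eta^{-1/2}\log(1/\delta))$ generalized reflections for every $\theta$ with $\sin^2\theta\ge\eta$. The output then lies within $\delta/2$ of a unit vector in the image of $\Pi$ inside the relevant two-dimensional space, which by the first step is $\ket{\psi_k}\otimes\ket{a_x}$. The fixed-point method needs phased reflections $e^{i\alpha\Pi}$ rather than plain reflections, and both kinds are built the same way from the coherent verifier. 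Each imperfect reflection contributes error $O(\delta\eta)$ (after fixing constants), and there are $O(\eta^{-1/2}\log(1/\delta))$ of them, so the accumulated error is $O(\delta)$. Rescaling $\delta$ gives the stated bound. For the final clause, rapid mixing gives $\Delta_\ast=n/r$, and taking $\delta=e^{-n}$ and $\eta\ge1/\poly(n)$ gives a uniform polynomial-size circuit, so orbit state preparation is solvable on $G$.

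I expect the main obstacle to be the error analysis of the coherent verifier as a reflection. \Cref{lem:proj-implementation} bounds only the accepting Kraus branch. The garbage left in the move register and the flag registers on the rejecting branches must be shown to be uncomputed cleanly, so that the whole circuit is $O(\delta\eta)$-close to $\openone-2\Pi$ in operator norm, not merely on the starting state. Fixing this cleanly is the part I would treat with the most care.
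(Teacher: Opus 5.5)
Your proposal is correct and follows the paper's proof essentially step for step. The paper, like you, uses Yoder--Low--Chuang fixed-point amplification, alternating the exact rotation $D\bigl(\openone-(1-e^{i\alpha})\ket{x,0^m}\!\bra{x,0^m}\bigr)D^\dagger$ with phased rotations built from the coherent verifier by putting a phase on its unflagged branch and uncomputing. Hypothesis~(a) makes the target in the span of $D\ket x\ket{0^m}$ and its projection equal to $\ket{\psi_k}\otimes\ket{a_x}$. The rejecting-branch garbage you worry about is handled exactly by your block-encoding idea. On fresh-ancilla inputs the fresh component is governed by $K^\dagger K$ with $K=\hat B^{r'}\Pi_{P_p}$, while the leaked component has squared norm $\bra\phi\bigl(K^\dagger K-(K^\dagger K)^2\bigr)\ket\phi\le\xi^2$, which suffices for a hybrid argument.

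One small fix: a per-call error of $\delta\eta$ over $O(\eta^{-1/2}\log(1/\delta))$ calls totals $O(\delta\eta^{1/2}\log(1/\delta))$, not $O(\delta)$. Take $\xi=\delta\eta^{1/2}/\log(1/\delta)$ as the paper does, or $(\delta\eta)^2$; either leaves the round count $O(\Delta_\ast^{-1}\log(1/\delta\eta))$ unchanged.
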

\begin{proof}
\Cref{app:amplification} gives the proof.  It is fixed-point amplitude amplification that alternates a rotation about the output of $D$ with a rotation about the states whose element register holds the orbit state, and a coherent verifier implements the second rotation.  For the last sentence, take $\Delta_\ast=n/r$ and $\delta=2^{-n}$.
\end{proof}

\subsection{Generic attacks}\label{sec:generic}

Two attacks work for every scheme and bound orbit state preparation from above.  The first amplifies the seed itself, whose accepted weight $1/|\calC_{k(x)}|$ is unknown, by trying the guesses $\eta = 2^{-j}$ in turn.  The second rebuilds a banknote from its serial number alone, by amplitude amplification on the mint.

\begin{proposition}[Generic attacks]\label{prop:generic-attacks}
Let $(X,S,I)$ be an invariant money scheme, and let $\mathrm{wt}(p) = |P_p|/|X|$ be the minting weight of $p$ (\Cref{def:sparse}).
\begin{enumerate}[label=\textup{(\roman*)},leftmargin=*,nosep]
\item If the scheme mixes rapidly, there is an algorithm that, on every seed $x$, halts within time $\sqrt{|\calC_{k(x)}|}\cdot r\cdot\poly(n)$ except with probability $2^{-\Omega(n)}$, and outputs a state whose element register is within trace distance $2^{-\Omega(n)}$ of $\ket{\psi_{k(x)}}$.
\item There is an algorithm that, on every serial number $p$ in the image of $I$ and with no seed, outputs $\ket{\$_p}$ in expected time $\mathrm{wt}(p)^{-1/2}\cdot\poly(n)$.
\end{enumerate}
\end{proposition}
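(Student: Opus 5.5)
For part~(i) the plan is to invoke \Cref{lem:amplification} with $D$ the trivial circuit that simply leaves the seed $\ket x$ in the element register. Hypothesis~(a) holds because $\sigma = \ket x\!\bra x$ is supported on $\calC_{k(x)}$, and the accepted weight is $\aw_p(\ket x) = |\braket{\psi_{k(x)}}{x}|^2 = 1/|\calC_{k(x)}|$. The only co-orbit components vanish, since $\ket x$ lies in the span of $\calC_{k(x)}$.

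The difficulty is that $\eta = 1/|\calC_{k(x)}|$ is unknown. Fixed-point amplification only needs a lower bound on $\eta$, but an underestimate by a large factor costs time $\eta_{\mathrm{guess}}^{-1/2}$, so we cannot just take a tiny guess. I would therefore run the amplifier with guesses $\eta_j = 2^{-j}$ for $j = 0,1,2,\dots$, with $\delta = 2^{-n}$ and $\Delta_\ast = n/r$. After each stage I would check the output with $\Ver(p,\cdot)$ run $n$ times, or better with a gentle variant. If a check rejects, I restart from a fresh copy of the seed, which is always available because $x$ is classical.

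At the first $j$ with $2^{-j} \le 1/|\calC_{k(x)}|$, the guarantee of \Cref{lem:amplification} makes the output $2^{-n}$-close to $\ket{\psi_{k(x)}}\otimes\ket{a_x}$. That stage therefore passes. The cumulative cost is a geometric sum, $\sum_{j\le j^\ast} 2^{j/2}\cdot O(n)\cdot r\,\poly(n) = O(\sqrt{|\calC_{k(x)}|})\cdot r\cdot\poly(n)$.

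The main obstacle is the earlier stages. There the guess is too large, and the output may be arbitrary yet still pass the test with some probability. I would handle this by making the test strong. Because the scheme mixes rapidly, \Cref{lem:spectral-overlap} shows that acceptance probability equals accepted weight up to $e^{-2n}$. In addition, a state that is accepted leaves, via the Kraus operator and \Cref{lem:proj-implementation}, a post-measurement state that is $e^{-n}$-close to its projection onto $\Pi_1^{(p)}$. Whenever an early stage passes, then, its output is already close to a state of accepted weight~1 supported in the orbit of $x$, since every state involved stays in $\calC_{k(x)}$. That state is $\ket{\psi_{k(x)}}$ up to $2^{-\Omega(n)}$ in trace distance, so halting there is fine.

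The halting-time bound fails only if stage $j^\ast$, or a few repetitions of it, fails. That happens with probability $2^{-\Omega(n)}$, which I would ensure by repeating stage $j^\ast$ a total of $n$ times before moving on.

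Part~(ii) requires no seed. I would run fixed-point (or standard, randomized-iteration) amplitude amplification on the circuit that prepares $\ket\Phi$. The marked subspace is ``second register equals $p$'', which is checked coherently by computing $I$, so its weight is exactly $\mathrm{wt}(p)$. Since $\mathrm{wt}(p)$ is unknown, I would use the exponential-search schedule of Boyer--Brassard--H{\o}yer--Tapp. In each round one picks a random iteration count below a growing bound, amplifies, and measures the invariant register. On outcome $p$ the residual element register is exactly $\ket{\$_p}$, because measuring $\ket\Phi$'s invariant register and getting $p$ projects it onto $\ket{P_p}$. Amplification is a rotation within the span of the good and bad components, so the good component remains the exact banknote $\ket{\$_p}\ket p$.

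The standard analysis gives expected $O(\mathrm{wt}(p)^{-1/2})$ calls, each of cost $\poly(n)$. The preparation of $\ket X$ to within negligible error contributes only negligible deviation. If exactness is required, one can apply the rotation trick from the proof of \Cref{thm:pf-necessary}.
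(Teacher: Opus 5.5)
Your proposal is correct and follows the paper's proof: for part~(i) it feeds the identity circuit into \Cref{lem:amplification} with guesses $\eta = 2^{-j}$, checks each stage with $\Ver$, bounds the time by a geometric sum and uses \Cref{lem:proj-implementation} to control false acceptances, and for part~(ii) it uses amplitude amplification with unknown success probability on the mint. Your repeated stages and repeated checks are harmless but unnecessary. The false-acceptance step is best stated as the paper states it: as a joint probability. The event ``$\Ver$ accepts and the register lies outside the range of $\Pi_1^{(p)}$'' has probability at most $e^{-2n}$ per run, and this is summed over the runs. Closeness of the normalized post-acceptance state fails when the acceptance probability of an early stage is tiny.
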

\begin{proof}
(i) Write $\calC := \calC_{k(x)}$ and $p := I(x)$.  With $D$ the identity, $\sigma = \ket x\!\bra x$ satisfies hypothesis~(a) of \Cref{lem:amplification}, and $\aw_p(\sigma) = |\braket{\psi_{k(x)}}{x}|^2 = 1/|\calC|$.  For $j = 1,\dots,n$ let $\mathrm{Amp}_j$ be the circuit of \Cref{lem:amplification} with $\eta = 2^{-j}$, $\delta = 2^{-2n}$ and $\Delta_\ast = n/r$.  The algorithm is
\begin{quote}
for $j = 1,\dots,n$: prepare $\mathrm{Amp}_j\ket x\ket0$, apply $\Ver(p,\cdot)$, and halt with the register if $\Ver$ accepts.
\end{quote}
Let $j^\ast$ be the least $j$ with $2^{-j}\le1/|\calC|$.  Then $j^\ast\le n$, because $|\calC|\le2^n$, and $2^{j^\ast}<2|\calC|$.  At $j^\ast$ hypothesis~(b) holds, so the output is within $2^{-2n}$ of $\ket{\psi_{k(x)}}\otimes\ket{a_x}$, and it is accepted except with probability $2^{-\Omega(n)}$ (\Cref{lem:spectral-overlap}).  Run $j$ makes $O(2^{j/2}n)$ calls to the verifier, of $O(r)$ rounds each, so halting by run $j^\ast$ takes time
\[
    \sum_{j=1}^{j^\ast}O(2^{j/2}n)\cdot O(r)\cdot\poly(n) \;=\; O\bigl(2^{j^\ast/2}\bigr)\cdot r\cdot\poly(n) \;=\; \sqrt{|\calC|}\cdot r\cdot\poly(n)\,.
\]
For each run, \Cref{lem:proj-implementation} gives
\[
    \Pr\bigl[\Ver\text{ accepts and its register lies outside the range of }\Pi_1^{(p)}\bigr] \;\le\; (1-\Delta)^{2r} \;\le\; e^{-2n}\,,
\]
and on $\calC$ the range of $\Pi_1^{(p)}$ is spanned by $\ket{\psi_{k(x)}}$.  Summing over the at most $n$ runs,
\[
    \Pr\bigl[\text{the algorithm halts with its register outside }\mathrm{span}\{\ket{\psi_{k(x)}}\}\bigr] \;\le\; n\,e^{-2n}\,,
\]
so the output register is within trace distance $2^{-\Omega(n)}$ of $\ket{\psi_{k(x)}}$.

(ii) Without its final measurement, the mint prepares
\[
    |X|^{-1/2}\sum_{y\in X}\ket y\ket{I(y)} \;=\; \sum_{p'}\mathrm{wt}(p')^{1/2}\,\ket{\$_{p'}}\ket{p'}\,.
\]
The event that the second register reads $p$ has probability $\mathrm{wt}(p)$, and the first register then holds $\ket{\$_p}$.  Amplitude amplification with an unknown success probability~\cite{BHMT02} reaches the event in expected $O(\mathrm{wt}(p)^{-1/2})$ iterations, each using the mint and its inverse once, and on success the first register holds $\ket{\$_p}$ exactly.
\end{proof}

\paragraph{The banknote attack and dispersion.}
Attack~(ii) outputs the banknote $\ket{\$_p}$, which is an orbit state only when $I$ is injective on orbits.  Under dispersion (\Cref{def:sparse}), its expected time $\mathrm{wt}(p)^{-1/2}$ is superpolynomial for every $p$.

\subsection{An upper bound from approximate counting}\label{sec:oracle-taxonomy}

For every rapidly mixing scheme, an $\NP$ oracle solves orbit state preparation.  The uniform superposition over a set can be built bit by bit from its prefix marginals~\cite{GR02,AharonovTaShma03}, which is Aaronson's state synthesis algorithm~\cite{Aaronson16,INNRY22}.  An $\NP$ oracle supplies these marginals only up to a multiplicative factor, and \Cref{lem:count-complete} shows that this suffices.  Membership in an orbit is in $\NP$ because rapid mixing bounds the diameter of every orbit, as Lutomirski observes for component mixers~\cite[\S4.1]{Lutomirski11}.

For non-empty $Q \subseteq \{0,1\}^n$ and a prefix $b \in \{0,1\}^{\leq n}$ write $\cnt_Q(b) := |\{y \in Q : y_{1\ldots|b|} = b\}|$.  An \emph{$\epsilon$-approximate prefix-counting oracle} for $Q$ returns on query $b$ a value $\widetilde\cnt(b)$ that depends on $b$ alone and satisfies
\begin{equation}\label{eq:count-oracle}
    (1-\epsilon)\,\cnt_Q(b) \;\leq\; \widetilde\cnt(b) \;\leq\; (1+\epsilon)\,\cnt_Q(b)\,.
\end{equation}
It is accessed as the unitary $\ket b\ket z \mapsto \ket b\ket{z \oplus \widetilde\cnt(b)}$, so it can be queried on a superposition of prefixes.

\begin{lemma}[State preparation from approximate prefix counts]\label{lem:count-complete}
Let $\zeta=(1+\epsilon)/(1-\epsilon)$. There is a circuit of size $\poly(n)$ making $4n$ queries to an $\epsilon$-approximate prefix-counting oracle for $Q$ that outputs $\ket{\tilde\psi_Q}\otimes\ket0$, where $\ket{\tilde\psi_Q}$ is supported on $Q$ and $|\braket{Q}{\tilde\psi_Q}|\ge\zeta^{-n/2}$, which is at least $1/\sqrt2$ at $\epsilon=1/(4n)$.
\end{lemma}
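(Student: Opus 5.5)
The plan is the Grover--Rudolph bit-by-bit construction, run with approximate counts; multiplicative errors accumulate into an overlap bound. For $b\in\{0,1\}^{j-1}$ with $\cnt_Q(b)>0$, define the approximate conditional probability
\[
\tilde q(b) \;:=\; \frac{\widetilde\cnt(b0)}{\widetilde\cnt(b0)+\widetilde\cnt(b1)}\,,
\]
which is well defined because $\widetilde\cnt$ vanishes exactly where $\cnt_Q$ does. The circuit processes $j=1,\dots,n$ in turn. At stage $j$, controlled on the prefix register holding $b$, it queries $\widetilde\cnt(b0)$ and $\widetilde\cnt(b1)$ into fresh registers. It computes the angle $\theta_b$ with $\cos^2\theta_b=\tilde q(b)$ to $\poly(n)$ bits of precision and rotates a fresh qubit to $\sqrt{\tilde q(b)}\ket0+\sqrt{1-\tilde q(b)}\ket1$. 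It then uncomputes the angle and queries the oracle again on $b$ to erase both counts. This uses four queries per stage, $4n$ in total, and the query registers return to $\ket0$. The finite-precision rotation error I will fold in by a standard argument, or treat rotations as exact, as the statement implicitly does.

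The output is $\ket{\tilde\psi_Q}=\sum_y\sqrt{\tilde P(y)}\ket y$, where $\tilde P(y)=\prod_j \tilde q_j(y)$ and $\tilde q_j(y)$ is the conditional probability of bit $y_j$ given $y_{1..j-1}$. A bit whose count is zero gets probability zero, so $\tilde P$ is supported on $Q$. The exact conditionals are $q_j(y)=\cnt_Q(y_{1..j})/\cnt_Q(y_{1..j-1})$, and their product telescopes to $1/|Q|$. By \eqref{eq:count-oracle}, each ratio satisfies
\[
\zeta^{-1}q_j(y)\;\le\;\tilde q_j(y)\;\le\;\zeta\,q_j(y)\,,
\]
since the numerator is at least $(1-\epsilon)\cnt$ and the denominator at most $(1+\epsilon)$ times the true sum. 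Hence $\tilde P(y)\ge\zeta^{-n}/|Q|$ on $Q$. Then
\[
\braket{Q}{\tilde\psi_Q}\;=\;\sum_{y\in Q}|Q|^{-1/2}\sqrt{\tilde P(y)}\;\ge\;|Q|\cdot|Q|^{-1/2}\cdot\zeta^{-n/2}|Q|^{-1/2}\;=\;\zeta^{-n/2}\,.
\]

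Finally, at $\epsilon=1/(4n)$ we have $\zeta=1+2\epsilon/(1-\epsilon)\le1+1/(2n-1/2)$. So $\zeta^{n/2}\le\exp\bigl(n/(4n-1)\bigr)\le e^{1/3}<\sqrt2$ for $n\ge1$, which gives the claimed bound $\zeta^{-n/2}\ge1/\sqrt2$.

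The step that needs care is the clean uncomputation of the oracle answers. The proof relies on $\widetilde\cnt(b)$ being a fixed function of $b$: re-querying then exactly erases the answer, so the ancilla ends in $\ket0$ unentangled. Apart from that, the only remaining detail is rotation precision, which contributes error $2^{-\poly(n)}$ that I would absorb or ignore.
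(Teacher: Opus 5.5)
Your proof is correct and is essentially the paper's argument: the bit-by-bit Grover--Rudolph construction with two queries to compute and two to uncompute the approximate counts at each bit, and the per-bit bound $\zeta^{-1}\chi\le\tilde\chi\le\zeta\chi$. The product then telescopes to $\tilde\beta(y)\ge\zeta^{-n}/|Q|$, and summing the non-negative amplitudes gives the overlap bound. The only differences are cosmetic: you bound the ratio directly rather than by monotonicity, and you finish with $\zeta^{n/2}\le e^{1/3}<\sqrt2$ instead of the paper's Bernoulli estimate $\zeta^{-n}\ge(1-2\epsilon)^n\ge 1/2$.
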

\begin{proof}
Build the string bit by bit.  For a prefix $b$ of length $j<n$ and $a\in\{0,1\}$, write
\[
    \tilde\chi(b,a) \;:=\; \frac{\widetilde\cnt(ba)}{\widetilde\cnt(b0)+\widetilde\cnt(b1)}\,,\qquad \chi(b,a) \;:=\; \frac{\cnt_Q(ba)}{\cnt_Q(b)}\,.
\]
Step $j+1$ acts on the prefix register and a fresh qubit as
\[
    \ket b\ket0 \;\longmapsto\; \ket b\bigl(\tilde\chi(b,0)^{1/2}\ket0 + \tilde\chi(b,1)^{1/2}\ket1\bigr)\,,
\]
using two queries to compute $\widetilde\cnt(b0)$ and $\widetilde\cnt(b1)$ into ancillas, a controlled rotation, and two more queries to uncompute them, which is possible because $\widetilde\cnt$ is a function of $b$.  After $n$ steps the output is
\[
    \ket{\tilde\psi_Q} \;=\; \sum_{y}\tilde\beta(y)^{1/2}\ket y\,,\qquad \tilde\beta(y) \;:=\; \prod_{j=1}^{n}\tilde\chi(y_{<j},y_j)\,.
\]
A branch with $\cnt_Q(ba)=0$ gets amplitude $0$, so the support stays in $Q$, and a prefix in the support has $\cnt_Q(b)>0$, so the denominator of $\tilde\chi$ is positive.  Since $t_a/(t_a+t_{\bar a})$ is increasing in $t_a$ and decreasing in $t_{\bar a}$, the $(1\pm\epsilon)$ errors of the counts give $\zeta^{-1}\chi(b,a)\le\tilde\chi(b,a)\le\zeta\,\chi(b,a)$.  For $y\in Q$ the true factors telescope to $\prod_j\chi(y_{<j},y_j) = 1/|Q|$, so $\tilde\beta(y)\ge\zeta^{-n}/|Q|$.  All amplitudes are non-negative, so
\[
    |\braket{Q}{\tilde\psi_Q}| \;=\; \sum_{y\in Q}\Bigl(\frac{\tilde\beta(y)}{|Q|}\Bigr)^{1/2} \;\ge\; \sum_{y\in Q}\frac{\zeta^{-n/2}}{|Q|} \;=\; \zeta^{-n/2}\,.
\]
At $\epsilon=1/(4n)$, $\zeta^{-n}\ge(1-2\epsilon)^n\ge1-2\epsilon n=\tfrac12$.
\end{proof}

\begin{proposition}[Orbit state preparation with an $\NP$ oracle]\label{prop:np-forges}
Let $(X,S,I)$ be a rapidly mixing invariant money scheme. Then orbit state preparation is solvable by a quantum polynomial-time algorithm with oracle access to an $\NP$ language. Consequently, if $\NP\subseteq\BQP$, orbit state preparation is solvable.
\end{proposition}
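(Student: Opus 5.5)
Given a seed $x$ with $\calC:=\calC_{k(x)}$, we take $Q:=\calC$ and build $\ket{\psi_{k(x)}}$ from approximate prefix counts. \Cref{lem:count-complete} gives an approximation with overlap at least $1/\sqrt2$, and \Cref{lem:amplification} boosts it. The steps are membership in $\NP$, approximate counting with an $\NP$ oracle, then amplification.

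\emph{Step 1: orbit membership is in $\NP$.} Under rapid mixing, every orbit has diameter at most a polynomial $L(n)$. For $y\in\calC$ and $L$ the walk length $r$, the bound used in \Cref{prop:design-necessary}(ii) gives $(\hat B^r)_{yx}\ge|\calC|^{-1}-(1-\Delta)^r\ge 2^{-n}-e^{-n}$. That bound is too weak, so I would instead use the standard fact that a connected regular graph whose lazy walk has gap $\Delta$ has diameter $O(\Delta^{-1}\log|\calC|)=O(rn)$. Concretely, take $L:=Cr$ for $r\Delta\ge n$: then $(\hat B^{L})_{yx}\ge 1/|\calC|-(1-\Delta)^{L}>0$ as soon as $(1-\Delta)^L<2^{-n}$, which holds for $L=2r$ since $e^{-2n}<2^{-n}$. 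A positive entry means some word of length at most $L$ maps $x$ to $y$. So ``$y\in\calC$ and $y$ has prefix $b$'' has a polynomial witness, a word in $S^{\le 2r}$, checkable in polynomial time.

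\emph{Step 2: approximate prefix counts.} Counting the witnesses would overcount, since many words reach the same $y$. Instead I would invoke Stockmeyer's approximate counting in its relativized form, $\mathsf{FP}^{\NP}$ approximation of $|\{y:\ R(x,b,y)\}|$ for an $\NP$ predicate $R$, which works in $\mathsf{BPP}^{\NP}$ via hashing. Approximating the size of an $\NP$ set, rather than counting witnesses, is the version needed here.

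\emph{Main obstacle: a deterministic count.} \Cref{lem:count-complete} requires $\widetilde\cnt(b)$ to depend on $b$ alone, so that it can be uncomputed coherently, but the hashing estimate is randomized. I would derandomize as in the paragraph on randomized path-finders. Amplify the estimator so it fails with probability $2^{-3n}$ per query, then fix the hash seeds in advance, drawn once classically. There are at most $2^{n+1}$ prefixes, so a union bound makes the fixed seeds good on every prefix except with probability $2^{-\Omega(n)}$. With the seeds fixed and the $\NP$ queries made coherently, the estimator becomes a deterministic function of $b$ with $\epsilon=1/(4n)$. If there is a concern about rounding to ensure the value depends only on $b$, Stockmeyer's deterministic $\mathsf{FP}^{\Sigma_2}$ variant could be used instead, but the fixed-seed route keeps us within an $\NP$ oracle.

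\emph{Step 3: amplification.} With this oracle, \Cref{lem:count-complete} gives a circuit $D$ whose output $\ket{\tilde\psi_{\calC}}$ is supported on $\calC$, so hypothesis~(a) of \Cref{lem:amplification} holds. Its accepted weight is $|\braket{\psi_{k(x)}}{\tilde\psi_\calC}|^2\ge1/2$. \Cref{lem:amplification} with $\eta=1/2$ and $\delta=2^{-n}$ then yields the orbit state with $O(n)$ calls to $D$ and $D^\dagger$, each making polynomially many $\NP$ queries. Finally, if $\NP\subseteq\BQP$, each oracle call is replaced by a $\BQP$ procedure, error-reduced and made coherent with garbage uncomputed. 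The resulting errors accumulate only polynomially many negligible terms, so orbit state preparation is solvable.
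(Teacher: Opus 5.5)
Your proposal is correct and follows essentially the paper's route. Rapid mixing bounds the walk length needed to reach every $y\in\calC$, so the prefix-restricted orbit is an $\NP$ set. Hashing-based approximate counting with an $\NP$ oracle, together with a random string fixed in advance by a union bound over prefixes, gives a $\widetilde\cnt$ that depends on $b$ alone, and \Cref{lem:count-complete} followed by \Cref{lem:amplification} with $\eta=1/2$ finishes, with coherent error-reduced $\BQP$ algorithms substituted when $\NP\subseteq\BQP$.

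The differences are minor. First, your claim that walk length $r$ is too weak is mistaken: $r\Delta\ge n$ gives $(1-\Delta)^r\le e^{-n}<2^{-n}\le 1/|\calC|$, and the paper uses $r+1$; your $2r$ works equally well. Second, the paper tolerates a string that is bad with probability $1/3$ by drawing fresh strings in a verify-and-retry loop. You instead boost the per-prefix failure to $2^{-3n}$, so a single string, drawn classically or into an ancilla register, is good on every prefix except with probability $2^{-\Omega(n)}$, which is slightly simpler.
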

\begin{proof}
Write $\calC := \calC_{k(x)}$ and $p := I(x)$.  With $Q = \calC$ and $\epsilon = 1/(4n)$, the output of \Cref{lem:count-complete} is supported on $\calC$ and has accepted weight $|\braket{\calC}{\tilde\psi_\calC}|^2\ge1/2$, so \Cref{lem:amplification} applies with $\eta = 1/2$.  It therefore suffices to implement an $\epsilon$-approximate prefix-counting oracle for $\calC$.

\emph{Diameter.}  Let $T := r+1$.  For $y\in\calC$, by the spectral decomposition of $\hat B$ on $\calC$ and $\Delta T > \Delta r\ge n$,
\[
    \Bigl|(\hat B^T)_{yx}-\frac1{|\calC|}\Bigr| \;\le\; (1-\Delta)^T \;<\; e^{-n} \;<\; 2^{-n} \;\le\; \frac1{|\calC|}\,,
\]
so $(\hat B^T)_{yx}>0$.  This entry is the probability that $T$ lazy steps from $x$ end at $y$, so every $y\in\calC$ is $\omega\cdot x$ for some word $\omega\in S^{\le T}$.

\emph{Membership in $\NP$.}  Hence
\[
    \cnt_\calC(b) \;=\; \bigl|\{\,y\in\{0,1\}^n : y_{1\ldots|b|} = b \text{ and } \omega\cdot x = y \text{ for some } \omega\in S^{\le T}\,\}\bigr|\,,
\]
the size of a set with a polynomial-time verifiable witness $\omega$.

\emph{Estimation.}  Sipser--Stockmeyer hashing~\cite{Sipser83,Stockmeyer83} estimates the size of such a set within a constant factor $c$, using $\poly(n)$ random bits and $\poly(n)$ queries to the $\NP$ language $\{(b,h) : \exists y\,\exists\omega\,[\omega\cdot x = y \wedge y_{1\ldots|b|} = b \wedge h(y)=0]\}$.  Applied to the $N$-fold product of the set, whose size is $\cnt_\calC(b)^N$, and followed by an $N$-th root, it gives an estimate $\widetilde\cnt(b)$ with
\[
    c^{-1/N}\cnt_\calC(b) \;\le\; \widetilde\cnt(b) \;\le\; c^{1/N}\cnt_\calC(b)\,,
\]
which satisfies equation~\eqref{eq:count-oracle} at $N = \Theta(1/\epsilon)$.

\emph{Determinism.}  Take the median of $\Theta(n)$ independent estimates, so that it fails on a fixed prefix with probability below $2^{-n-1}/3$.  There are fewer than $2^{n+1}$ prefixes, so one random string drawn at the outset is good for every prefix except with probability at most
\[
    2^{n+1}\cdot\frac{2^{-n-1}}{3} \;=\; \frac13\,.
\]
With a good string fixed, $\widetilde\cnt$ is a function of $b$, so it can be run coherently and uncomputed, as \Cref{lem:count-complete} requires.

\emph{Assembly.}  The algorithm repeats, at most $O(n)$ times: draw a string, apply \Cref{lem:count-complete} and then $\mathrm{Amp}$ of \Cref{lem:amplification} with $\delta = 2^{-2n}$, apply $\Ver(p,\cdot)$, and halt on acceptance.  A prefix with no extension in $\calC$ has no witness and is estimated as $0$.  If both estimates at a node are $0$, which a good string never produces, the circuit writes $x$ on that branch, so every state stays supported on $\calC$.  With a good string an attempt is accepted except with probability $2^{-\Omega(n)}$, so all $O(n)$ attempts fail with probability $2^{-\Omega(n)}$.  As in the proof of \Cref{prop:generic-attacks}(i), \Cref{lem:proj-implementation} puts the output register within trace distance $t = 2^{-\Omega(n)}$ of $\ket{\psi_{k(x)}}$.  Deferring the measurements gives a circuit of polynomial size with output $\ket\Phi$, whose element register has fidelity $F := \bra{\psi_{k(x)}}\rho\ket{\psi_{k(x)}}\ge1-t$.  With $\ket{a_x}$ the normalized vector $(\bra{\psi_{k(x)}}\otimes\openone)\ket\Phi$, which depends on $x$ only,
\[
    \bigl\|\ket\Phi-\ket{\psi_{k(x)}}\otimes\ket{a_x}\bigr\|^2 \;=\; 2-2\sqrt F \;\le\; 2t\,,
\]
so the circuit solves orbit state preparation.  If $\NP\subseteq\BQP$, replace the oracle by a quantum polynomial-time algorithm run coherently with error reduction.
\end{proof}

\section{The preparation assumption and security}\label{sec:reduction}

For the knot scheme no polynomial-time algorithm for orbit state preparation is known.  For orbits without a coset parameterization, no reduction of it to path-finding is known either (\Cref{sec:where-not}).  This section presents an average-case assumption for orbit state preparation we call the \emph{preparation assumption} (\Cref{sec:the-assumption}).  The assumption is equivalent to security against adversaries that measure their banknote first (\Cref{sec:collapse}), and together with a second assumption, which concerns adversaries that keep the banknote coherent, it is equivalent to security (\Cref{sec:transfer}).  \Cref{sec:transfer-necessity} shows that a natural class of reductions cannot derive the second assumption from the first, and \Cref{sec:lightning} treats quantum lightning.

\subsection{The preparation assumption}\label{sec:the-assumption}

\begin{assumption}[Preparation assumption]\label{ass:synth}
For every quantum polynomial-time algorithm $\calA$,
\[
    \adv_\calA(n)\;:=\;\mathbb E_{(p,x)\leftarrow\mathcal M_n}\bigl[\aw_p(\sigma_\calA(p,x))\bigr]\;=\;\negl(n)\,,
\]
where $\sigma_\calA(p,x)$ is the reduced state of the register that $\calA$ outputs.
\end{assumption}

\paragraph{Choices in the assumption.}
\begin{itemize}[leftmargin=*,nosep]
\item $\sigma_\calA(p,x)$ is a reduced state, because an adversary guarantees only that one register passes verification, not that its ancillas are clean.
\item The assumption bounds accepted weight, not the overlap $\bra{\$_p}\sigma\ket{\$_p}$ with the banknote, because a forger need not reproduce the banknote.
\item Accepted weight counts every orbit with serial number $p$,
\[
    \aw_p(\sigma) \;=\; \sum_{k :\, I(\calC_k)=p}\bra{\psi_k}\sigma\ket{\psi_k}\,,
\]
including the co-orbits of the seed's orbit, because verification accepts all of these orbit states alike (\Cref{lem:eigenspace}).  It is the quantity that a challenger can estimate (\Cref{prop:falsifiable}).
\end{itemize}

The assumption is an average-case, inverse-polynomial form of orbit state preparation, in the following sense.

\begin{proposition}[Preparation as average-case orbit state preparation]\label{prop:prep-vs-osp}
\leavevmode
\begin{enumerate}[label=\textup{(\roman*)},leftmargin=*,nosep]
\item If a family of seeds of non-negligible minting weight has a solver for orbit state preparation, then the preparation assumption fails.
\item Suppose the scheme mixes rapidly, and some quantum polynomial-time $\calA$ has $\adv_\calA(n)\ge1/q(n)$ for infinitely many $n$, where either $\calA$'s output lies in the orbit of its seed or $I$ is injective on orbits.  Then for those $n$, orbit state preparation is solvable on a family of seeds of minting weight at least $1/(2q(n))$.
\end{enumerate}
\end{proposition}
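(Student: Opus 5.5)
For part~(i), I will take the solver $\Prep$ itself as the algorithm $\calA$ for \Cref{ass:synth}.  It ignores $p$, runs $\Prep\ket x\ket{0^m}$, and outputs the element register.  For $x\in G_n$, equation~\eqref{eq:osp-condition} places the output within Euclidean distance $\mu(n)$ of $\ket{\psi_{k(x)}}\otimes\ket{a_x}$.  The reduced state $\sigma$ is then within trace distance $\mu$ of $\ket{\psi_{k(x)}}\!\bra{\psi_{k(x)}}$, so $\bra{\psi_{k(x)}}\sigma\ket{\psi_{k(x)}}\ge1-\mu$.  Since $I(\calC_{k(x)})=p$, the projector $\Pi_1^{(p)}$ dominates $\ket{\psi_{k(x)}}\!\bra{\psi_{k(x)}}$, which gives $\aw_p(\sigma)\ge1-\mu$.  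Under $\mathcal M_n$ the seed $x$ is uniform on $X_n$ (\Cref{def:measured-mint}), and accepted weight is non-negative.  Averaging therefore gives $\adv_\calA(n)\ge(|G_n|/|X_n|)(1-\mu(n))$, which is non-negligible whenever the minting weight is.  So the assumption fails.

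For part~(ii), I will first run a Markov-type averaging, exactly as in the proof of \Cref{thm:pf-necessary}(ii).  Let $a(x):=\aw_{I(x)}(\sigma_\calA(I(x),x))\in[0,1]$, and let $G_n:=\{x: a(x)\ge1/(2q(n))\}$.  From $\mathbb E_x[a(x)]\ge1/q$ and $a\le1$ we get $1/q\le\Pr[x\in G_n]+1/(2q)$, so $G_n$ has minting weight at least $1/(2q(n))$.

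Next, for each $x\in G_n$ I will apply \Cref{lem:amplification} with $D$ the circuit of $\calA$ on input $(I(x),x)$.  This $D$ is a polynomial-size circuit depending on $x$ only through its input, since $p=I(x)$ is computed in polynomial time.  Hypothesis~(a) is exactly the stated alternative: either the output lies in the orbit of the seed, or $I$ is injective on orbits.  Hypothesis~(b) holds with $\eta=1/(2q(n))$.  Rapid mixing lets us take $\Delta_\ast=n/r$ and $\delta=2^{-n}$, as in the last sentence of that lemma.  This yields a polynomial-size circuit $\mathrm{Amp}$ with $\|\mathrm{Amp}\ket x\ket0-\ket{\psi_{k(x)}}\otimes\ket{a_x}\|\le2^{-n}$, which is a solver on $G$.

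The main obstacle is a technical point in making this a uniform circuit family that works for the $n$ in question.  The amplification needs $\eta$, and the schedule is fixed by $q$, which is known.  But \Cref{lem:amplification} is stated for a circuit $D$ with no intermediate measurements, so I will first purify $\calA$ by deferring its measurements.  I also need $\eta$ to be a lower bound rather than an exact value, which fixed-point amplification allows.  The statement only claims solvability for those $n$, so no uniformity across other $n$ is needed beyond fixing $q$.
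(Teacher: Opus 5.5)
Your proposal is correct and follows the paper's proof essentially step for step. In (i) you run the solver on the seed and bound the accepted weight below by the overlap with the seed's orbit state. In (ii) you use the same averaging to get $G_n$ of minting weight at least $1/(2q(n))$, then invoke the last sentence of the amplification lemma with $\eta=1/(2q(n))$. Your remark about deferring $\calA$'s measurements, so that it becomes a measurement-free circuit $D$, only makes explicit a step the paper leaves implicit.
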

\begin{proof}
(i) Let $\Prep$ solve orbit state preparation on $G$ with error $\mu$, and let $\calA$ run $\Prep$ on its seed.  For $x\in G_n$, the reduced state $\sigma$ of the output register is within trace distance $\mu$ of $\ket{\psi_{k(x)}}$, because the partial trace is a channel.  So $\aw_p(\sigma)\ge\bra{\psi_{k(x)}}\sigma\ket{\psi_{k(x)}}\ge1-\mu$, and
\[
    \adv_\calA(n) \;\ge\; \Pr_{(p,x)\leftarrow\mathcal M_n}[x\in G_n]\,\bigl(1-\mu(n)\bigr)\,,
\]
which is non-negligible.

(ii) Write $a(x) := \aw_{I(x)}(\sigma_\calA(I(x),x))$ and $G_n := \{x : a(x)\ge1/(2q(n))\}$.  Since $a(x)\le1$ and $\mathbb E_x[a(x)] = \adv_\calA(n)\ge1/q(n)$,
\[
    \frac1{q(n)} \;\le\; \Pr_x[x\in G_n]\cdot1 + \frac1{2q(n)}\,, \qquad\text{so}\qquad \frac{|G_n|}{|X_n|}\;\ge\;\frac1{2q(n)}\,.
\]
For $x\in G_n$, the circuit of $\calA$ satisfies hypotheses~(a) and~(b) of \Cref{lem:amplification} with $\eta = 1/(2q(n))$, so by the last sentence of that lemma orbit state preparation is solvable on $G$.
\end{proof}

When $I$ is not injective on orbits, part~(ii) needs $\calA$'s output to stay in the orbit of its seed: a violation may instead put its weight on co-orbits, and then no solver follows.

The assumption is falsifiable, because the public verifier estimates accepted weight.

\begin{proposition}[Falsifiability of the preparation assumption]\label{prop:falsifiable}
Suppose the scheme mixes rapidly.  For an algorithm $\calA$, let a challenger sample $(p,x)\leftarrow\mathcal M_n$, run $\calA(p,x)$, and run $\Ver(p,\cdot)$ on the register that $\calA$ returns.  Then the challenger accepts with probability between $\adv_\calA(n)$ and $\adv_\calA(n)+e^{-2n}$.  So the preparation assumption holds if and only if no quantum polynomial-time $\calA$ makes this efficient challenger accept with non-negligible probability, and it is falsifiable in the sense of Naor~\cite{Naor03}.
\end{proposition}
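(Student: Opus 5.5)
The claim follows by applying \Cref{lem:spectral-overlap} pointwise and then averaging over the challenger's sample. Fix $n$ and $\calA$. For each sample $(p,x)\leftarrow\mathcal M_n$, let $\sigma := \sigma_\calA(p,x)$ be the reduced state of the register $\calA$ returns. The challenger's acceptance probability conditioned on $(p,x)$ is $\mathrm{Acc}_p(\sigma)$. This uses the fact that $\Ver(p,\cdot)$ acts only on that register, so its acceptance probability depends only on the reduced state, by equation~\eqref{eq:acc-def}. \Cref{lem:spectral-overlap} gives
\[
    \aw_p(\sigma) \;\le\; \mathrm{Acc}_p(\sigma) \;\le\; \aw_p(\sigma) + \bigl(1-\aw_p(\sigma)\bigr)(1-\Delta)^{2r} \;\le\; \aw_p(\sigma) + (1-\Delta)^{2r}\,.
\]
Rapid mixing (\Cref{def:rapid-mixing}) gives $(1-\Delta)^{2r}\le e^{-2\Delta r}\le e^{-2n}$ for all large $n$.

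Taking the expectation over $(p,x)\leftarrow\mathcal M_n$ and using the definition of $\adv_\calA(n)$ in \Cref{ass:synth}, the challenger accepts with probability between $\adv_\calA(n)$ and $\adv_\calA(n)+e^{-2n}$. Since $e^{-2n}$ is negligible, $\adv_\calA$ is negligible if and only if the challenger's acceptance probability is. Quantifying over all quantum polynomial-time $\calA$ then gives the stated equivalence.

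For falsifiability in Naor's sense, I would check three things.
\begin{itemize}[leftmargin=*,nosep]
\item The challenger is efficient. Sampling $\mathcal M_n$ means preparing $\ket X$ and measuring it, or simply running $\Mint$ and measuring. By \Cref{def:scheme}(i) and~(iii) this takes polynomial time.
\item $\Ver$ runs in polynomial time, since $r$ is polynomial and each round applies polynomial-time moves.
\item A break, meaning an $\calA$ with non-negligible $\adv_\calA$, makes this public challenger accept with non-negligible probability. That success can be confirmed empirically by repeated runs.
\end{itemize}
The only step needing care is the first one: that the acceptance probability of $\Ver$ on a subsystem depends only on its reduced state. This is immediate because $\Ver$'s accepting Kraus operator acts only on $\mathsf X$, so I expect no real obstacle.
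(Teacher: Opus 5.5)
Your proposal is correct and follows the paper's own proof. Both apply \Cref{lem:spectral-overlap} pointwise at the verifier's $r$, bound $(1-\Delta)^{2r}\le e^{-2n}$ by rapid mixing, average over $(p,x)\leftarrow\mathcal M_n$, and note that the challenger is efficient because $\mathcal M_n$ is samplable and $\Ver$ is public. Your qualification that the $e^{-2n}$ bound holds only for large $n$, as \Cref{def:rapid-mixing} requires, is slightly more careful than the paper's phrasing.
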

\begin{proof}
The challenger is efficient, because $\mathcal M_n$ is efficiently samplable (\Cref{def:measured-mint}) and $\Ver$ is public.  It accepts with probability $\mathbb E[\mathrm{Acc}_p(\sigma_\calA(p,x))]$.  By \Cref{lem:spectral-overlap} at the verifier's $r$, $\aw_p(\sigma)\le\mathrm{Acc}_p(\sigma)\le\aw_p(\sigma)+(1-\Delta)^{2r}\le\aw_p(\sigma)+e^{-2n}$ for every state $\sigma$, and taking expectations over $(p,x)$ gives
\[
    \adv_\calA(n) \;\le\; \Pr[\text{the challenger accepts}] \;\le\; \adv_\calA(n)+e^{-2n}\,.
\]
\end{proof}

\subsection{Measuring first}\label{sec:collapse}

An adversary that measures its banknote is left with the serial number $p$ and one element $x$ of the level set.  The basis state $\ket x$ has accepted weight $|\braket{\psi_{k(x)}}{x}|^2 = 1/|\calC_{k(x)}|$, so the adversary must build both of its outputs from $(p,x)$, and the preparation assumption says that it cannot build even one.

\begin{definition}[Measure-first adversaries]\label{def:measure-first}
$\Meas$ measures the challenge register in the computational basis. For an adversary $\calA$, $\calA\circ\Meas$ applies $\Meas$ and then runs $\calA$. An adversary is \emph{measure-first} if it has the form $\calA\circ\Meas$, and $\win^{\Meas}_\calA:=\win_{\calA\circ\Meas}$.
\end{definition}

\begin{theorem}[Measure-first security and the preparation assumption]\label{thm:collapse-equiv}
Let $(X,S,I)$ be a rapidly mixing invariant money scheme. Then $\win^{\Meas}_\calA(n)=\negl(n)$ for every quantum polynomial-time $\calA$ if and only if the preparation assumption holds. The implication from a violation of the preparation assumption to a winning measure-first adversary holds without rapid mixing.
\end{theorem}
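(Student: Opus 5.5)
Both directions reduce to comparing the measure-first game with the preparation challenger of \Cref{prop:falsifiable}. After $\Meas$, the challenge register has collapsed to $\ket x$ with $(p,x)\leftarrow\mathcal M_n$ (\Cref{def:measured-mint}). A measure-first adversary is therefore exactly an algorithm that receives $(p,x)$ and outputs a state $\rho_{12}$ on $\mathsf R_1\mathsf R_2$.

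\emph{Violation of the preparation assumption gives a measure-first forger (no rapid mixing needed).} Suppose $\adv_\calA(n)\ge1/q(n)$ infinitely often. The forger $\calB\circ\Meas$ runs $\calA(p,x)$ twice, with independent fresh ancillas, and outputs the two output registers. The joint state is $\sigma_\calA(p,x)^{\otimes2}$ after discarding ancillas, so the two verifications are independent conditional on $(p,x)$. Each accepts with probability at least $\aw_p(\sigma_\calA(p,x))$ by the lower bound of \Cref{lem:spectral-overlap}, which holds for every $r$ and needs no gap. Hence
\[
\win_{\calB\circ\Meas}\;\ge\;\mathbb E_{(p,x)}\bigl[\aw_p(\sigma_\calA(p,x))^2\bigr]\;\ge\;\adv_\calA(n)^2\;\ge\;1/q(n)^2
\]
by Jensen, and this is non-negligible.

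\emph{Preparation assumption gives measure-first security.} Given a measure-first $\calA\circ\Meas$ winning with probability $\ge1/q$, define a preparation adversary $\calA'(p,x)$ that runs $\calA$ on $(p,x)$ and outputs register $\mathsf R_1$ only, tracing out $\mathsf R_2$. Its reduced output is $\rho_1$, so
\[
\Pr[\Ver(p,\rho_1)\text{ accepts}]\;\ge\;\win^{\Meas}_\calA\;\ge\;1/q.
\]
The upper bound of \Cref{lem:spectral-overlap}, with rapid mixing, gives $\mathrm{Acc}_p(\rho_1)\le\aw_p(\rho_1)+e^{-2n}$. Taking expectations, $\adv_{\calA'}\ge1/q-e^{-2n}$, which contradicts the assumption. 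This is the route through \Cref{prop:falsifiable}.

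The only point needing care, and the main obstacle, is the first direction's use of the product of acceptance probabilities. It relies on the two runs being independent given the classical $(p,x)$, which holds because after $\Meas$ the adversary holds no quantum information shared between the runs. It also relies on the lower bound $\aw\le\mathrm{Acc}$, which is gap-free. This is why that direction needs no rapid mixing, while the converse needs rapid mixing to turn acceptance back into accepted weight.
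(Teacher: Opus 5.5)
Your proposal is correct and follows the paper's proof essentially step for step. The direction that avoids rapid mixing uses the same two independent runs with the gap-free lower bound of \Cref{lem:spectral-overlap} and Jensen. The converse uses the same one-register reduction with the upper bound $\mathrm{Acc}_p\le\aw_p+e^{-2n}$; the paper keeps $\mathsf R_2$ where you keep $\mathsf R_1$, which makes no difference.
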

\begin{proof}
Measuring $\ket{\$_p}$ in the computational basis returns an element uniform on $P_p$, so the pair $(p,x)$ that a measure-first adversary $\calA\circ\Meas$ works on is distributed as $\mathcal M_n$ (\Cref{def:measured-mint}).  Hence
\begin{equation}\label{eq:collapse-classical}
    \win^{\Meas}_{\calA}(n) \;=\; \Pr_{(p,x)\leftarrow\mathcal M_n}\bigl[\,\text{both output registers of } \calA(p,\ket{x}) \text{ accept at } p\,\bigr]\,,
\end{equation}

\emph{A winning measure-first adversary violates the assumption.} If $\win^{\Meas}_\calA\ge1/q(n)$ for infinitely many $n$, then by equation~\eqref{eq:collapse-classical} the second output of $\calA(p,\ket x)$, in reduced state $\sigma$, passes with expected probability at least $1/q$.  At the verifier's $r$, \Cref{lem:spectral-overlap} gives $\mathrm{Acc}_p(\sigma)\le\aw_p(\sigma)+(1-\Delta)^{2r}\le\aw_p(\sigma)+e^{-2n}$, so
\[
    \mathbb E\bigl[\aw_p(\sigma)\bigr] \;\ge\; \mathbb E\bigl[\mathrm{Acc}_p(\sigma)\bigr]-e^{-2n} \;\ge\; 1/q-e^{-2n}\,.
\] The algorithm that runs $\calA(p,\ket x)$ and keeps $\mathsf R_2$ is quantum polynomial-time on $(p,x)$.

\emph{A violation yields a winning measure-first adversary (rapid mixing not used).} If $\adv_\calB\ge1/\poly$ infinitely often, let $\calA(p,\ket x)$ run $\calB(p,x)$ twice independently. For fixed $(p,x)$ the two outputs are independent, and each passes with probability $\mathrm{Acc}_p(\sigma_\calB)\ge\aw_p(\sigma_\calB)$ by the lower bound of \Cref{lem:spectral-overlap}, which holds for every $r$.  So by Jensen's inequality both pass with probability
\[
    \mathbb E\bigl[\mathrm{Acc}_p(\sigma_\calB)^2\bigr] \;\ge\; \mathbb E\bigl[\aw_p(\sigma_\calB)^2\bigr] \;\ge\; \bigl(\mathbb E\,\aw_p(\sigma_\calB)\bigr)^2 \;=\; \adv_\calB^2\,.
\]
\end{proof}

\subsection{The transfer assumption and the decomposition of security}\label{sec:transfer}

\Cref{thm:collapse-equiv} settles the game for adversaries that measure first.  For the others we assume that some adversary that does not measure first has at most a polynomial advantage.

\begin{assumption}[Transfer assumption]\label{ass:transfer}
There is a polynomial $q$ such that for every quantum polynomial-time adversary $\calA$ there are a quantum polynomial-time measure-first adversary $\calB$ and a negligible $\nu_\calA$ with
\[
    \win_\calA(n)\;\le\; q(n)\cdot\win_\calB(n)+\nu_\calA(n)\,.
\]
\end{assumption}

The measure-first adversary $\calB$ may differ from $\calA\circ\Meas$.  An adversary that forwards its banknote as one of its outputs loses that output if it measures first, so comparing $\calA$ only with $\calA\circ\Meas$ would make the assumption too strong.

\begin{theorem}[Security decomposition]\label{thm:general}
Let $(X,S,I)$ be an invariant money scheme.
\begin{enumerate}[label=\textup{(\roman*)},leftmargin=*,nosep]
\item If the scheme is secure, the preparation assumption holds.
\item If the scheme mixes rapidly, it is secure if and only if the preparation and transfer assumptions both hold.
\end{enumerate}
\end{theorem}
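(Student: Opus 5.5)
Both parts reduce to \Cref{thm:collapse-equiv}. The transfer assumption is used only to carry the measure-first bound over to general adversaries.

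\emph{Part (i).} I argue the contrapositive. Suppose the preparation assumption fails, so some quantum polynomial-time $\calB$ has $\adv_\calB(n)\ge1/\poly(n)$ for infinitely many $n$. The second half of the proof of \Cref{thm:collapse-equiv} turns $\calB$ into a measure-first adversary $\calA\circ\Meas$: it runs $\calB(p,x)$ twice and wins with probability at least $\adv_\calB^2$. That direction uses only the lower bound $\mathrm{Acc}_p\ge\aw_p$ of \Cref{lem:spectral-overlap}, which holds for every $r$, so it does not need rapid mixing. A measure-first adversary is in particular a quantum polynomial-time adversary in the game of \Cref{def:security}. It wins with non-negligible probability, so the scheme is not secure. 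This is why part~(i) carries no rapid-mixing hypothesis.

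\emph{Part (ii), forward direction.} Security gives the preparation assumption by part~(i). For the transfer assumption I take $q:=1$. For each $\calA$ I let $\calB$ be any fixed measure-first adversary, for instance $\calA\circ\Meas$, and set $\nu_\calA:=\win_\calA$. Security makes $\nu_\calA$ negligible, and $\win_\calA\le\win_\calB+\nu_\calA$ holds trivially. So the transfer assumption holds, with the polynomial $q$ independent of $\calA$, as \Cref{ass:transfer} requires.

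\emph{Part (ii), backward direction.} Assume both assumptions and rapid mixing, and fix a quantum polynomial-time $\calA$. The transfer assumption supplies a measure-first $\calB=\calB'\circ\Meas$ with $\win_\calA\le q\cdot\win_\calB+\nu_\calA$. Since $\win_\calB=\win^{\Meas}_{\calB'}$, the first direction of \Cref{thm:collapse-equiv} applies; it needs rapid mixing and the preparation assumption, both of which hold here, so $\win^{\Meas}_{\calB'}$ is negligible. A fixed polynomial times a negligible function, plus a negligible function, is negligible, so $\win_\calA=\negl(n)$ and the scheme is secure.

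The only step needing care is the bookkeeping. Rapid mixing must enter only in the backward direction of~(ii), through the upper bound $\mathrm{Acc}_p\le\aw_p+e^{-2n}$. The polynomial $q$ must be uniform over adversaries, so that the product $q\cdot\win_\calB$ stays negligible.
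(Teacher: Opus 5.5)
Your proof is correct and follows the paper's argument essentially step for step. Part~(i) uses the rapid-mixing-free half of \Cref{thm:collapse-equiv}, and part~(ii) combines that theorem with the transfer assumption, taking $q\equiv1$ and $\nu_\calA:=\win_\calA$ for the converse. One small correction to your closing remark: uniformity of $q$ over adversaries is not what keeps $q\cdot\win_\calB$ negligible, since for a fixed $\calA$ any polynomial would do. It matters only because \Cref{ass:transfer} quantifies $q$ first, and your choice $q:=1$ satisfies that.
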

\begin{proof}
(i) If the preparation assumption fails, the half of \Cref{thm:collapse-equiv} that does not use rapid mixing gives an efficient $\calA$ with $\win_{\calA\circ\Meas}$ non-negligible, and $\calA\circ\Meas$ is efficient.
(ii) If both hold, every efficient measure-first $\calB=\calB'\circ\Meas$ has $\win_\calB=\win^{\Meas}_{\calB'}=\negl(n)$ by \Cref{thm:collapse-equiv}, so \Cref{ass:transfer} makes every $\win_\calA$ negligible. Conversely a secure scheme satisfies the preparation assumption by part~(i), and the transfer assumption with $q\equiv1$, $\nu_\calA:=\win_\calA$ and any $\calB$.
\end{proof}

\subsection{A barrier for serial-fixing reductions}\label{sec:transfer-necessity}

Unlike the preparation assumption, the transfer assumption is not falsifiable on its own: a forger refutes it only together with a proof that every efficient measure-first forger fails.  One would therefore like to derive it from the preparation assumption, and by \Cref{thm:general}, for rapidly mixing schemes that satisfy the preparation assumption, this is the same as proving security.  We show that reductions that call the adversary only at the serial number they are given cannot do so.

\begin{definition}[Serial-fixing reduction]\label{def:serial-fixing}
Let $\calR$ be a quantum polynomial-time algorithm that receives $(p,x)\leftarrow\mathcal M_n$ and outputs one register, in reduced state $\sigma_{\calR^\calA}(p,x)$. It accesses an adversary $\calA$ for the game of \Cref{def:security} only through a unitary $U_\calA=\sum_{p'}\ket{p'}\!\bra{p'}\otimes U_{\calA,p'}$ implementing $\calA$ on a serial-number register, a challenge register and further registers.  Precisely, $\calR$ is a polynomial-size circuit containing at most $t=t(n)=\poly(n)$ instances of $U_\calA$ and $U_\calA^\dagger$, and all of their input registers are wires of $\calR$. Say that $\calR$ is
\begin{enumerate}[label=\textup{(\roman*)},leftmargin=*,nosep]
\item \emph{serial-fixing} if the serial-number register is in the basis state $\ket p$ at every instance of $U_\calA$ and $U_\calA^\dagger$;
\item \emph{sound} if for every $\calA$, efficient or not, $\win_\calA(n)\ge1/\poly(n)$ infinitely often implies that $\mathbb E_{(p,x)}[\aw_p(\sigma_{\calR^\calA}(p,x))]\ge1/\poly(n)$ infinitely often.
\item a \emph{fully black-box derivation of the transfer assumption from the preparation assumption} if it is sound.  This is the right notion because, for rapidly mixing schemes that satisfy the preparation assumption, the transfer assumption holds exactly when the scheme is secure (\Cref{thm:general}).
\end{enumerate}
\end{definition}

\paragraph{What the definition allows.}
\begin{itemize}[leftmargin=*,nosep]
\item Access through $U_\calA$ is at least as strong as access to $\calA$ as a channel.
\item $\calR$ may prepare challenges in superposition, rewind, let calls interfere, and initialize every register it passes to $U_\calA$, including the adversary's workspace.  Any black-box procedure that tries to extract a path from the adversary is of this kind.
\item Soundness quantifies over inefficient adversaries, as is standard for fully black-box reductions.
\end{itemize}

\begin{theorem}[Barrier for serial-fixing reductions]\label{thm:transfer-barrier}
Let $(X,S,I)$ be an invariant money scheme satisfying the preparation assumption.  Then no serial-fixing reduction is sound, so none is a fully black-box derivation of the transfer assumption from the preparation assumption.  Specifically, for each serial number $p'$ let $\Syn_{p'}$ be a unitary on one register with $\Syn_{p'}\ket0=\ket{\$_{p'}}$, and let $\calF^\ast$ be the adversary implemented by
\[
U^\ast:=\sum_{p'}\ket{p'}\!\bra{p'}\otimes\Bigl(\Pi_1^{(p')}\otimes \Syn_{p'}+\bigl(\openone-\Pi_1^{(p')}\bigr)\otimes\openone\Bigr)
\]
on its serial-number register, its challenge register and a second output register. Then
\begin{enumerate}[label=\textup{(\roman*)},leftmargin=*,nosep]
\item $\calF^\ast$ wins the game of \Cref{def:security} with probability $1$; and
\item for every serial-fixing reduction $\calR$, $\mathbb E_{(p,x)}[\aw_p(\sigma_{\calR^{\calF^\ast}}(p,x))]=\negl(n)$.
\end{enumerate}
\end{theorem}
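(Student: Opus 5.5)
Part~(i) is direct.  The challenger hands $\calF^\ast$ the serial number $p$ in the basis state $\ket p$ and the banknote $\ket{\$_p}$ in the challenge register, with the second output register initialized to $\ket0$.  Since $\ket{\$_p}$ lies in the range of $\Pi_1^{(p)}$, the first branch of $U^\ast$ applies, so $U^\ast$ leaves the challenge register holding $\ket{\$_p}$ and writes $\Syn_p\ket0=\ket{\$_p}$ into the second register.  Both outputs are $\ket{\$_p}$, and verification accepts the banknote with certainty (\Cref{sec:framework}), so $\win_{\calF^\ast}=1$.  We also check that $U^\ast$ is unitary: $\Pi_1^{(p')}$ and $\openone-\Pi_1^{(p')}$ are complementary projectors, and $\Syn_{p'}$ is unitary.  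Then soundness of any $\calR$ would force $\mathbb E[\aw_p(\sigma_{\calR^{\calF^\ast}})]\ge1/\poly$ infinitely often, which contradicts~(ii).  So~(ii) implies the first sentence of the theorem.

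For~(ii), the plan is a hybrid argument.  Let $U^\circ:=\openone$ be the ``dummy'' adversary that does nothing.  Because $\calR$ is serial-fixing, every instance of $U^\ast$ or $U^{\ast\dagger}$ acts as $V_p:=\Pi_1^{(p)}\otimes\Syn_p+(\openone-\Pi_1^{(p)})\otimes\openone$ on whatever challenge and output wires $\calR$ supplies.  Write $V_p-\openone=\Pi_1^{(p)}\otimes(\Syn_p-\openone)$.  Let $\ket{\phi_j}$ be the global state of $\calR$ just before its $j$-th call in the run where the first $j-1$ calls are replaced by $\openone$.  Replacing the $j$-th call changes the final state by at most $\|(V_p^{(\dagger)}-\openone)\ket{\phi_j}\|\le2\|(\Pi_1^{(p)}\otimes\openone)\ket{\phi_j}\|$, where $\Pi_1^{(p)}$ acts on the wire that $\calR$ feeds in as the challenge register.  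Summing over the $t$ calls and using the triangle inequality gives
\[
\bigl\|\ket{\Phi^{\calF^\ast}}-\ket{\Phi^{\circ}}\bigr\|\;\le\;2\sum_{j=1}^{t}\bigl\|(\Pi_1^{(p)}\otimes\openone)\ket{\phi_j}\bigr\|\,.
\]
The accepted weight of the output register differs between the two runs by at most twice this trace distance.

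The key observation is that each hybrid $\ket{\phi_j}$, and the final state $\ket{\Phi^\circ}$, is produced by a quantum polynomial-time algorithm on input $(p,x)$ that makes no oracle calls.  We obtain it by truncating $\calR$ and replacing every call by the identity.  Such an algorithm can output the wire on which $\calR$ places its $j$-th challenge register.  By the preparation assumption, that register has expected accepted weight $\negl(n)$, so $\mathbb E_{(p,x)}\|(\Pi_1^{(p)}\otimes\openone)\ket{\phi_j}\|^2=\negl(n)$.  The same holds for the output register of the dummy run.  For each of the polynomially many $j$ the assumption gives a negligible $\mu_j$, and we must make these uniform.  To do so, we take the single algorithm that picks $j\in\{0,\dots,t\}$ uniformly at random and outputs the corresponding register.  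Its advantage is the average of the individual advantages, so every term is at most $(t+1)$ times one negligible function.

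To finish, apply Jensen (concavity of the square root) to pass from expected squared norms to expected norms:
\[
\mathbb E\|(\Pi_1^{(p)}\otimes\openone)\ket{\phi_j}\|\le(\mathbb E\|\cdot\|^2)^{1/2}=\negl(n).
\]
Since $t=\poly(n)$, this gives
\[
\mathbb E[\aw_p(\sigma_{\calR^{\calF^\ast}})]\le\mathbb E[\aw_p(\sigma_{\calR^{\circ}})]+4\sum_j\mathbb E\|\cdot\|=\negl(n).
\]
I expect the main obstacle to be the bookkeeping in the hybrid step.  The challenge wire at call $j$ may be entangled with the workspace and with $\calR$'s own copy of $x$.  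Extracting it as ``the output register'' of an efficient algorithm must then be legitimate for \Cref{ass:synth}, which bounds the reduced state and so allows exactly this.  We also need the adjoint calls to be handled by the same bound, which works because $U^{\ast\dagger}-\openone$ again factors through $\Pi_1^{(p)}$ on the challenge wire.
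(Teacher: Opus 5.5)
Your proposal is correct and follows the paper's proof essentially step for step. It uses the same do-nothing adversary, the same one-call-at-a-time hybrid bounded by $2\|(\Pi_1^{(p)}\otimes\openone)\ket{\phi_j}\|$, the same random-index algorithm that turns the preparation assumption into a single negligible bound, and the same Jensen/Cauchy--Schwarz finish. The only differences are cosmetic: you fold the dummy run's output register into the random index ($j=0$) instead of treating $\calR^{\tilde\calF}$ as a separate algorithm, and your constant $4$ is looser than the paper's $2$, which does not affect the conclusion.
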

\begin{proof}
(i) Since $\Pi_1^{(p)}\ket{\$_p}=\ket{\$_p}$,
\[
    U^\ast\bigl(\ket p\ket{\$_p}\ket0\bigr) \;=\; \ket p\otimes\Pi_1^{(p)}\ket{\$_p}\otimes\Syn_p\ket0 \;+\; \ket p\otimes\bigl(\openone-\Pi_1^{(p)}\bigr)\ket{\$_p}\otimes\ket0 \;=\; \ket p\ket{\$_p}\ket{\$_p}\,,
\]
and both registers pass.

(ii) The idea is to compare $\calR$ run with $\calF^\ast$ against $\calR$ run with the adversary that does nothing.  The two differ only on the part of each call's input that has accepted weight, and the preparation assumption makes that part negligible.  Let $\tilde\calF$ be the adversary that does nothing, $U_\bot:=\openone$.  Then $\calR^{\tilde\calF}$ is quantum polynomial-time on $(p,x)$. Assume without loss of generality that $\calR$ defers its measurements. At every call the serial-number register holds $p$, where $U^\ast-U_\bot=\Pi_1^{(p)}\otimes(\Syn_p-\openone)$ and $U^{\ast\dagger}-U_\bot^\dagger=\Pi_1^{(p)}\otimes(\Syn_p^\dagger-\openone)$, so both differences map any vector $\ket\vartheta$ to a vector of norm at most $2\|(\Pi_1^{(p)}\otimes\openone)\ket\vartheta\|$. Let $\ket{\Theta_{i-1}}$ be the state of the run $\calR^{\tilde\calF}$ just before its $i$-th call and $\sigma_i$ the reduced state of its challenge register. Replacing the calls one at a time \cite{BBBV97},
\[
\bigl\|\mathrm{out}(\calR^{\calF^\ast})-\mathrm{out}(\calR^{\tilde\calF})\bigr\|_{\mathrm{tr}}\le4\sum_{i=1}^t\bigl\|(\Pi_1^{(p)}\otimes\openone)\ket{\Theta_{i-1}}\bigr\|=4\sum_{i=1}^t\aw_p(\sigma_i)^{1/2}.
\]
Let $\calB$ be the algorithm that on $(p,x)$ draws $i$ uniformly from $\{1,\dots,t\}$, runs $\calR^{\tilde\calF}$ up to its $i$-th call and outputs the challenge register. It is quantum polynomial-time, so $\adv_\calB=t^{-1}\sum_i\mathbb E\,\aw_p(\sigma_i)=\negl(n)$ by \Cref{ass:synth}.  Accepted weight changes by at most half the trace norm of a difference, and $\calR^{\tilde\calF}$ is quantum polynomial-time on $(p,x)$, so taking expectations over $(p,x)$,
\begin{align*}
    \mathbb E\,\aw_p\bigl(\sigma_{\calR^{\calF^\ast}}\bigr)
    &\;\le\; \mathbb E\,\aw_p\bigl(\sigma_{\calR^{\tilde\calF}}\bigr) + 2\sum_{i=1}^t\mathbb E\,\aw_p(\sigma_i)^{1/2}\\
    &\;\le\; \adv_{\calR^{\tilde\calF}} + 2\Bigl(t\sum_{i=1}^t\mathbb E\,\aw_p(\sigma_i)\Bigr)^{1/2}
    \;=\; \adv_{\calR^{\tilde\calF}} + 2t\,\adv_\calB^{1/2} \;=\; \negl(n)\,,
\end{align*}
using Jensen's inequality and Cauchy--Schwarz in the second line, and \Cref{ass:synth} for $\calR^{\tilde\calF}$ and $\calB$.  This is part~(ii).  By part~(i), $\calF^\ast$ witnesses that $\calR$ is not sound.
\end{proof}

The proof follows the separation technique of Gentry and Wichs~\cite{GW11}: an inefficient adversary that breaks the scheme, and an efficient one that no reduction of the class can tell apart from it.

\paragraph{Reductions outside the class.}
\begin{itemize}[leftmargin=*,nosep]
\item A reduction that calls the adversary at another serial number, as the reduction from lightning to unforgeability does (\Cref{sec:lightning}).
\item A reduction with non-black-box access to the adversary, such as one that extracts information from the adversary's final state.
\item A reduction that is sound only against efficient adversaries, since $\calF^\ast$ is inefficient.
\item A reduction from a problem whose instance is itself a banknote, such as the duplication problem of Montgomery and Sharif~\cite[Thm.~9.2]{MS24}, which does not start from $(p,x)$.
\end{itemize}

\subsection{Quantum lightning}\label{sec:lightning}

In quantum lightning~\cite{Zhandry21} the adversary receives no banknote and chooses the serial number at which its two outputs are tested (\Cref{def:lightning}).  We call such an adversary a \emph{generator}.  Lightning implies unforgeability, by a reduction that runs $\Mint(1^n)$ itself, hands $(p,\ket{\$_p})$ to the forger, and outputs $p$ with the forger's two registers~\cite[Thm.~12]{LMZ22}.  That reduction calls the forger at a serial number it minted, so it is not serial-fixing.  Lightning needs hardness at serial numbers that a generator chooses, not only at those that the mint issues.

\begin{assumption}[Adversarial-instance assumption]\label{ass:adversarial-synthesis}
For every quantum polynomial-time $\calG$ outputting a classical serial number $p$ and a register in reduced state $\sigma$,
\[
    \gamma_\calG(n)\;:=\;\mathbb E\bigl[\aw_p(\sigma)\bigr]\;=\;\negl(n)\,.
\]
\end{assumption}

\begin{proposition}[Lightning from adversarial instances]\label{prop:lightning}
\leavevmode
\begin{enumerate}[label=\textup{(\roman*)},leftmargin=*,nosep]
\item The adversarial-instance assumption implies the preparation assumption.
\item A rapidly mixing invariant money scheme that satisfies the adversarial-instance assumption is quantum lightning.
\end{enumerate}
\end{proposition}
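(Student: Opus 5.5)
For part~(i), I would reduce directly. Given an algorithm $\calA$ that violates the preparation assumption, I build a generator $\calG$ for the adversarial-instance assumption. On input $1^n$, $\calG$ samples $(p,x)\leftarrow\mathcal M_n$ itself, which is efficient by \Cref{def:measured-mint}. It then runs $\calA(p,x)$ and outputs $p$ together with $\calA$'s output register. The pair $(p,\sigma)$ that $\calG$ produces is distributed exactly as in \Cref{ass:synth}, so $\gamma_\calG(n)=\adv_\calA(n)$. A non-negligible $\adv_\calA$ therefore contradicts the adversarial-instance assumption. This needs no rapid mixing.

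For part~(ii), I would argue by contraposition. Suppose a generator $\calA$ wins the lightning game with probability at least $1/q(n)$ for infinitely many $n$. The first step is to pass from acceptance to accepted weight for a single register. For a fixed classical outcome $p$, with joint output $\rho_{12}$, the probability that both verifications accept is at most the probability that $\Ver(p,\rho_2)$ accepts, which is $\mathrm{Acc}_p(\rho_2)$. By \Cref{lem:spectral-overlap} and \Cref{def:rapid-mixing}, this is at most $\aw_p(\rho_2)+e^{-2n}$.

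Let $\calG$ run $\calA$, output $p$, and output the register $\mathsf R_2$, discarding $\mathsf R_1$. Taking expectations over $\calA$'s randomness, including the distribution of $p$, gives
\[
\gamma_\calG(n)\;\ge\;\win^{\mathrm L}_\calA(n)-e^{-2n}\;\ge\;1/q(n)-e^{-2n}
\]
for infinitely many $n$. This contradicts \Cref{ass:adversarial-synthesis}.

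The only subtle point is that $p$ is chosen adversarially and may be correlated with $\rho_{12}$. I handle this by conditioning on the classical outcome $p$ before applying \Cref{lem:spectral-overlap}. That lemma holds for every state and every $p$, so the conditional bound is valid, and averaging over $p$ preserves it.

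One might worry that the lightning game requires both registers to pass while the assumption bounds only one. This does not matter, because winning implies the second register passes, so the bound goes in the right direction. I expect no real obstacle; the argument is essentially the first half of the proof of \Cref{thm:collapse-equiv}, with the mint sample replaced by the generator's choice.
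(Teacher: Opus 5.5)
Your proposal is correct and matches the paper's proof. For part~(i) the generator samples $(p,x)\leftarrow\mathcal M_n$ itself, so that $\gamma_\calG=\adv_\calA$. For part~(ii) the generator keeps only $p$ and $\mathsf R_2$, giving $\win^{\mathrm L}_\calA(n)\le\mathbb E[\mathrm{Acc}_p(\sigma_2)]\le\gamma_\calG(n)+e^{-2n}$ via \Cref{lem:spectral-overlap} under rapid mixing.
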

\begin{proof}
(i) For an algorithm $\calA$ as in \Cref{ass:synth}, let $\calG$ sample $(p,x)\leftarrow\mathcal M_n$, which is efficient (\Cref{def:measured-mint}), and run $\calA(p,x)$.  Then $\gamma_\calG(n) = \adv_\calA(n)$, so $\gamma_\calG(n) = \negl(n)$ for every $\calG$ gives $\adv_\calA(n) = \negl(n)$ for every $\calA$.

(ii) Let $\calA$ be a lightning adversary, and let $\calG$ run $\calA$ and keep only the serial number $p$ and the register $\mathsf R_2$, in reduced state $\sigma_2$, so that $\gamma_\calG(n)=\mathbb E[\aw_p(\sigma_2)]$.  Winning requires $\mathsf R_2$ to pass, so by \Cref{lem:spectral-overlap} at the verifier's $r$,
\[
    \win^{\mathrm L}_\calA(n) \;\le\; \mathbb E\bigl[\mathrm{Acc}_p(\sigma_2)\bigr] \;\le\; \gamma_\calG(n)+e^{-2n} \;=\; \negl(n)\,,
\]
by the adversarial-instance assumption.
\end{proof}

%% file: conclusion.tex
\section{Conclusion}\label{sec:conclusion}

We close with the questions that the results leave open.

\paragraph{Is the transfer assumption true?}
Given the preparation assumption and rapid mixing, refuting the transfer assumption is the same as exhibiting a forger (\Cref{thm:general}), and that forger would have to use the coherence of a banknote issued by the mint, since forgers that measure first then fail (\Cref{thm:collapse-equiv}).  The only use of that coherence known for invariant money is to forward the banknote as one of the two outputs.  If the preparation assumption holds, \Cref{thm:transfer-barrier} rules out deriving the transfer assumption with fully black-box reductions that call the forger only at the serial number they are given.  Whether reductions that also call the forger at serial numbers they mint can derive the transfer assumption is open.  For other families of states, a coherent copy is worth more than every measure-first strategy.  \emph{Quantum fire} consists of states that can be cloned from a coherent copy but cannot be efficiently rebuilt from any classical description~\cite{NZ24,BNZ25}, so for such states the analogue of the transfer assumption fails.  Such states exist unconditionally relative to a classical oracle~\cite{CGS25}.  No invariant money scheme is known whose banknotes behave this way.

\paragraph{Is orbit state preparation hard in the black-box model?}
Lutomirski's lower bound in a black-box model of the moves~\cite{Lutomirski11} is conditional on a query lower bound for orbit membership that is itself open~\cite[\S6]{Lutomirski11}.  With the moves given as an oracle, no unconditional query lower bound for orbit state preparation is known.

\paragraph{Does an unbroken scheme meet the design conditions?}
The two schemes for which this paper verifies all three design conditions, the graph scheme (\Cref{prop:graph-conditions}) and the linear scheme (\Cref{prop:design-insufficient}), are both broken.  Montgomery and Sharif establish the spectral gap of their scheme under a condition on its generating set for which they give heuristic evidence~\cite[Prop.~4.22, \S11]{MS24}.  For the knot scheme, the spectral gap of the move graph and the dispersion of the Alexander polynomial are both open.  Both are questions about the scheme's combinatorics, not about adversaries.

\paragraph{Does the preparation assumption imply quantum lightning?}
The adversarial-instance assumption gives quantum lightning (\Cref{prop:lightning}), and hence security without the transfer assumption.  It quantifies over every serial number that an efficient algorithm can choose, whereas the preparation assumption averages over those that the mint issues.  For schemes in which no efficient algorithm can list an orbit, it is open whether the adversarial-instance assumption follows from the preparation assumption, or from the preparation assumption required at every serial number.

%% file: acknowledgements.tex
\section*{Acknowledgements}

Anthropic Claude Opus 4.6, 4.8, 5, and 5.5 accessed in 2026 have been used to improve exposition of results, derive proofs of propositions and lemmas, check arguments systematically. The final exposition, proofs, and arguments have been edited and verified by the authors.

%% file: appendix.tex
 \appendix
\crefalias{section}{appendix}
\crefalias{subsection}{appendix}

\section{Design conditions for the graph scheme}\label{app:graph-conditions}

\begin{proposition}[Design conditions for the graph scheme]\label{prop:graph-conditions}
The graph scheme of \Cref{sec:coset} on $v$ vertices ($n=\binom v2$), with $r:=nv$, has dispersed serial numbers and large orbits, and it mixes rapidly.  Its spectral gap is at least $1/(v-1)$.
\end{proposition}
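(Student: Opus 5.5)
The plan is to treat the three conditions separately. Throughout, I use that each orbit is an isomorphism class $\calC$ with $|\calC| = v!/|\mathrm{Aut}(x)|$ by \eqref{eq:orbit-stabilizer}. I also use that $\mathrm{wt}(p) = |\calC|/2^n$, because the canonical form separates orbits.

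\emph{Dispersion.} Since $|\calC|\le v!$, every serial number has $\mathrm{wt}(p)\le v!/2^{n}$. We have $v! \le 2^{v\log_2 v}$ while $n=v(v-1)/2$, so this is $2^{-\Omega(v^2)}$, which is negligible in $n$.

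\emph{Large orbits.} An orbit of size at most $q(n)$ needs $|\mathrm{Aut}(x)|\ge v!/q(n)>1$ once $v$ is large. It therefore suffices to show that the fraction of graphs with a nontrivial automorphism is negligible in $v$. I would do the standard union bound over nonidentity $\pi\in\Sym(v)$.
\begin{itemize}
\item If $\pi$ moves $m\ge2$ vertices, the graphs fixed by $\pi$ are those constant on the orbits of $\pi$ acting on unordered vertex pairs. The number of such pair-orbits is at most $n-\Omega(m v)$: each moved vertex lies in about $v$ pairs, and each pair moved by $\pi$ shares its orbit with another pair, so at least about $m(v-m/2)/2$ pairs are paired off.
\item There are at most $v^m$ choices of $\pi$ with support of size $m$.
\end{itemize}
This gives a fraction of at most $\sum_{m\ge2}v^m2^{-cmv}$ for a constant $c>0$, which is $2^{-\Omega(v)}$ and hence negligible in $n$. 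The counting of pair-orbits for large $m$, close to $v$, needs care, but a crude bound $\ge m v/8$ suffices.

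\emph{Spectral gap.} This is the main step. Here $S$ is the set of the $n$ transpositions, and on the orbit of $x$ the matrix $\hat P_\tau$ is the action of $\tau$ on the coset space $\Sym(v)/\mathrm{Aut}(x)$ (\Cref{lem:coset-fibres}(i)). So $\hat B_k$ is the image of the group-algebra element $\tfrac12\bigl(1+\tfrac1n\sum_\tau\tau\bigr)$ in the permutation representation $\mathrm{Ind}_{\mathrm{Aut}(x)}^{\Sym(v)}\mathbf 1$. The sum over all transpositions is central, so it acts on each irreducible $\lambda$ occurring there by the scalar $r_\lambda=\chi_\lambda(\tau)/\dim\lambda$. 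The trivial representation occurs exactly once, by \Cref{lem:eigenspace}, and corresponds to the orbit state.

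Every other eigenvalue of $\hat B_k$ is $\tfrac12(1+r_\lambda)$ for some nontrivial $\lambda$. By the Diaconis--Shahshahani analysis of the random transposition walk, $r_\lambda\le r_{(v-1,1)}=(v-3)/(v-1)$ for all $\lambda\ne(v)$; this follows from Frobenius' formula and monotonicity of $r_\lambda$ in the dominance order. Hence every nontrivial eigenvalue is at most $1-1/(v-1)$, so $\Delta\ge1/(v-1)$. The laziness already takes care of $\lambda=(1^v)$ with $r=-1$.

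Finally, $r\Delta\ge nv/(v-1)\ge n$ verifies \Cref{def:rapid-mixing}. The step I expect to be the obstacle is justifying the character-ratio bound for all nontrivial $\lambda$. I would cite it rather than reprove it, noting only that a subrepresentation of the regular representation inherits the bound.
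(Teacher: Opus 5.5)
Your proof is correct and follows the paper's argument. Dispersion comes from $|\calC|\le v!$, large orbits from the rarity of graphs with a non-trivial automorphism, and the gap from the Diaconis--Shahshahani bound $(v-3)/(v-1)$ carried over to the coset walk. The differences are minor. You prove the asymmetry estimate by a union bound over permutations where the paper cites Erd\H{o}s--R\'enyi. Your decomposition of $\mathrm{Ind}_{\mathrm{Aut}(x)}^{\Sym(v)}\mathbf 1$ into isotypic components is the representation-theoretic form of the paper's step of lifting eigenfunctions on cosets to eigenfunctions on $\Sym(v)$.
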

\begin{proof}
\emph{Dispersion.}  $I$ is injective on orbits, so each level set is one orbit, of at most $v!$ graphs, and
\[
    \mathrm{wt}(p) \;=\; \frac{|\calC_k|}{2^n} \;\le\; \frac{v!}{2^{\binom v2}} \;\le\; 2^{\,v\log v-\binom v2} \;=\; 2^{-\Omega(v^2)}\,,
\]
which is negligible in $n = \binom v2$.

\emph{Large orbits.}  Measuring a minted banknote yields a uniformly random graph $x$ (\Cref{def:measured-mint}).  By equation~\eqref{eq:orbit-stabilizer}, its orbit has at most $q(n)$ elements only if $|\mathrm{Aut}(x)|\ge v!/q(n)$, which exceeds $1$ for large $n$, so
\[
    \Pr_x\bigl[|\calC_{k(x)}|\le q(n)\bigr] \;\le\; \Pr_x\bigl[\mathrm{Aut}(x)\ne\{1\}\bigr] \;=\; 2^{-\Omega(v)} \;=\; 2^{-\Omega(\sqrt n)}\,,
\]
the fraction of graphs with a non-trivial automorphism~\cite{ER63}.

\emph{Gap.}  Write $\mathcal T$ for the set of transpositions and $M$ for the non-lazy walk on $\Sym(v)$, $(Mf)(\pi) = |\mathcal T|^{-1}\sum_{\tau\in\mathcal T}f(\tau\pi)$.  Its largest non-trivial eigenvalue is $(v-3)/(v-1)$ and its smallest is $-1$~\cite{DS81}.  The lazy walk $\tfrac12(\openone+M)$ maps each eigenvalue $\lambda$ of $M$ to $(1+\lambda)/2$, so its eigenvalues lie in $[0,1]$ and
\[
    \max_{\text{non-trivial}}\;\tfrac12(1+\lambda) \;=\; \tfrac12\Bigl(1+\frac{v-3}{v-1}\Bigr) \;=\; \frac{v-2}{v-1} \;=\; 1-\frac1{v-1}\,.
\]
For a graph $x$, renaming by $\tau$ sends $\pi\cdot x$ to $(\tau\pi)\cdot x$, and left multiplication by $\tau$ sends each left coset $K$ of $\mathrm{Aut}(x)$ to the left coset $\tau K$.  So under the bijection $K\mapsto K\cdot x$ of equation~\eqref{eq:fibre-coset}, the walk on the orbit of $x$ is the walk on left cosets
\[
    (\bar Mf)(K) \;:=\; |\mathcal T|^{-1}\sum_{\tau\in\mathcal T}f(\tau K)\,.
\]
A function $f$ on cosets lifts to $\tilde f(\pi) := f(\pi\,\mathrm{Aut}(x))$, and
\[
    (M\tilde f)(\pi) \;=\; |\mathcal T|^{-1}\sum_{\tau\in\mathcal T}f\bigl(\tau\pi\,\mathrm{Aut}(x)\bigr) \;=\; (\bar Mf)\bigl(\pi\,\mathrm{Aut}(x)\bigr)\,, \qquad \langle\tilde f,\mathbf 1\rangle \;=\; |\mathrm{Aut}(x)|\,\langle f,\mathbf 1\rangle\,.
\]
If $\bar Mf = \lambda f$ with $f\perp\mathbf 1$, then $M\tilde f = \lambda\tilde f$ with $\tilde f\perp\mathbf 1$.  Every non-trivial eigenvalue of the walk on the orbit is therefore a non-trivial eigenvalue of the walk on $\Sym(v)$, and the same holds for the lazy walks $\tfrac12(\openone+\bar M)$ and $\tfrac12(\openone+M)$.  Hence $\Delta\ge1/(v-1)$, and $r\Delta\ge nv/(v-1)\ge n$.
\end{proof}

\section{The verification walk as an index set}\label{app:walk}

This appendix gives the details behind the remark in \Cref{sec:oracle} that the random walk of verification does not supply the missing erasure.  Fix a seed $x$ and write $\calC:=\calC_{k(x)}$.  Let $W = [2|S|]^T$ be the sequences of $T$ indices of the move register (\Cref{sec:framework}), and let $\varphi_x(w)$ be the result of applying them to $x$ in turn, the identity indices acting trivially.  Write $\ket\Psi := |W|^{-1/2}\sum_{w\in W}\ket w\ket{\varphi_x(w)}$ for the $T$-step lazy walk from $x$ with its path recorded.

\begin{proposition}[The verification walk as an index set]\label{prop:walk-index}
In parts~(i)--(iii), let $T:=\lceil 2n\ln2/\Delta\rceil$, and suppose the scheme mixes rapidly, so that $T$ is polynomial.
\begin{enumerate}[label=\textup{(\roman*)},leftmargin=*,nosep]
\item $(W,\varphi_x)$ is an index set for $\calC$ at $x$, and its fibres satisfy $|W_{x\to y}| = |W|\,(\hat B^T)_{yx}$ with
\[
    (1-2^{-n})\,\frac{|W|}{|\calC|} \;\le\; |W_{x\to y}| \;\le\; (1+2^{-n})\,\frac{|W|}{|\calC|}\qquad\text{for every } y\in\calC\,.
\]
\item There are a set $F\subseteq W$ and relabellings $\tau_y$ of $W$ with $\braket{F}{\tau_y(W_{x\to y})}\ge1-2^{-n+1}$ for every $y\in\calC$.
\item If relabellings $\tau_y$ and a set $F$ satisfy $\braket{F}{\tau_y(W_{x\to y})}\ge\varepsilon$ for every $y\in\calC$, then the element register of $\Gamma\ket\Psi$, with $\Gamma$ as in \Cref{lem:uncompute}, has accepted weight at least $\varepsilon^2(1-2^{-n})$.
\item For every $T$, and whatever the spectral gap, the element register of $\ket\Psi$ has accepted weight exactly $1/|\calC|$.
\end{enumerate}
\end{proposition}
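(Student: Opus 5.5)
Part~(i) comes first, and everything else rests on it. A uniformly random $w\in W=[2|S|]^T$ is exactly $T$ independent lazy steps: each index is a move $s_i$ with probability $\tfrac12\cdot|S|^{-1}$ and the identity with probability $\tfrac12$. So $\Pr_w[\varphi_x(w)=y]=(\hat B^T)_{yx}$, which gives $|W_{x\to y}|=|W|(\hat B^T)_{yx}$. The map $\varphi_x$ is polynomial time because $T$ is polynomial under rapid mixing. Its image is $\calC$ provided every fibre is nonempty, and that follows from the bound. For the bound, expand $\ket x$ in the eigenbasis of $\hat B_k$:
\[
(\hat B^T)_{yx}-|\calC|^{-1}=\sum_{j\ge1}\lambda_{k,j}^T\braket{y}{\psi_{k,j}}\braket{\psi_{k,j}}{x}.
\]
Cauchy--Schwarz bounds the right side in absolute value by $(1-\Delta)^T\le e^{-\Delta T}\le 2^{-2n}$. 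Since $|\calC|\le2^n$, we have $2^{-2n}\le 2^{-n}/|\calC|$, which is the stated two-sided bound.

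For~(ii), let $m:=\min_y|W_{x\to y}|$ and fix any $F\subseteq W$ with $|F|=m$. For each $y$, choose a bijection $\tau_y$ of $W$ that maps some $m$ elements of $W_{x\to y}$ onto $F$ and extends arbitrarily. Then
\[
\braket{F}{\tau_y(W_{x\to y})}=\frac{m}{\sqrt{m\,|W_{x\to y}|}}=\sqrt{m/|W_{x\to y}|}.
\]
By~(i), $m/|W_{x\to y}|\ge(1-2^{-n})/(1+2^{-n})$, so the overlap is at least $\sqrt{(1-2^{-n})/(1+2^{-n})}$. This is at least $1-2^{-n+1}$, after a short check using $\sqrt{1-u}\ge1-u$ with $u=2\cdot2^{-n}/(1+2^{-n})$.

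For~(iii), \Cref{lem:uncompute} gives
\[
\Gamma\ket\Psi=\sum_y c_y\ket{a_y}\ket y,\qquad c_y=\sqrt{|W_{x\to y}|/|W|},\quad \ket{a_y}=\ket{\tau_y(W_{x\to y})}.
\]
The accepted weight of the element register is at least $\bra{\psi_k}\rho\ket{\psi_k}$, and this in turn is at least $|(\bra F\otimes\bra{\psi_k})\Gamma\ket\Psi|^2$, because projecting first onto $\ket F$ can only decrease the weight. Now compute
\[
(\bra F\otimes\bra{\psi_k})\Gamma\ket\Psi=\sum_y|\calC|^{-1/2}c_y\braket{F}{a_y},
\]
a sum of nonnegative reals. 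Using $c_y\ge\sqrt{(1-2^{-n})/|\calC|}$ from~(i) and $\braket{F}{a_y}\ge\varepsilon$, the sum is at least $\varepsilon\sqrt{1-2^{-n}}$. Squaring gives $\varepsilon^2(1-2^{-n})$.

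For~(iv), the walk with its path recorded can be written as $\ket\Psi=|W|^{-1/2}\sum_w\ket w\ket{\varphi_x(w)}$. Its element register is therefore the mixture $\rho=|W|^{-1}\sum_w\ket{\varphi_x(w)}\!\bra{\varphi_x(w)}$ of basis states in $\calC$, and no gap assumption is needed. Only the orbit state of $\calC$ contributes, because $\rho$ is supported on $\calC$ and the other orbit states are supported on other orbits. Each basis state $\ket y$ with $y\in\calC$ has $|\braket{\psi_k}{y}|^2=1/|\calC|$, and averaging over $w$ gives exactly $1/|\calC|$.

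The main obstacle is only bookkeeping: getting the constant $1-2^{-n+1}$ in~(ii) from the fibre ratio, which needs the choice $T\ge 2n\ln2/\Delta$ so that the deviation is $2^{-2n}$ and not merely $2^{-n}$.
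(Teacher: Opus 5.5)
Your proposal is correct and follows the paper's proof essentially step for step. The spectral bound with $2^{-2n}\le 2^{-n}/|\calC|$ in~(i), the set $F$ of size $\min_y|W_{x\to y}|$ with relabellings covering it in~(ii), and the overlap with $\ket F\otimes\ket{\psi_{k(x)}}$ in~(iii) all match; in~(iv) you trace out the orthonormal label register directly rather than first grouping by fibres, which is the same computation.
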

\begin{proof}
(i) Each index is uniform among $2|S|$ values, and $|S|$ of them are identity indices, so a uniformly random $w\in W$ performs $T$ steps of the lazy walk of equation~\eqref{eq:markov-generic}.  Hence $|W_{x\to y}|/|W| = (\hat B^T)_{yx}$, the probability that the walk from $x$ ends at $y$.  By the spectral decomposition of $\hat B$ on $\calC$ and the choice of $T$,
\[
    \Bigl|(\hat B^T)_{yx}-\frac1{|\calC|}\Bigr| \;\le\; (1-\Delta)^T \;\le\; e^{-\Delta T} \;\le\; 2^{-2n} \;\le\; \frac{2^{-n}}{|\calC|}\,,
\]
using $|\calC|\le2^n$.  This gives the fibre bounds, and in particular every fibre is non-empty, so $\varphi_x(W)=\calC$.

(ii) Let $F\subseteq W$ have $\min_y|W_{x\to y}|$ elements, and for each $y$ let $\tau_y$ be a relabelling with $\tau_y(W_{x\to y})\supseteq F$, which exists because $|F|\le|W_{x\to y}|$.  Then, by part~(i),
\[
    \braket{F}{\tau_y(W_{x\to y})} \;=\; \frac{|F|}{\sqrt{|F|\,|W_{x\to y}|}} \;=\; \Bigl(\frac{|F|}{|W_{x\to y}|}\Bigr)^{1/2} \;\ge\; \Bigl(\frac{1-2^{-n}}{1+2^{-n}}\Bigr)^{1/2} \;\ge\; 1-2^{-n+1}\,.
\]

(iii) By \Cref{lem:uncompute}, $\Gamma\ket\Psi = \sum_y c_y\ket{\tau_y(W_{x\to y})}\ket y$ with $c_y = (|W_{x\to y}|/|W|)^{1/2}\ge(1-2^{-n})^{1/2}|\calC|^{-1/2}$ by part~(i).  Its overlap with $\ket F\otimes\ket{\psi_{k(x)}}$ is
\[
    \sum_{y\in\calC}c_y\,|\calC|^{-1/2}\,\braket{F}{\tau_y(W_{x\to y})} \;\ge\; (1-2^{-n})^{1/2}\,|\calC|^{-1}\sum_{y\in\calC}\varepsilon \;=\; (1-2^{-n})^{1/2}\,\varepsilon\,.
\]
The element register's accepted weight is at least $\bra{\psi_{k(x)}}\rho\ket{\psi_{k(x)}}$ for its reduced state $\rho$, which is at least the squared overlap, $\varepsilon^2(1-2^{-n})$.

(iv) Grouping by endpoint,
\[
    \ket\Psi \;=\; \sum_{y\in\calC}\bigl((\hat B^T)_{yx}\bigr)^{1/2}\ket{W_{x\to y}}\ket{y}\,,
\]
with the $\ket{W_{x\to y}}$ orthonormal, because the fibres are disjoint.  The reduced state of the element register is therefore the diagonal $\rho=\sum_y(\hat B^T)_{yx}\ket y\!\bra y$, and its accepted weight is
\[
    \bra{\psi_{k(x)}}\rho\ket{\psi_{k(x)}} \;=\; |\calC|^{-1}\sum_{y\in\calC}(\hat B^T)_{yx} \;=\; \frac1{|\calC|}\,,
\]
since $\rho$ is supported on $\calC$ and each column of $\hat B^T$ sums to $1$.
\end{proof}

Part~(iv) is the accepted weight of $\ket x$ alone, so recording the path does not help.  Montgomery and Sharif prove the analogous bound for a superposition over curves entangled with class-group elements~\cite[Thm.~9.7]{MS24}.  Parts~(ii) and~(iii) show that the obstacle is computational.  Relabellings that nearly meet the label condition exist, but no polynomial-time family of them is known, and concatenation supplies none, since the index sequences of a fixed length do not form a group acting on the orbit.  A polynomial-time family with $\varepsilon\ge1/\poly(n)$ would, by part~(iii) and \Cref{lem:amplification}, solve orbit state preparation.

\section{Subexponential forgery of the class-group scheme}\label{app:elliptic-curves}

In the scheme of Montgomery and Sharif, the objects are elliptic curves with square-free Frobenius discriminant.  The class group $W = \mathrm{Cl}(\mathcal O)$ of the common endomorphism ring acts on each isogeny class $\calC$ by $(c,E)\mapsto c\ast E$ with trivial stabilizers.  Each move is the action of a class of small prime norm, and under the generalized Riemann hypothesis these classes generate $W$, so $\calC$ is a single orbit~\cite[\S3.3]{MS24}.  With trivial stabilizers, $\varphi_E : c\mapsto c\ast E$ is a bijection from $W$ onto $\calC$, so its inverse is the only section, and computing it is the group-action discrete logarithm: given $E$ and $c\ast E$, find $c$.  \Cref{thm:pf-necessary} does not apply as stated, because the action of an arbitrary class is not known to be computable in polynomial time.  The best known algorithms take subexponential time~\cite{CJS14}, \cite[\S1.5]{Zhandry24}.  Montgomery and Sharif describe the construction of \Cref{prop:coset-osp} for their scheme, and note that solving the discrete logarithm would forge it~\cite[\S2.2, \S9.2]{MS24}.  The next proposition makes this explicit with the best known algorithms.

\begin{proof} [Proof of \Cref{prop:ms-forgery}]
    Measure the banknote, obtaining a curve $E$ in its isogeny class $\calC$, and let $W = \mathrm{Cl}(\mathcal O)$ and $\varphi_E(c) := c\ast E$ as above.  Since $\calC$ is a single orbit and $I$ separates orbits, $\ket{\$_p} = \ket{\psi_{k(E)}}$.  Under the generalized Riemann hypothesis, the subroutines of the chain in the proof of \Cref{prop:coset-osp} are available, each run coherently and with error reduced to $2^{-\Omega(n)}$ on every input:
\begin{itemize}[leftmargin=*,nosep]
\item the action $\varphi_E(c) = c\ast E$, classically in time $2^{o(n)}$~\cite{CJS14};
\item the section $u_E(E') = $ the unique $c$ with $c\ast E = E'$, quantumly in time $2^{o(n)}$~\cite{CJS14};
\item the preparation of $\ket W$, in quantum polynomial time~\cite[\S1.5]{Zhandry24}.
\end{itemize}
Since stabilizers are trivial, $\mathrm{Stab}(E) = \{1_W\}$, and the chain maps
\[
    \ket W\ket0\ket0 \;\longmapsto\; \ket{1_W}\otimes\ket{\psi_{k(E)}}\otimes\ket0 \;=\; \ket{1_W}\otimes\ket{\$_p}\otimes\ket0\,,
\]
up to $2^{-\Omega(n)}$, because it makes $O(1)$ calls to the subroutines.  It runs in time $2^{o(n)}$.  Running it twice gives registers $\rho_1,\rho_2$, each within trace distance $2^{-\Omega(n)}$ of $\ket{\$_p}$.  Each therefore has accepted weight at least $1-2^{-\Omega(n)}$ and passes with at least that probability (\Cref{lem:spectral-overlap}), so both pass with probability at least $(1-2^{-\Omega(n)})^2 = 1-2^{-\Omega(n)}$.
\end{proof}

\section{Proof of \texorpdfstring{\Cref{lem:amplification}}{the amplification lemma}}\label{app:amplification}

The proof is fixed-point amplitude amplification between two rotations: one about the state that $D$ outputs, and one about the states whose element register holds the orbit state, which the coherent verifier implements up to a small error.

\begin{proof}[Proof of \Cref{lem:amplification}]
Write $\ket\Omega:=D\ket x\ket{0^m}$, write $\openidws$ for the identity on the ancilla register, and let $\Pi':=\ket{\psi_k}\!\bra{\psi_k}\otimes\openidws$ and $\mathcal V:=\mathrm{span}\{\ket\Omega,\Pi'\ket\Omega\}$.  By hypothesis~(a), $\aw_p(\sigma)=\bra{\psi_k}\sigma\ket{\psi_k}$, so~(b) says $\|\Pi'\ket\Omega\|^2\ge\eta$.

\emph{The coherent verifier.}  Let $r' := \lceil\Delta_\ast^{-1}\ln(1/\xi)\rceil$ and $K := \hat B^{r'}\Pi_{P_p}$.  The coherent verifier $V_{\mathrm{test}}$ computes $I$ into a fresh register and runs $r'$ rounds of the mixing test, each on a fresh copy of the move register, without measuring.  It sets a flag qubit if $I\ne p$ or if a round rejects.  On an input $\ket\phi$ with fresh ancillas it therefore acts as
\[
    V_{\mathrm{test}}\ket\phi\ket0 \;=\; \ket{a_{\mathrm u}}\otimes K\ket\phi + \ket{f_\phi}\,,
\]
where $\ket{a_{\mathrm u}}$ is the fixed ancilla state of the unflagged branch and $\ket{f_\phi}$ lies in the flagged subspace.  Since $\Delta\ge\Delta_\ast$,
\[
    \|K-\Pi_1^{(p)}\| \;\le\; (1-\Delta)^{r'} \;\le\; e^{-\Delta_\ast r'} \;\le\; \xi\,,
\]
and $K^\dagger K = \Pi_{P_p}\hat B^{2r'}\Pi_{P_p}$ has eigenvalue $1$ on the range of $\Pi_1^{(p)}$ and eigenvalues at most $\xi^2$ elsewhere.

\emph{Rotation about the target.}  Let $\Phi_\alpha$ apply the phase $e^{i\alpha}$ on the unflagged branch.  Then
\[
    V_{\mathrm{test}}^\dagger\,\Phi_\alpha\,V_{\mathrm{test}}\ket\phi\ket0 \;=\; \ket\phi\ket0 - (1-e^{i\alpha})\,V_{\mathrm{test}}^\dagger\bigl(\ket{a_{\mathrm u}}\otimes K\ket\phi\bigr)\,.
\]
The component of $V_{\mathrm{test}}^\dagger(\ket{a_{\mathrm u}}\otimes K\ket\phi)$ with fresh ancillas is $K^\dagger K\ket\phi\otimes\ket0$, since $\bra\chi\bra0V_{\mathrm{test}}^\dagger(\ket{a_{\mathrm u}}\otimes K\ket\phi) = \bra\chi K^\dagger K\ket\phi$ for every $\ket\chi$.  The remaining component has squared norm $\|K\ket\phi\|^2-\|K^\dagger K\ket\phi\|^2 = \bra\phi(K^\dagger K-(K^\dagger K)^2)\ket\phi \le \xi^2$.  Since also $\|K^\dagger K-\Pi_1^{(p)}\|\le\xi^2$, for every phase $\alpha$ the operator $V_{\mathrm{test}}^\dagger\Phi_\alpha V_{\mathrm{test}}$ is within $2(\xi+\xi^2)=O(\xi)$ in operator norm, on inputs with fresh ancillas, of the rotation $\openone-(1-e^{i\alpha})(\Pi_1^{(p)}\otimes\openidws)$. If $I$ is injective on orbits, $\Pi_1^{(p)}=\ket{\psi_k}\!\bra{\psi_k}$.  Otherwise, $\Pi_1^{(p)}$ restricts to $\ket{\psi_k}\!\bra{\psi_k}$ on the block of $\calC_k$ (\Cref{lem:eigenspace}), which contains $\mathcal V$ by hypothesis~(a). Either way the rotation acts on $\mathcal V$ as the rotation about the range of $\Pi'$.

\emph{Rotation about $\ket\Omega$.}  For every phase $\alpha$,
\[
    R_\Omega(\alpha) \;:=\; \openone-(1-e^{i\alpha})\ket\Omega\!\bra\Omega \;=\; D\bigl(\openone-(1-e^{i\alpha})\ket{x,0^m}\!\bra{x,0^m}\bigr)D^\dagger\,,
\]
which uses $D$, $D^\dagger$ and a copy of $x$.  It preserves $\mathcal V$, since $\ket\Omega\in\mathcal V$.

\emph{Amplification.}  Write $R_\tau(\beta) := \openone-(1-e^{i\beta})\Pi'$ for the rotation about the target, and $\ket\tau := \Pi'\ket\Omega/\|\Pi'\ket\Omega\|$.  Since $\|\Pi'\ket\Omega\|^2\ge\eta$, fixed-point amplitude amplification~\cite{YLC14} gives phases $\alpha_1,\beta_1,\dots,\alpha_N,\beta_N$, depending only on $\delta$ and $\eta$, with $N = O(\eta^{-1/2}\log(1/\delta))$ and
\[
    \bigl\|R_\tau(\beta_N)R_\Omega(\alpha_N)\cdots R_\tau(\beta_1)R_\Omega(\alpha_1)\ket\Omega - e^{i\theta}\ket\tau\bigr\| \;\le\; \delta
\]
for some phase $\theta$.  Every factor preserves $\mathcal V$, because $\Pi'\mathcal V\subseteq\mathcal V$.  The circuit $\mathrm{Amp}$ applies this product with each $R_\tau(\beta_j)$ replaced by $V_{\mathrm{test}}^\dagger\Phi_{\beta_j}V_{\mathrm{test}}$, which by the previous step is within $O(\xi)$ of $R_\tau(\beta_j)$ on $\mathcal V$ with fresh ancillas.  By the triangle inequality, with $\xi := \delta\eta^{1/2}/\log(1/\delta)$,
\begin{align*}
    \bigl\|\mathrm{Amp}\ket x\ket0 - e^{i\theta}\ket\tau\bigr\| &\;\le\; \delta + N\cdot O(\xi) \;=\; \delta + O\bigl(\eta^{-1/2}\log(1/\delta)\bigr)\cdot\frac{\delta\eta^{1/2}}{\log(1/\delta)} \;=\; O(\delta)\,,\\
    r' &\;=\; \bigl\lceil\Delta_\ast^{-1}\ln(1/\xi)\bigr\rceil \;=\; O\bigl(\Delta_\ast^{-1}\log(1/\delta\eta)\bigr)\,.
\end{align*}
The normalized target is
\[
    \frac{\Pi'\ket\Omega}{\|\Pi'\ket\Omega\|} \;=\; \ket{\psi_k}\otimes\ket{a_x}\,,\qquad \ket{a_x} \;:=\; \frac{(\bra{\psi_k}\otimes\openidws)\ket\Omega}{\|(\bra{\psi_k}\otimes\openidws)\ket\Omega\|}\,.
\]
The phase $e^{i\theta}$ is absorbed into $\ket{a_x}$.  Rescale $\delta$ by a constant. The copies of $\mathsf A$ are ancillas of the final circuit, which is uniform and depends on $x$ only through its input.
\end{proof}